\newcommand{\cA}{\mathcal{A}}
\newcommand{\cB}{\mathcal{B}}
\newcommand{\cC}{\mathcal{C}}
\newcommand{\cE}{\mathcal{E}}
\newcommand{\cH}{\mathcal{H}}
\newcommand{\cL}{\mathcal{L}}
\newcommand{\cN}{\mathcal{N}}
\newcommand{\cR}{\mathcal{R}}
\newcommand{\cS}{\mathcal{S}}
\newcommand{\cT}{\mathcal{T}}
\newcommand{\bC}{\mathbb{C}}
\newcommand{\bN}{\mathbb{N}}
\newcommand{\Id}{\mathds{1}}
\newcommand{\y}{\mathbf{y}}
\newcommand{\tens}{\otimes}
\newcommand{\Tr}[1]{\mathrm{Tr}\left[#1\right]}
\newcommand{\ket}[1]{\lvert#1\rangle}
\newcommand{\bra}[1]{\langle#1\rvert}
\newcommand{\braket}[2]{\langle#1\rvert#2\rangle}
\newcommand{\ketbra}[2]{\lvert#1\rangle\langle#2\rvert}
\newcommand{\abs}[1]{\left\lvert#1\right\vert}
\newcommand{\norm}[2]{\left\|#1\right\|_{#2}}
\newcommand{\ws}{\hspace{0.5em}}
\newcommand{\ci}{\mathrm{i}}
\def\>{\rangle} 
\def\<{\langle}  
\def\norm{\@ifnextchar[{\@with}{\@without}}
\def\@with[#1]#2{\left\|#2\right\|_{#1}}
\def\@without#1{\left\|#1\right\|}
\newtheorem{thm}{Theorem}
\newtheorem{cor}{Corollary}
\newtheorem{lem}{Lemma}
\newenvironment{customlem}[1]
{\innercustomlem}
{\endinnercustomlem}
\newenvironment{customcor}[1]
{\innercustomcor}
{\endinnercustomcor}
\begin{document}
\title{Optimal Provable Robustness of Quantum Classification via Quantum Hypothesis Testing}

\author{Maurice Weber} 
\affiliation{Department of Computer Science, ETH Zürich, Universitätstrasse 6, 8092 Zürich, Switzerland} 

\author{Nana Liu}
\affiliation{Institute of Natural Sciences, Shanghai Jiao Tong University, Shanghai 200240, China}
\affiliation{Ministry of Education, Key Laboratory in Scientific and Engineering Computing, Shanghai Jiao Tong University, Shanghai 200240, China}
\affiliation{University of Michigan-Shanghai Jiao Tong University Joint Institute, Shanghai 200240, China}

\author{Bo Li}
\affiliation{Department of Computer Science, University of Illinois, Urbana, Illinois 61801, USA}

\author{Ce Zhang}
\email{ce.zhang@inf.ethz.ch}
\affiliation{Department of Computer Science, ETH Zürich, Universitätstrasse 6, 8092 Zürich, Switzerland}

\author{Zhikuan Zhao}
\email{zhikuan.zhao@inf.ethz.ch}
\affiliation{Department of Computer Science, ETH Zürich, Universitätstrasse 6, 8092 Zürich, Switzerland}

\begin{abstract}

Quantum machine learning models have the potential to offer speedups and better predictive accuracy compared to their classical counterparts. 
However, these quantum algorithms, like their classical counterparts, have been shown to also be vulnerable to input perturbations, in particular for classification problems. 
These can arise either from noisy implementations or, as a worst-case type of noise, adversarial attacks. 
In order to develop defence mechanisms and to better understand the reliability of these algorithms, it is crucial to understand their robustness properties in presence of natural noise sources or adversarial manipulation. 
From the observation that measurements involved in quantum classification algorithms are naturally probabilistic, we uncover and formalize a fundamental link between binary quantum hypothesis testing and provably robust quantum classification. 
This link leads to a tight robustness condition which puts constraints on the amount of noise a classifier can tolerate, independent of whether the noise source is natural or adversarial. Based on this result, we develop practical protocols to optimally certify robustness.
Finally, since this is a robustness condition against worst-case types of noise, our result naturally extends to scenarios where the noise source is known. Thus, we also provide a framework to study the reliability of quantum classification protocols beyond the adversarial, worst-case noise scenarios.
\end{abstract}

\maketitle

\section{Introduction}
    The flourishing interplay between quantum computation and machine learning has inspired a wealth of algorithmic invention in recent years \cite{dunjko2018machine, biamonte2017quantum, schuld2015}. Among the most promising proposals are quantum classification algorithms which aspire to leverage the exponentially large Hilbert space uniquely accessible to quantum algorithms to either drastically speed up computational bottlenecks in classical protocols \cite{zhao2019bayesian, cong2019quantum, rebentrost2014quantum, farhi2018classification}, or to construct quantum-enhanced kernels that are practically prohibitive to compute classically~\cite{havlicek2019, schuld2019quantum, lloyd2020quantum}.
    Although these quantum classifiers are recognised as having the potential to offer quantum speedup or superior predictive accuracy, they are shown to be just as vulnerable to input perturbations as their classical counter-parts \cite{lu2019,liu2020, szegedy2014,goodfellow2015}. These perturbations can occur either due to imperfect implementation which is prevalent in the noisy, intermediate-scale quantum (NISQ) era~\cite{preskill2018quantum}, or, more menacingly, due to adversarial attacks where a malicious party aims to fool a classifier by carefully crafting practically undetectable noise patterns which trick a model into misclassifying a given input.
   
    In order to address these short-comings in reliability and security of quantum machine learning, several protocols in the setting of adversarial quantum learning, i.e. learning under the worst-case noise scenario, have been developed~\cite{liu2020,lu2019, wiebe2018,du2020quantum,guan2020robustness}. More recently, data encoding schemes are linked to robustness properties of classifiers with respect to different noise models in Ref.~\cite{larose2020robust}. The connection between provable robustness and quantum differential privacy is investigated in Ref.~\cite{du2020quantum}, where naturally occurring noise in quantum systems is leveraged to increase robustness against adversaries. A further step towards robustness guarantees is made in Ref.~\cite{guan2020robustness} where a bound is derived from elementary properties of the trace distance. These advances, though having accumulated considerable momentum toward a coherent strategy for protecting quantum machine learning algorithms against adversarial input perturbations, have not yet provided an adequate framework for deriving a tight robustness condition for any given quantum classifier. In other words, the known robustness conditions are sufficient but not, in general, necessary. 
   
    Thus, a major open problem remains which is significant on both the conceptual and practical levels. Conceptually, adversarial robustness, being an intrinsic property of the classification algorithms under consideration, is only accurately quantified by a tight bound, the absence of which renders the direct robustness comparison between different quantum classifiers implausible. Practically, an optimal robustness certification protocol, in the sense of being capable of faithfully reporting the noise tolerance and resilience of a quantum algorithm can only arise from a robustness condition which is both sufficient and necessary. Here we set out to confront both aspects of this open problem by generalising the state-of-the-art classical wisdom on certifiable adversarial robustness into the quantum realm.
    
    The pressing demand for robustness against adversarial attacks is arguably even more self-evident under the classical setting in the present era of wide-spread industrial adaptation of machine learning~\cite{szegedy2014,goodfellow2015,eykholt2018}. Many heuristic defence strategies have been proposed but have subsequently been shown to fail against suitably powerful adversaries~\cite{carlini2017,athalye2018}. 
    In response, provable defence mechanisms that provide robustness guarantees have been developed. One line of work, interval bound propagation, uses interval arithmetic~\cite{gowal2018,mirman2018} to certify neural networks.
    Another approach makes use of randomizing inputs and adopts techniques from differential privacy~\cite{lecuyer2019} and, to our particular interest, statistical hypothesis testing~\cite{cohen2019certified,weber2020rab} which has a natural counter-part in the quantum domain. Since the pioneering works by Helstrom~\cite{helstrom67} and Holevo~\cite{holevo1973}, the task of quantum hypothesis testing (QHT) has been well-studied and regarded as one of the foundational tasks in quantum information, with profound linkages with topics ranging from quantum communication~\cite{wang2012,matthews2014}, estimation theory~\cite{helstrom76}, to quantum illumination~\cite{wilde2017,lloyd2008illumination}.
    
    In this work, we lay bare a fundamental connection between quantum hypothesis testing and the robustness of quantum classifiers against unknown noise sources.
    The methods of QHT enable us to derive a robustness condition which, in contrast to other methods, is both \emph{sufficient and necessary} and puts constraints on the amount of noise that a classifier can tolerate. 
    Due to tightness, these constraints allow for an accurate description of noise-tolerance. Absence of tightness, on the other hand, would underestimate the true degree of such noise tolerance.
    Based on these theoretical findings, we provide (1) an optimal robustness certification protocol to assess the degree of tolerance against input perturbations (independent of whether these occur due to natural or adversarial noise), (2) a protocol to verify whether classifying a perturbed (noisy) input has had the same outcome as classifying the clean (noiseless) input, without requiring access to the latter, and (3) tight robustness conditions on parameters for amplitude and phase damping noise.
    In addition, we will also consider randomizing quantum inputs, what can be seen as
    a quantum generalisation to randomized smoothing, a technique that has recently been applied to certify the robustness of classical machine learning models~\cite{cohen2019certified}.
    The conceptual foundation of our approach is rooted in the inherently probabilistic nature of quantum classifiers. Intuitively,
    while QHT is concerned with the question of how to optimally discriminate between two given states, certifying adversarial robustness aims at giving a guarantee for which two states can \emph{not} be discriminated. These two seemingly contrasting notions go hand in hand and, as we will see, give rise to optimal robustness conditions fully expressible in the language of QHT.  Furthermore, while we focus on robustness in a worst-case scenario, our results naturally cover narrower classes of \emph{known} noise sources and can potentially be put in context with other areas such as error mitigation and error tolerance in the NISQ era.
    Finally, while we treat robustness in the context of quantum machine learning, our results in principle do not require the decision function to be learned from data. 
    Rather, our results naturally cover a larger class of quantum algorithms whose outcomes are determined by the most likely measurement outcome. Our robustness conditions on quantum states are then simply conditions under which the given measurement outcome remains the most likely outcome.

    The remainder of this paper is organized as follows: 
    We first introduce the notations and terminologies and review results from QHT essential for our purpose. 
    We then proceed to formally define quantum classifiers and the assumptions on the threat model. 
    In `Results', we present our main findings on provable robustness from quantum hypothesis testing.
    Additionally, these results are demonstrated and visualised with a simple toy example for which we also consider the randomized input setting and analyse specifically randomization with depolarization channel. In `Discussion' we conclude with a higher-level view on our findings and layout several related open problems with an outlook for future research.
    Finally, in `Methods', we give proofs for central results: the robustness condition in terms of type-II error probabilities of QHT, the tightness of this result and, finally, the method used to derive robustness conditions in terms of {fidelity}.

\begin{table*}[t]
    \centering
    \resizebox{\textwidth}{!}{
    \begin{threeparttable}
	\caption{Summary of Results. In this work, we establish a fundamental connection between QHT and the robustness of quantum classification algorithms against adversarial input perturbations. This connection naturally leads to a robustness condition {formulated as a semi-definite program} in terms of optimal type-II error probabilities of distinguishing between benign and adversarial states (QHT condition: Theorem \ref{thm:main}). 
    Under certain practical assumptions about the class probabilities on benign input, we prove that the QHT condition is optimal (Theorem \ref{thm:tightness}). 
    We then show that the QHT condition implies closed form solutions in terms of explicit robustness bound on the fidelity, Bures metric and trace distance.
    We numerically compare an alternative robustness bound directly implied by the definition of trace distance and application of H\"older duality (Lemma \ref{lem: elementary bound} \& Ref. \cite{guan2020robustness}) with the explicit forms of the robustness bounds arising from QHT (FIG. \ref{fig:bounds_contour}).
    Based on these technical findings, we provide a practical protocol to asses the resilience of a classifier against adversarial perturbations, a protocol to certify whether a given noisy input has been classified the same as the noiseless input, without requiring access to the latter, and we derive robustness bounds on noise parameters in amplitude and phase damping.
    Finally, we instantiate our results with a single-qubit pure state example both in the noiseless and depolarization smoothing input scenarios, which allows for numerical comparison of all the known robustness bounds, arising from H\"older duality, differential privacy~\cite{du2020quantum} and QHT (FIG.~\ref{fig:depolarization_bounds}).
    Tight robustness conditions are indicated in \textbf{bold font} in the table.}
	\label{table: results}
		\centering
        \begin{tabular}{ c c c c c c c c }
			\toprule
		    & \multirow{2}[1]{*}{Input States} & \multirow{2}[1]{*}{\makecell{Quantum\\Differential Privacy}} & \multirow{2}[1]{*}{\makecell{Hölder\\Duality}} & \multicolumn{4}{c}{Quantum Hypothesis Testing}\\
			\cmidrule(lr){5-8}
			\multicolumn{4}{c}{} & SDP\tnote{a} & Fidelity & Bures Metric & Trace Distance\\
			\toprule
			\multirow{2}[2]{*}{No Smoothing} & Pure & \multirow{2}[2]{*}{---} &  \multirow{2}[2]{*}{Lemma~\ref{lem: elementary bound}\tnote{b}\hspace{0.5em}} & \multirow{2}[2]{*}{\bf Theorem~\ref{thm:main}} & \multirow{2}[2]{*}{\bf Theorem~\ref{thm:fidelity_bound}} & \multirow{2}[2]{*}{\bf Eq.~(\ref{eq:bures_bound_mixed})} & \bf \bf Eq.~(\ref{eq:trace_bound_pure})\\
			\cmidrule(lr){2-2}\cmidrule(lr){8-8}
			& Mixed & & & & & & Lemma~\ref{lem: elementary bound}\\
			\midrule
			\multirow{2}[2]{*}{\makecell{Depolarization\\Smoothing}} & Pure & \multirow{2}[2]{*}{Lemma~2 in Ref.~\cite{du2020quantum}} & \multirow{2}[2]{*}{Eq.~(\ref{eq:holder_bound_noisy})} & \multirow{2}[2]{*}{\bf Theorem~\ref{thm:main}} & \multirow{2}[2]{*}{---} & \multirow{2}[2]{*}{---} & {\bf Eq.~(\ref{eq:depolarization_bound_main}) (single-qubit)}\\
			\cmidrule(lr){2-2}\cmidrule(lr){8-8}
			& Mixed & & & & & & ---\\
			\bottomrule
	    \end{tabular}
	    \begin{tablenotes}
	        \item[a] Robustness condition expressed in terms of type-II error probabilities $\beta^*$ associated with an optimal quantum hypothesis test.
	        \item[b] Independently discovered in Ref.~\cite{guan2020robustness}.
	    \end{tablenotes}
	    \end{threeparttable}}
\end{table*}

\section{Results}
\label{sec:main_results}

\subsection{Preliminaries}

{\it Notation.} Let $\cH$ be a Hilbert space of finite dimension $d:=\mathrm{dim}(\cH) < \infty$ corresponding to the quantum system of interest.
The space of linear operators acting on $\cH$ is denoted by $\cL(\cH)$ and the identity operator on $\cH$ is written as $\Id$. If not clear from context, the dimensionality is explicitly indicated through the notation $\Id_{d}$.
The set of density operators (i.e.~positive semi-definite trace-one Hermitian matrices) acting on $\cH$, is denoted by $\cS(\cH)$ and elements of $\cS(\cH)$ are written in lowercase Greek letters. The Dirac notation will be adopted whereby Hilbert space vectors are written as $\ket{\psi}$ and their dual as $\bra{\psi}$.
We will use the terminology density operator and quantum state interchangeably.
For two Hermitian operators $A,\,B\in\cL(\cH)$ we write $A > B$ ($A \geq B$) if $A-B$ is positive (semi-)definite and $A < B$ ($A \leq B$) if $A-B$ is negative (semi-)definite. 
For a Hermitian operator $A\in\cL(\cH)$. with spectral decomposition $A = \sum_{i}\lambda_i P_i$, we write $\{A > 0\}:=\sum_{i\colon\lambda_i > 0}P_i$ (and analogously $\{A < 0\}:=\sum_{i\colon\lambda_i < 0}P_i$) for the projection onto the eigenspace of $A$ associated with positive (negative) eigenvalues. 
The Hermitian transpose of an operator $A$ is written as $A^\dagger$ and the complex conjugate of a complex number $z\in\bC$ as $\Bar{z}$.
For two density operators $\rho$ and $\sigma$, the trace distance is defined as $T(\rho,\,\sigma):=\frac{1}{2}\norm[1]{\rho-\sigma}$ where $\norm[1]{\cdot}$ is the Schatten 1-norm defined on $\cL(\cH)$ and given by $\norm[1]{A}:=\Tr{\abs{A}}$ with $\abs{A}=\sqrt{A^\dagger A}$. 
The {Uhlmann} fidelity between density operators $\rho$ and $\sigma$ is denoted by $F$ and defined as $F(\rho,\,\sigma) := \Tr{\sqrt{\sqrt{\rho}\sigma\sqrt{\rho}}}^2$ which for pure states reduces to the squared overlap $F(\ket{\psi},\,\ket{\phi}) = \abs{\braket{\psi}{\phi}}^2$. 
Finally, the Bures metric is denoted by $d_{\mathrm{B}}$ and is closely related to the Uhlmann fidelity via $d_{\mathrm{B}}(\rho,\,\sigma) = [2(1-\sqrt{F(\rho,\,\sigma)})]^\frac{1}{2}$.\\

{\it Quantum Hypothesis Testing.} Typically, QHT is formulated in terms of state discrimination where several quantum states have to be discriminated through a measurement~\cite{helstrom67}.
In binary quantum hypothesis testing, the aim is to decide whether a given unknown quantum system is in one of two states corresponding to the null and alternative hypothesis.
Any such test is represented by an operator $0\leq M \leq \Id_{d}$, which corresponds to rejecting the null in favor of the alternative.
The two central quantities of interest are the probabilities of making a type-I or type-II error. The former corresponds to rejecting the null when it is true, while the latter occurs if the null is accepted when the alternative is true. Specifically, for density operators $\sigma\in\cS(\cH)$ and $\rho\in\cS(\cH)$ describing the null and alternative hypothesis, the type-I error probability is defined as $\alpha(M)$ 
and the 
type-II error probability as $\beta(M)$, so that
\begin{align}
    \alpha(M;\,\sigma) &:= \Tr{\sigma M}\ws\ws &\text{(type-I error)}\\
    \beta(M;\,\rho) &:= \Tr{\rho(\Id - M)} &\text{(type-II error)}
\end{align}
In the Bayesian setting, the hypotheses $\sigma$ and $\rho$ occur with some prior probabilities $\pi_0$ and $\pi_1$ and  are concerned with finding a test which minimizes the total error probability.
A Bayes optimal test $M$ is one that minimizes the posterior probability $\pi_0\cdot\alpha(M) + \pi_1\cdot\beta(M)$.

In this paper, we consider \emph{asymmetric} hypothesis testing {(Neyman-Pearson approach)}~\cite{helstrom76}, 
where the two types of errors are associated with a different cost. Given a maximal allowed probability for the type I error, the goal is to minimize the probability of the type II error. Specifically, one aims to solve the {semidefinite program (SDP)}
\begin{equation}
    \label{eq:type2_error_sdp}
    \begin{aligned}
        \beta^*_{\alpha_0}(\sigma,\,\rho) := \mathrm{minimize}\ws&\beta(M;\,\rho)\\
        \mathrm{s.t.}\ws&\alpha(M;\,\sigma) \leq \alpha_0,\\
        \ws & 0 \leq M \leq \Id_{d}
    \end{aligned}
\end{equation}
Optimal tests can be expressed in terms of projections onto the eigenspaces of the operator $\rho - t\sigma$ where $t$ is a non-negative number. More specifically, for $t\geq 0$ let $P_{t,+}:=\{\rho-t\sigma>0\}$, $P_{t,-}:=\{\rho-t\sigma<0\}$ and $P_{t,0}:=\Id - P_{t,+} - P_{t,-}$ be the projections onto the eigenspaces of $\rho -t\sigma$ associated with positive, negative and zero eigenvalues.
The quantum analogue to the Neyman-Pearson Lemma \cite{neyman1933} shows optimality of operators of the form
\begin{equation}
    \label{eq:np_operators}
    M_t := P_{t,+} + X_t,\ws 0 \leq X_t \leq P_{t,0}.
\end{equation}
The choice of the scalar $t\geq0$ and the operator $X_t$ is such that the preassigned type-I error probability $\alpha_0$ is attained.
An explicit construction for these operators is based on the inequalities
\begin{equation}
    \alpha(P_{\tau(\alpha_0),+}) \leq \alpha_0 \leq \alpha(P_{\tau(\alpha_0),+} + P_{\tau(\alpha_0),0})
\end{equation}
where $\alpha_0\in (0,\,1)$ and $\tau(\alpha_0)$ is the smallest non-negative number such that $\alpha(P_{\tau(\alpha_0),+}) \leq \alpha_0$, i.e. $\tau(\alpha_0) := \inf\{t\geq 0\colon\,\alpha(P_{t,+}) \leq \alpha_0\}$.
These inequalities can be seen from the observation that the function $t \mapsto \alpha(P_{t,+})$ is non-increasing and right-continuous while $t \mapsto \alpha(P_{t,+} + P_{t,0})$ is non-increasing and left-continuous. A detailed proof for this is given in Supplementary Note 1 and 2.
We will henceforth refer to operators of the form~(\ref{eq:np_operators}) as Helstrom operators~\cite{helstrom76}.\\

{\it Quantum classifiers.} We define a $K$-class quantum classifier of states of the quantum system $\cH$, described by density operators, as a map $\cA\colon\cS(\cH)\to\cC$ which maps states $\sigma\in\cS(\cH)$ to class labels $k\in\cC=\{1,\,\ldots,\,K\}$.
{Any such classifier is described by a completely positive and trace-preserving (CPTP) map $\cE$ and a POVM $\{\Pi_{k}\}_{k}$.
Formally, a quantum state $\sigma$ is passed through the quantum channel $\cE$ and then the measurement $\{\Pi_{k}\}_{k}$ is performed. 
Finally, the probability of measuring outcome $k$ is identified with the class probability $\y_{k}(\sigma)$, i.e.}
\begin{equation}
    \label{eq:classifier}
    \sigma \mapsto \y_{k}(\sigma) := \Tr{\Pi_{k}\cE(\sigma)}.
\end{equation}
We treat the positive-operator valued measure (POVM) element $\Pi_{k}$ as a projector $\Pi_{k} = \ket{k}\bra{k}\tens\Id_{d/K}$ which determines whether the output is classified into class $k$. This can be done without loss of generality by Naimark's dilation since $\cE$ is kept arbitrary and potentially involves ancillary qubits and a general POVM element can be expressed as a projector on the larger Hilbert space. The final prediction is given by the most likely class 
\begin{align}
    \cA(\sigma)\equiv\arg\max_{k}\y_{k}(\sigma). 
\end{align}
{Throughout this paper, we refer to $\cA$ as the \emph{classifier} and to $\y$ as the \emph{score function}.}
In the context of quantum machine learning, the input state $\sigma$ can be an encoding of classical data by means of, for example, amplitude encoding or otherwise \cite{larose2020robust, zhao2018smooth}, or inherently quantum input data, while $\cE$ can be realized, for example, by a trained parametrized quantum circuit potentially involving ancillary registers \cite{benedetti2019parameterized}. However, it is worth noting that the above-defined notion of quantum classifier more generally describes the procedure of a broader class of quantum algorithms whose output is obtained by repeated sampling of measurement outcomes.\\

\begin{figure}[t]
    \centering
    \includegraphics[width=\linewidth]{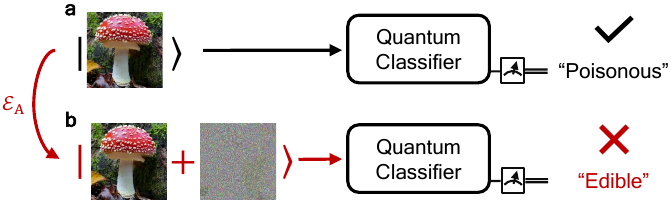}
    \caption{{Adversarial attack. \textbf{a} A quantum classifier correctly classifies the (toxic) mushroom as ``poisonous''. \textbf{b} An adversary perturbs the image to fool the classifier into believing that the mushroom is ``edible''.}}
    \label{fig:mushroom}
\end{figure}

{\it Quantum adversarial robustness.} Adversarial examples are attacks on classification models where an adversary aims to induce a misclassification using typically imperceptible modifications of a benign input example. 
Specifically, given a classifier $\cA$ and a benign input state $\sigma$, an adversary can craft a small perturbation $\sigma \to \rho$ which results in a misclassification, i.e. $\cA(\rho) \neq \cA(\sigma)$. 
An illustration for this threat scenario is given in FIG.~\ref{fig:mushroom}.
In this paper, we seek a worst-case robustness guarantee against \emph{any} possible attack: as long as $\rho$ does not differ from $\sigma$ by more than a certain amount, then it is guaranteed that $\cA(\sigma) = \cA(\rho)$ independently of how the adversarial state $\rho$ has been crafted. 
Formally, suppose the quantum classifier $\cA$ takes as input a \emph{benign} quantum state $\sigma\in\cS(\cH)$ and produces a measurement outcome denoted by the class $k\in\cC$ with probability $\y_{k}(\sigma) = \Tr{\Pi_{k}\cE(\sigma)}$. 
Recall that the prediction of $\cA$ is taken to be the most likely class $k_{\mathrm{A}} = \arg\max_k\y_{k}(\sigma)$. 
An adversary aims to alter the output probability distribution so as to change the most likely class by applying an arbitrary quantum operation $\cE_{\mathrm{A}}\colon\cS(\cH)\to\cS(\cH)$ to $\sigma$ resulting in the \emph{adversarial state} $\rho = \cE_{\mathrm{A}}(\sigma)$.
Finally, we say that the classifier $\y$ is provably robust around $\sigma$ with respect to the robustness condition $\cR$, if for any $\rho$ which satisfies $\cR$, it is guaranteed that
$\cA(\rho) = \cA(\sigma)$.

In the following, we will derive a robustness condition for quantum classifiers with the QHT formalism, which provides a provable guarantee for the outcome of a computation being unaffected by the worst-case input noise or perturbation under a given set of constraints. 
In the regime where the most likely class is measured with probability lower bounded by $p_{\mathrm{A}} > 1/2$ and the runner up class is less likely than $p_{\mathrm{B}} = 1-p_{\mathrm{A}}$, we prove tightness of the robustness bound, hence demonstrating that the QHT condition is at least partially optimal.
The QHT robustness condition, in its full generality, has an SDP formulation 
in terms of the {optimal} type-II error probabilities.
We then simplify this condition and derive closed form solutions in terms of Uhlmann fidelity, Bures metric and trace distance between benign and adversarial inputs.
The closed form solutions in terms of fidelity and Bures metric are shown to be sufficient and necessary for general states and in the same regime where the SDP formulation is proven to be tight. In the case of trace distance, this can be claimed for pure states, while the bound for mixed states occurs to be weaker.
These results stemming from QHT considerations are then contrasted and compared with an alternative approach which directly applies H\"older duality to trace distances to obtain a sufficient robustness condition.
The different robustness bounds and robustness conditions are summarized in Table~\ref{table: results}.

\subsection{Robustness condition from quantum hypothesis testing}
Recall that quantum hypothesis testing is concerned with the question of finding measurements that optimally discriminate between two states. A measurement is said to be optimal if it minimizes the probabilities of identifying the quantum system to be in the state $\sigma$, corresponding to the null hypothesis, when in fact it is in the alternative state $\rho$, and vice versa. When considering provable robustness, on the other hand, one aims to find a neighbourhood around a benign state $\sigma$ where the class which is most likely to be measured is constant or, expressed differently, where the classifier can not discriminate between states. It becomes thus clear that quantum hypothesis testing and classification robustness aim to achieve a similar goal, although viewed from different angles. {Indeed, as it turns out, QHT determines the robust region around $\sigma$ to be the set of states (i.e. alternative hypotheses) for which the optimal type-II error probability $\beta^*$ is larger than $1/2$.}

To establish this connection more formally, we identify the benign state with the null hypothesis $\sigma$ and the adversarial state with the alternative $\rho$.
We note that, in the Heisenberg picture, we can identify a the score function $\y$ of a classifier $\cA$ with a POVM $\{\Pi_{k}\}_{k}$.
For $k_{\mathrm{A}}=\cA(\sigma)$, the operator $\Id-\Pi_{k_{\mathrm{A}}}$ (and thus the the classifier $\cA$) can be viewed as a hypothesis test discriminating between $\sigma$ and $\rho$.
{Notice that, for $p_{\mathrm{A}}\in[0,\,1]$ with $\y_{k_{\mathrm{A}}}(\sigma) = \Tr{\Pi_{k_{\mathrm{A}}}\sigma} \geq p_{\mathrm{A}}$, the operator $\Id_{d} - \Pi_{k_{\mathrm{A}}}$ is feasible for the SDP $\beta^*_{1-p_{\mathrm{A}}}(\sigma,\,\rho)$ in~\eqref{eq:type2_error_sdp} and hence
\begin{equation}
    \y_{k_{\mathrm{A}}}(\rho) = \beta(\Id_{d} - \Pi_{k_{\mathrm{A}}};\,\rho) \geq \beta^*_{1-p_{\mathrm{A}}}(\sigma,\,\rho).
\end{equation}
Thus, it is guaranteed that $k_{\mathrm{A}} = \cA(\rho)$ for any $\rho$ with $\beta^*_{1-p_{\mathrm{A}}}(\sigma,\,\rho) > 1/2$.}
The following theorem makes this reasoning concise and extends to the setting where the probability of measuring the second most likely class is upper-bounded by $p_{\mathrm{B}}$.
\begin{thm}[QHT robustness bound]
    \label{thm:main}
    Let $\sigma,\,\rho\in\cS(\cH)$ be benign and adversarial quantum states and let $\cA$ be a quantum classifier with score function $\y$. Suppose that for $k_{\mathrm{A}}\in\cC$ and $p_{\mathrm{A}},\,p_{\mathrm{B}}\in[0,\,1]$, the score function $\y$ satisfies
    \begin{align}
        \label{eq:class_probs}
        \y_{k_{\mathrm{A}}}(\sigma) \geq p_{\mathrm{A}} > p_{\mathrm{B}} \geq \max_{k \neq k_{\mathrm{A}}} \y_{k}(\sigma).
    \end{align}
    Then, it is guaranteed that $\cA(\rho) = \cA(\sigma)$ for any $\rho$ with
    \begin{equation}
        \label{eq:robustness_condition}
        \beta^*_{1-p_{\mathrm{A}}}(\sigma,\,\rho)+\beta^*_{p_{\mathrm{B}}}(\sigma,\,\rho) > 1 
    \end{equation}
\end{thm}

To get some more intuition of Theorem \ref{thm:main}, we first note that for $p_{\mathrm{B}} = 1-p_{\mathrm{A}}$, the robustness condition~(\ref{eq:robustness_condition}) simplifies to
\begin{equation}
    \beta^*_{1-p_{\mathrm{A}}}(\sigma,\,\rho) > 1/2
\end{equation}
With this, the relation between quantum hypothesis testing and robustness becomes more evident: if the \emph{optimal} hypothesis test performs poorly when discriminating the two states, then a classifier will predict both states to belong to the same class. In other words, viewing a classifier as a hypothesis test between the benign input $\sigma$ and the adversarial $\rho$, the optimality of the Helstrom operators implies that the classifier $\y$ is a worse discriminator and will also not distinguish the states, or, phrased differently, it is robust. This result formalizes the intuitive connection between quantum hypothesis testing and robustness of quantum classifiers. While the former is concerned with finding operators that are optimal for discriminating two states, the latter is concerned with finding conditions on states for which a classifier does \emph{not} discriminate.\\

{\it Optimality.} The robustness condition \eqref{eq:robustness_condition} from QHT is provably optimal in the regime of $p_{\mathrm{A}}+p_{\mathrm{B}}=1$, which covers binary classifications in full generality and multi-class classification where the most likely class is measured with probability larger than $p_{\mathrm{A}} > \frac{1}{2}$. The robustness condition is tight in the sense that, whenever condition~\eqref{eq:robustness_condition} is violated, then there exists a classifier $\cA^\star$ which is consistent with the class probabilities~\eqref{eq:class_probs} on the benign input but which will classify the adversarial input differently from the benign input.
The following theorem demonstrates this notion of tightness by explicitly constructing the worst-case classifier $\cA^\star$.  
\begin{thm}[Tightness]
    \label{thm:tightness}
    Suppose that $p_{\mathrm{A}} + p_{\mathrm{B}} = 1$. Then, if the adversarial state $\rho$ violates condition \eqref{eq:robustness_condition},
    there exists a quantum classifier $\cA^\star$ that is consistent with the class probabilities~\eqref{eq:class_probs} and for which $\cA^\star(\rho) \neq \cA^\star(\sigma)$.
\end{thm}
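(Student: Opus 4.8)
The plan is to realize the extremal classifier as the optimal Helstrom discriminator itself. First I would use the hypothesis $p_A + p_B = 1$ to collapse the two prescribed type-I error levels onto a single value: $\alpha(M_A^*) = 1 - p_A = p_B = \alpha(M_B^*)$. Since both $M_A^*$ and $M_B^*$ are Helstrom operators attaining this common type-I error, the quantum Neyman--Pearson optimality recalled in the preliminaries forces them to share the same minimal type-II error, $\beta(M_A^*) = \beta(M_B^*)$. Hence the violation of \eqref{eq:robustness_condition} reduces to the single scalar inequality $\beta(M_A^*) \leq 1/2$, and it suffices to work with one operator $M := M_A^*$ satisfying $\alpha(M) = 1 - p_A$ and $\beta(M) \leq 1/2$.

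Next I would build a binary classifier directly from $M$. Choosing a second label $k_C \neq k_A$, I set the two-outcome POVM $\Pi_{k_A} := \Id - M$ and $\Pi_{k_C} := M$ (with the zero operator on any remaining labels when $K > 2$), which is admissible because $0 \leq M \leq \Id$. By Naimark dilation --- the same reduction already invoked to treat the $\Pi_k$ as projectors --- this POVM is realized by a genuine classifier $\y^*_k(\cdot) = \Tr{\Pi_k \cE(\cdot)}$ for a suitable CPTP map $\cE$. Reading off probabilities from $\alpha(M) = \Tr{\sigma M}$ and $\beta(M) = \Tr{\rho(\Id - M)}$ gives, on the benign state, $\y^*_{k_A}(\sigma) = 1 - \alpha(M) = p_A$ and $\y^*_{k_C}(\sigma) = \alpha(M) = p_B$, so $\y^*$ is consistent with \eqref{eq:class_probs}. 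The key point, and the reason a single operator suffices precisely in this regime, is that because $M_A^* = M_B^*$ this one test simultaneously saturates the lower bound on the correct class and the upper bound on the runner-up on the adversarial state: $\y^*_{k_A}(\rho) = \beta(M) = \beta(M_A^*)$ while $\y^*_{k_C}(\rho) = 1 - \beta(M) = 1 - \beta(M_B^*)$.

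Combining these, $\y^*_{k_A}(\rho) = \beta(M) \leq 1/2 \leq 1 - \beta(M) = \y^*_{k_C}(\rho)$, so $k_A \notin \arg\max_k \y^*_k(\rho)$ and the prediction changes, as required. Conceptually, repurposing the optimal discriminator as a classifier yields precisely the most fragile classifier compatible with the benign confidence levels, so no certificate strictly weaker than \eqref{eq:robustness_condition} can hold for every classifier. The step I expect to require the most care is the boundary case $\beta(M) = 1/2$, where the two adversarial probabilities coincide and $\arg\max$ is a priori ambiguous; I would resolve it by the convention --- consistent with the strict inequality defining robustness in Theorem~\ref{thm:main} --- that a tie is decided against $k_A$, so the attack still succeeds. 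A secondary technical point is confirming that a Helstrom operator with type-I error exactly $1-p_A$ exists, which is guaranteed by interpolating between $P_{\tau,+}$ and $P_{\tau,+} + P_{\tau,0}$ through the operator $X_t$ in \eqref{eq:np_operators}.
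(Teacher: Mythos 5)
Your proposal is correct and follows essentially the same route as the paper: the worst-case classifier is realized as the Helstrom discriminator itself via the POVM $\{\Id - M_A^*,\, M_A^*\}$ (zero on the remaining labels), consistency with~(\ref{eq:class_probs}) is read off from $\alpha(M_A^*) = 1-p_A$, and violation of~(\ref{eq:robustness_condition}) flips the prediction on $\rho$. Your explicit handling of the boundary case $\beta(M_A^*) = 1/2$, where the two adversarial probabilities tie, is in fact slightly more careful than the paper's proof, which silently passes to a strict inequality at that step.
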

The main idea of the proof relies on the explicit construction of a ``worst-case'' classifier with Helstrom operators and which classifies $\rho$ differently from $\sigma$ while still being consistent with the class probabilities~(\ref{eq:class_probs}).
We refer the reader to `Methods' for a detailed proof.
Whether or not the QHT robustness condition is tight for $p_{\mathrm{A}}+p_{\mathrm{B}}<1$ is an interesting open question for future research. It turns out that a worst-case classifier which is consistent with $p_{\mathrm{A}}$ and $p_{\mathrm{B}}$ for benign input but leads to a different classification on adversarial input upon violating condition \eqref{eq:robustness_condition}, if exists, is more challenging to construct for these cases. If such a tightness result for all class probability regimes would be proven, there would be a complete characterization for the robustness of quantum classifiers.

\subsection{Closed form robustness conditions}
\label{subsec:explicit_bounds}
Although Theorem \ref{thm:main} provides a general condition for robustness with provable tightness, it is formulated as a semidefinite program in terms of type-II error probabilities of QHT. 
To get a more intuitive and operationally convenient perspective, we wish to derive a condition for robustness in terms of a meaningful notion of difference between quantum states. 
Specifically, based on Theorem~\ref{thm:main}, here we derive robustness conditions expressed in terms of Uhlmann's fidelity $F$, Bures distance $d_{\mathrm{B}}$ and in terms of the trace distance $T$.
To that end, we first concentrate on pure state inputs and will then leverage these bounds to mixed states.
Finally, we show that expressing robustness in terms of fidelity or Bures distance results in a tight bound for both pure and mixed states, while for trace distance the same can only be claimed in the case of pure states.\\

{\it Pure states.} We first assume that both the benign and the adversarial states are pure. This assumption allows us to first write the optimal type-II error probabilities $\beta^*_\alpha(\rho,\,\sigma)$ as a function of $\alpha$ and the fidelity between $\rho$ and $\sigma$. This leads to a robustness bound on the fidelity and subsequently to a bound on the trace distance and on the Bures distance. Finally, since these conditions are equivalent to the QHT robustness condition~\eqref{eq:robustness_condition}, Theorem~\ref{thm:tightness} implies tightness of these bounds.
\begin{lem}
    \label{lem:pure_state_bound}
    Let $\ket{\psi_\sigma},\,\ket{\psi_\rho}\in\cH$ and let $\cA$ be a quantum classifier. Suppose that for $k_{\mathrm{A}}\in\cC$ and $p_{\mathrm{A}},\,p_{\mathrm{B}}\in[0,\,1]$, we have $k_{\mathrm{A}} = \cA(\psi_\sigma)$ and suppose that the score function $\y$ satisfies~\eqref{eq:class_probs}.
    Then, it is guaranteed that $\cA(\psi_\rho)=\cA(\psi_\sigma)$ for any $\psi_\rho$ with
    \begin{equation}
        \label{eq:fidelity_bound_pure}
        \abs{\braket{\psi_\sigma}{\psi_\rho}}^2 > \frac{1}{2}\left(1 + \sqrt{g(p_{\mathrm{A}},\,p_{\mathrm{B}})}\right),
    \end{equation}
    where the function $g$ is given by
    \begin{equation}
        \label{eq:fidelity_g_function}
        \begin{aligned}
            g(p_{\mathrm{A}},\,p_{\mathrm{B}}) &= 1 - p_{\mathrm{B}} - p_{\mathrm{A}}(1-2p_{\mathrm{B}}) + \\
            &\hspace{4em}2\sqrt{p_{\mathrm{A}} p_{\mathrm{B}}(1-p_{\mathrm{A}})(1-p_{\mathrm{B}})}.
        \end{aligned}
    \end{equation}
    This condition is equivalent to~\eqref{eq:robustness_condition} and is hence both sufficient and necessary whenever $p_{\mathrm{A}} + p_{\mathrm{B}} = 1$.
\end{lem}
This result thus provides a closed form robustness bound which is equivalent to the SDP formulation in condition~\eqref{eq:robustness_condition} and is hence sufficient and necessary in the regime $p_{\mathrm{A}} + p_{\mathrm{B}} = 1$. We remark that, under this assumption, the robustness bound~\eqref{eq:fidelity_bound_pure} has the compact form
\begin{equation}
    \abs{\braket{\psi_\sigma}{\psi_\rho}}^2 > \frac{1}{2} + \sqrt{p_{\mathrm{A}}(1-p_{\mathrm{A}})}.
\end{equation}
Due to its relation with the Uhlmann fidelity, it is straight forward to obtain a robustness condition in terms of Bures metric. Namely, the condition
\begin{equation}
    \label{eq:bures_bound_pure}
    d_{\mathrm{B}}(\ket{\psi_\rho},\,\ket{\psi_\sigma}) < \left[2 - \sqrt{2(1+\sqrt{g(p_{\mathrm{A}},\,p_{\mathrm{B}})})}\right]^\frac{1}{2}
\end{equation}
is equivalent to~\eqref{eq:robustness_condition}. 
Furthermore, since the states are pure, we can directly link~\eqref{eq:fidelity_bound_pure} to a bound in terms of the trace distance via the relation $T(\ket{\psi_\rho},\,\ket{\psi_\sigma})^2 = 1 - \abs{\braket{\psi_\sigma}{\psi_\rho}}^2$, so that
\begin{equation}
    \label{eq:trace_bound_pure}
    T(\ket{\psi_\rho},\,\ket{\psi_\sigma}) < \left[\frac{1}{2}\left(1 - \sqrt{g(p_{\mathrm{A}},\,p_{\mathrm{B}})}\right)\right]^\frac{1}{2}
\end{equation}
is equivalent to~\eqref{eq:robustness_condition}. Due to the equivalence of these bounds to~\eqref{eq:robustness_condition}, Theorem~\ref{thm:tightness} applies and it follows that both bounds are sufficient and necessary in the regime where $p_{\mathrm{A}} + p_{\mathrm{B}} = 1$. In the following, we will extend these results to mixed states and show that both the fidelity and Bures metric bounds are tight.\\

{\it Mixed states.} Reasoning about the robustness of a classifier if the input states are mixed, rather than just for pure states, is practically relevant for a number of reasons. Firstly, in a realistic scenario, the assumption that an adversary can only produce pure states is too restrictive and gives an incomplete picture.
Secondly, if we wish to reason about the resilience of a classifier against a given noise model (e.g. amplitude damping), then the robustness condition needs to be valid for mixed states as these noise models typically produce mixed states.
Finally, in the case where we wish to certify whether a classification on a noisy input has had the same outcome as on the noiseless input, a robustness condition for mixed states is also required.
For these reasons, and having established closed form robustness bounds which are both sufficient and necessary for pure states, here we aim to extend these results to the mixed state setting. The following theorem extends the fidelity bound~\eqref{eq:fidelity_bound_pure} for mixed states. As for pure states, it is then straight forward to obtain a bound in terms of the Bures metric.
\begin{thm}
    \label{thm:fidelity_bound}
    Let $\sigma,\,\rho\in\cS(\cH)$ and let $\cA$ be a quantum classifier. Suppose that for $k_{\mathrm{A}}\in\cC$ and $p_{\mathrm{A}},\,p_{\mathrm{B}}\in[0,\,1]$, we have $k_{\mathrm{A}} = \cA(\sigma)$ and suppose that the score function $\y$ satisfies~\eqref{eq:class_probs}.
    Then, it is guaranteed that $\cA(\rho)=\cA(\sigma)$ for any $\rho$ with
    \begin{equation}
        \label{eq:robustness_bound_fidelity}
        F(\rho,\,\sigma) > \frac{1}{2}\left(1 + \sqrt{g(p_{\mathrm{A}},\,p_{\mathrm{B}})}\right) =: r_{\mathrm{F}}
    \end{equation}
    where $g$ is defined as in~\eqref{eq:fidelity_g_function}. This condition is both sufficient and necessary if $p_{\mathrm{A}} + p_{\mathrm{B}} = 1$.
\end{thm}
\begin{proof}
    To show sufficiency of~\eqref{eq:robustness_bound_fidelity}, we notice that $\y$ can be rewritten as
    \begin{align}
        \label{eq:classifier_purified}
        \y_{k}(\sigma) &= \Tr{\Pi_{k}\cE(\sigma)}\\
        &= \Tr{\Pi_{k}(\cE\circ\mathrm{Tr}_{\mathrm{E}})(\ketbra{\psi_\sigma}{\psi_\sigma})}
    \end{align}
    where $\ket{\psi_\sigma}$ is a purification of $\sigma$ with purifying system $\mathrm{E}$ and $\mathrm{Tr}_{\mathrm{E}}$ denotes the partial trace over $\mathrm{E}$. We can thus view $\y$ as a score function on the larger Hilbert space which admits the same class probabilities for $\sigma$ and any purification of $\sigma$ (and equally for $\rho$).
    It follows from Uhlmann's Theorem that there exist purifications $\ket{\psi_\sigma}$ and $\ket{\psi_\rho}$ such that $F(\rho,\,\sigma) = \abs{\braket{\psi_\sigma}{\psi_\rho}}^2$. Robustness at $\rho$ then follows from~\eqref{eq:robustness_bound_fidelity} by~\eqref{eq:classifier_purified} and Lemma~\ref{lem:pure_state_bound}.
    To see that the bound is necessary when $p_{\mathrm{A}} + p_{\mathrm{B}} = 1$, suppose that there exists some $\Tilde{r}_{\mathrm{F}} < r_{\mathrm{F}}$ such that $F(\sigma,\,\rho) > \Tilde{r}_{\mathrm{F}}$ implies that $\cA(\rho) = \cA(\sigma)$. Since pure states are a subset of mixed states, this bound must also hold for pure states. In particular, suppose $\ket{\psi_\rho}$ is such that $\Tilde{r}_{\mathrm{F}} < \abs{\braket{\psi_\rho}{\psi_\sigma}}^2 \leq r_{\mathrm{F}}$. However, this is a contradiction, since $\abs{\braket{\psi_\rho}{\psi_\sigma}}^2 \geq r_{\mathrm{F}}$ is both sufficient and necessary in the given regime, i.e. by Theorem~\ref{thm:tightness}, there exists a classifier $\cA^\star$ whose score function satisfies~\eqref{eq:class_probs} and for which $\cA^\star(\psi_\sigma) \neq \cA^\star(\psi_\rho)$. It follows that $\Tilde{r}_{\mathrm{F}} \geq r_{\mathrm{F}}$ and hence the claim of the theorem.
\end{proof}
Due to the close relation between Uhlmann fidelity and the Bures metric, we arrive at a robustness condition for mixed states in terms of $d_{\mathrm{B}}$, namely
\begin{equation}
    \label{eq:bures_bound_mixed}
    d_{\mathrm{B}}(\rho,\,\sigma) < \left[2 - \sqrt{2(1+\sqrt{g(p_{\mathrm{A}},\,p_{\mathrm{B}})})}\right]^\frac{1}{2}
\end{equation}
which inherits the tightness properties of the fidelity bound~\eqref{eq:robustness_bound_fidelity}.
In contrast to the pure state case, here it is less straight forward to obtain a robustness bound in terms of trace distance. However, we can still build on Lemma~\ref{lem:pure_state_bound} and the trace distance bound for pure states~\eqref{eq:trace_bound_pure} to obtain a sufficient robustness condition. Namely, when assuming that the benign state is pure, but the adversarial state is allowed to be mixed we have the following result.
\begin{cor}[Pure Benign \& Mixed Adversarial States]
    \label{cor:trace_bound_pure_mixed}
    Let $\sigma,\,\rho\in\cS(\cH)$ and suppose that $\sigma = \ketbra{\psi_\sigma}{\psi_\sigma}$ is pure. Let $\cA$ be a quantum classifier and suppose that for $k_{\mathrm{A}}\in\cC$ and $p_{\mathrm{A}},\,p_{\mathrm{B}}\in[0,\,1]$, we have $k_{\mathrm{A}} = \cA(\sigma)$ and suppose that the score function $\y$ satisfies~\eqref{eq:class_probs}.
    Then, it is guaranteed that $\cA(\rho)=\cA(\sigma)$ for any $\rho$ with
    \begin{align}
        \label{robustcondition:puremixed}
        T(\rho,\,\sigma) < \delta(p_{\mathrm{A}},\,p_{\mathrm{B}})\left(1-\sqrt{1-\delta(p_{\mathrm{A}},\,p_{\mathrm{B}})^2}\right)
    \end{align}
    where 
    $
    \delta(p_{\mathrm{A}},\,p_{\mathrm{B}}) =[\frac{1}{2}\left(1-g(p_{\mathrm{A}}, p_{\mathrm{B}})\right)]^\frac{1}{2}.
    $
\end{cor}
We refer the reader to supplementary note 4 for a detailed proof of this result.
Intuitively, condition \eqref{robustcondition:puremixed} is derived by noting that any convex mixture of robust pure states must also be robust, thus membership of the set of mixed states enclosed by the convex hull of robust pure states (certified by {equation~\eqref{eq:trace_bound_pure}}
is a natural sufficient condition for robustness. As such, the corresponding robustness radius in condition~\eqref{robustcondition:puremixed} is obtained by lower-bounding, with triangle inequalities, the radius of the maximal sphere centered at $\sigma$ within the convex hull. However, the generalization from
Lemma~\ref{lem:pure_state_bound} and equation~\eqref{eq:trace_bound_pure}
to Corollary \ref{cor:trace_bound_pure_mixed}, mediated by the above geometrical argument, results in a sacrifice of tightness. How or to what extent such loosening of the explicit bound in the cases of mixed states may be avoided or ameliorated remains an open question. In the following, we compare the trace distance bounds from QHT with a robustness condition derived from an entirely different technique. 

We note that a sufficient condition can be obtained from a somewhat straightforward application of H\"{o}lder duality for trace norms:
\begin{lem}[H\"older duality bound]
    \label{lem: elementary bound}
    Let $\sigma,\,\rho\in\cS(\cH)$ be arbitrary quantum states and let $\cA$ be a quantum classifier. Suppose that for $k_{\mathrm{A}}\in\cC$ and $p_{\mathrm{A}},\,p_{\mathrm{B}}\in[0,\,1]$, we have $k_{\mathrm{A}} = \cA(\sigma)$ and the score function $\y$ satisfies~\eqref{eq:class_probs}.
    Then, it is guaranteed that $\cA(\rho)=\cA(\sigma)$ for any $\rho$ with
    \begin{equation}
        \label{eq:elementary bound}
        \frac{1}{2}\left\|\rho - \sigma\right\|_1 < \frac{p_{\mathrm{A}} - p_{\mathrm{B}}}{2}.
    \end{equation}
\end{lem}
\begin{proof}
Let $\delta:=\frac{1}{2}\|\rho-\sigma\|_1=\sup_{0\le P \le \mathbb{I}}\Tr{P(\rho-\sigma)}$, which follows from H\"{o}lder duality. We have that $\y_{k_{\mathrm{A}}}(\sigma)-\y_{k_{\mathrm{A}}}(\rho)\le \delta$ and that $\y_{k_{\mathrm{A}}}(\sigma) \geq p_{\mathrm{A}}$, hence $\y_{k_{\mathrm{A}}}(\rho)\ge p_{\mathrm{A}}-\delta$. We also have, for $k'$ such that $\y_{\mathrm{k'}}(\rho)=\max_{k \neq k_{\mathrm{A}}} \y_{k}(\rho)$, that $\y_{\mathrm{k'}}(\rho)-\y_{\mathrm{k'}}(\sigma)\le \delta$ and that $\y_{\mathrm{k'}}(\sigma)\le p_{\mathrm{B}}$, hence $\max_{k \neq k_{\mathrm{A}}} \y_{k}(\rho)\le p_{\mathrm{B}} + \delta$. Thus $ \frac{1}{2}\left\|\rho - \sigma\right\|_1 < \frac{p_{\mathrm{A}} - p_{\mathrm{B}}}{2} \iff p_{\mathrm{A}}-\delta>p_{\mathrm{B}}+\delta \implies \y_{k_{\mathrm{A}}}(\rho)>\max_{k \neq k_{\mathrm{A}}} \y_{k}(\rho)$.
\end{proof}
We acknowledge the above robustness bound from H\"{o}lder duality was independently discovered in Lemma 1 of Ref. \cite{guan2020robustness}. For intuitive insights, it is worth remarking that the condition \eqref{eq:elementary bound} stems from comparing the maximum probability of distinguishing $\sigma$ and $\rho$ with the optimal measurement (H\"older measurement) with the gap between the first two class probabilities on $\sigma$. Since no classifier can distinguish $\sigma$ and $\rho$ better than the H\"older measurement by definition, \eqref{eq:elementary bound} is clearly a sufficient condition. However, the H\"older measurement on $\sigma$ does not necessarily result in class probabilities consistent with equation~(\ref{eq:class_probs}). Without additional constraints on desired class probabilities on the benign input, the robustness condition \eqref{eq:elementary bound} from H\"{o}lder duality is stronger than necessary. In contrast, the QHT bound from Theorem \ref{thm:main}, albeit implicitly written in the language of hypothesis testing, naturally incorporates such desired constraints. Hence, as expected, this gives rise to a tighter robustness condition.

In summary, the closed form solutions in terms of fidelity and Bures metric completely inherit the tightness of Theorem~\ref{thm:main}, while for trace distance, tightness is inherited for pure states, but partially lost in Corollary~\ref{cor:trace_bound_pure_mixed} for mixed adversarial states.
The numerical comparison between the trace distance bounds from QHT and the H\"{o}lder duality bound is shown in a contour plot in FIG.~\ref{fig:bounds_contour}.

\begin{figure}[t]
    \centering
    \includegraphics[width=\linewidth]{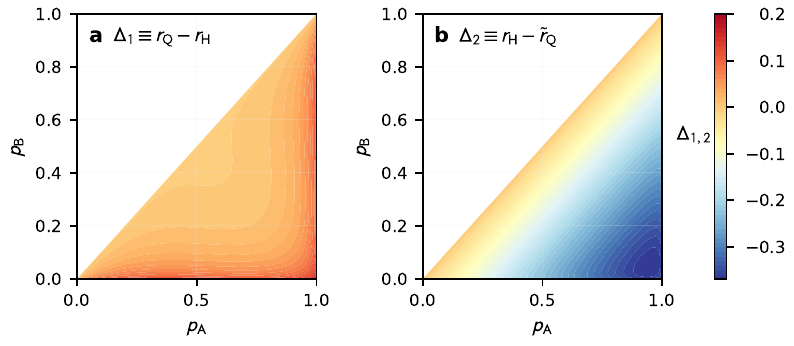}
    \caption{Comparison between robustness bounds in terms of trace distance. \textbf{a} Difference $r_{\mathrm{Q}}-r_{\mathrm{H}}$ between the pure state bound derived from QHT $r_{\mathrm{Q}}$, given in Eq.~\eqref{eq:trace_bound_pure} and the H\"older duality bound $r_{\mathrm{H}}$ from Lemma~\ref{lem: elementary bound}. \textbf{b} Difference $r_{\mathrm{H}} - \Tilde{r}_{\mathrm{Q}}$ between the H\"older duality bound $r_{\mathrm{H}}$ and the bound $\Tilde{r}_{\mathrm{Q}}$ derived from the convex hull approximation to the QHT robustness condition from Theorem~\ref{thm:main} for mixed adversarial states. It can be seen that the pure state bound $r_{\mathrm{Q}}$ is always larger than $r_{\mathrm{H}}$ which in turn is always larger than the convex hull approximation bound $\Tilde{r}_{\mathrm{Q}}$.}
    \label{fig:bounds_contour}
\end{figure}

\subsection{Toy example with single-qubit pure states}
\label{sec:examples_pure_states}

We now present a simple example to highlight the connection between quantum hypothesis testing and classification robustness.
We consider a single-qubit system which is prepared either in the state $\sigma$ or $\rho$ described by
\begin{align}
    \ket{\sigma}&=\ket{0},\\
    \ket{\rho}&=\cos(\theta_0/2)\ket{0} + \sin(\theta_0/2)e^{i\phi_0}\ket{1}
\end{align}
with $\theta_0\in[0,\,\pi)$ and $\phi_0\in[0,\,2\pi)$.
The state $\sigma$ corresponds to the null hypothesis in the QHT setting and to the benign state in the classification setting.
Similarly, $\rho$ corresponds to the alternative hypothesis and adversarial state.
The operators which are central to both QHT and robustness are the Helstrom operators~(\ref{eq:np_operators}) which are derived from the projection operators onto the eigenspaces associated with the non-negative eigenvalues of the operator $\rho - t\sigma$. For this example, the eigenvalues are functions of $t\geq0$ and given by
\begin{align}
    \eta_1 &= \frac{1}{2}(1-t) + R > 0,\\
    \eta_2 &= \frac{1}{2}(1-t) - R \leq 0\\
    R &= \frac{1}{2}\sqrt{(1 - t)^2 + 4t(1-\abs{\gamma}^2)}
\end{align}
where $\gamma$ is the overlap between $\sigma$ and $\rho$ and given by $\gamma = \cos(\theta_0/2)$. For $t > 0$, the Helstrom operators are then given by the projection onto the eigenspace associated with the eigenvalue $\eta_1 > 0$. The projection operator is given by $M_t = \ketbra{\eta_1}{\eta_1}$ with
\begin{align}
        \ket{\eta_1} &= (1 - \eta_1)A_1\ket{0} - \gamma A_1 \ket{\rho}\\
        \abs{A_1}^{-2} &= 2R\abs{\eta_1 - \sin^2(\theta_0/2)}
\end{align}
where $A_1$ is a normalization constant ensuring that $\braket{\eta_1}{\eta_1} = 1$.
Given a preassigned probability $\alpha_0$ for the maximal allowed type-I error probability, we determine $t$ such that $\alpha(M_t) = \alpha_0$.\\

\begin{figure}[t]
    \centering
    \includegraphics[width=.9\linewidth]{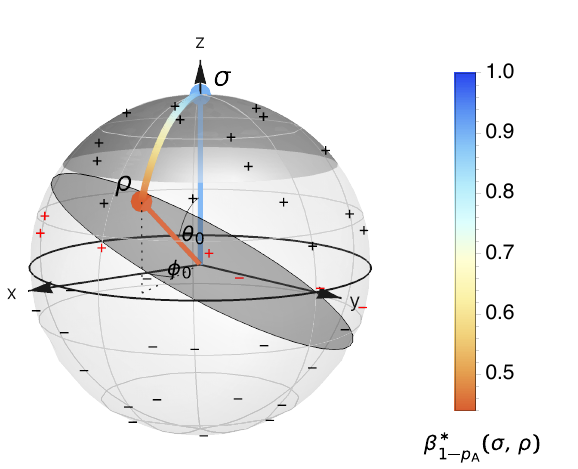}
    \caption{Example classifier for single-qubit quantum states. The decision boundary is represented by the grey disk passing through the origin of the Bloch sphere. The robust region around $\sigma$ is indicated by the dark spherical cap. States belonging to different classes are marked with $+$ and $-$ and are color red if not classified correctly. 
    The colorbar indicates different values for the optimal type-II error probability $\beta^*_{1-p_{\mathrm{A}}}(\sigma,\,\rho)$.
    We see that, for the given classifier, the state $\rho$ is not contained in the robust region around $\sigma$ since the optimal type-II error probability is less than $1/2$ as indicated by the colorbar. The state $\rho$ is thus not guaranteed to be classified correctly by every classifier with the same class probabilities. In the asymmetric hypothesis testing view, an optimal discriminator which admits $0.1$ type-I error probability for testing $\sigma$ against $\rho$ has type-II error probability $0.44$.}
    \label{fig:bloch_sphere}
\end{figure}
{\it Hypothesis testing view.} In QHT, we are given a specific alternative hypothesis $\rho$ and error probability $\alpha_0$ and are interested in finding the minimal type-II error probability. In this example, we pick $\theta_0 = \pi / 3$, $\phi_0=\pi/6$ for the alternative state and set the type-I error probability to $\alpha_0 = 1-p_{\mathrm{A}}= 0.1$. These states are graphically represented on the Bloch sphere in FIG.~\ref{fig:bloch_sphere}. We note that, for this choice of states, we obtain an expression for the eigenvector $\ket{\eta_1}$ given by
\begin{equation}
    \begin{aligned}
        \ket{\eta_1} &= \frac{9 - \sqrt{3}}{\sqrt{30}}\ket{0} - 3\sqrt{\frac{2}{5}}\ket{\rho}.
    \end{aligned}
\end{equation}
which yields the type-II error probability
\begin{equation}
    {\beta^*_{1-p_{\mathrm{A}}}(\sigma,\,\rho) = }\,\beta(M_t) = 1 - \abs{\braket{\eta_1}{\rho}}^2 \approx 0.44 < 1/2.
\end{equation}
We thus see that the optimal hypothesis test can discriminate $\sigma$ and $\rho$ with error probabilities less than $1/2$ since on the Bloch sphere they are located far enough apart. However, since $\beta(M_t) \ngtr 1/2$, Theorem~\ref{thm:main} implies that $\rho$ is not guaranteed to be classified equally as $\sigma$ by a classifier which makes a prediction on $\sigma$ with confidence at least $0.9$. In other words, the two states are far enough apart to be easily discriminated by the optimal hypothesis test but too far apart to be guaranteed to be robust.\\

{\it Classification robustness view.} In this scenario, in contrast to the QHT view, we are not given a specific adversarial state $\rho$, but rather aim to find a condition on a generic $\rho$ such that the classifier is robust for all configurations of $\rho$ that satisfy this condition. Theorem~\ref{thm:main} provides a necessary and sufficient condition for robustness, expressed in terms of $\beta^*$, which, for $p_{\mathrm{B}} = 1-p_{\mathrm{A}}$ and $p_{\mathrm{A}} > 1/2$, reads
\begin{equation}
    \beta^*_{1-p_{\mathrm{A}}}(\sigma,\,\rho) > 1/2
\end{equation}
Recall that the probability and $p_{\mathrm{A}} > 1/2$ is a lower bound to the probability of the most likely class and in this case we set $p_{\mathrm{B}}=1-p_{\mathrm{A}}$ to be the upper bound to the probability of the second most likely class.
For example, as the QHT view shows, for $\alpha_0=1-p_{\mathrm{A}}=0.1$ we have that {$\beta^*_{1-p_{\mathrm{A}}}(\sigma,\,\rho) \approx 0.44 < 1/2$} for a state $\rho$ with $\theta_0=\pi/3$. We thus see that it is not guaranteed that \emph{every} quantum classifier, which predicts $\sigma$ to be of class $k_{\mathrm{A}}$ with probability at least $0.9$, classifies $\rho$ to be of the same class. Now, we would like to find the maximum $\theta_0$, for which every classifier with confidence greater than $p_{\mathrm{A}}$ is guaranteed to classify $\rho$ and $\sigma$ equally. Using the fidelity bound~\eqref{eq:robustness_bound_fidelity}, we find the robustness condition on $\theta_0$
\begin{equation}
    \begin{aligned}
        \abs{\braket{\rho}{\sigma}}^{{2}} = \cos^2(\theta_0/2) > \frac{1}{2} + \sqrt{p_{\mathrm{A}}(1-p_{\mathrm{A}})}\\
        \iff \theta_0 < 2\cdot\arccos{\sqrt{\frac{1}{2} + \sqrt{p_{\mathrm{A}}(1-p_{\mathrm{A}})}}}.
    \end{aligned}
\end{equation}
In particular, if $p_{\mathrm{A}}=0.9$, we find that angles $\theta_0 < 2\cdot\arccos(\sqrt{0.8})\approx0.93 < \pi /3$ are certified.
Figure~\ref{fig:bloch_sphere} illustrates this scenario: the dark region around $\sigma$ contains all states $\rho$ for which is guaranteed that $\cA(\rho) = \cA(\sigma)$ for any classifier $\cA$ with confidence at least $0.9$.\\

{\it Classifier example.}
{We consider a binary quantum classifier $\cA$ which discriminates single-qubit states on the upper half of the Bloch sphere (class $+$) from states on the lower half (class $-$).} Specifically, we consider the dichotomic POVM $\{\Pi_{\theta,\phi},\,\Id_2 - \Pi_{\theta,\phi}\}$ defined by the projection operator $\Pi_{\theta,\phi} = \ketbra{\psi_{\theta,\phi}}{\psi_{\theta,\phi}}$ where
\begin{equation}
    \ket{\psi_{\theta,\phi}}:=\cos(\theta/2)\ket{0} + \sin(\theta/2)e^{i\phi}\ket{1}
\end{equation}
with $\theta=2\cdot\arccos(\sqrt{0.9})\approx 0.644$ and $\phi=\pi/2$. Furthermore, for the rest of this section, we assume that $p_{\mathrm{A}} + p_{\mathrm{B}} = 1$ so that $p_{\mathrm{B}}$ is determined by $p_{\mathrm{A}}$ via $p_{\mathrm{B}} = 1 - p_{\mathrm{A}}$. An illustration of this classification problem is given in Figure~\ref{fig:bloch_sphere}, where the decision boundary of $\cA$ is represented by the grey disk crossing the origin of the Bloch sphere. The states marked with a black $+$ correspond to $+$ states which have been classified correctly, states marked with a black $-$ sign correspond to data points correctly classified as $-$ and red states are misclassified by $\cA$. It can be seen that, since the state $\rho$ has been shown to violate the robustness condition (i.e. $\beta^*_{1-p_{\mathrm{A}}}(\sigma,\,\rho) \approx 0.44 < 1/2$), it is not guaranteed that $\rho$ and $\sigma$ are classified equally. In particular, for the example classifier $\cA$ we have $\cA(\rho) \neq \cA(\sigma)$.

In summary, as $p_{\mathrm{A}} \to \frac{1}{2}$, the robust radius approaches $0$. In the QHT view, this can be interpreted in the sense that if the type-I error probability $\alpha_0$ approaches $1/2$, then all alternative states can be discriminated from $\sigma$ with type-II error probability less than $1/2$. As $p_{\mathrm{A}} \to 1$, the robust radius approaches $\pi/2$. In this regime, the QHT view says that if the type-I error probability $\alpha_0$ approaches $0$, then the optimal type-II error probability is smaller than $1/2$ only for states in the lower half of the Bloch sphere.

\subsection{Robustness certification}
\label{subsec:robustness_certification}
The theoretical results in Section~\ref{subsec:explicit_bounds} provide conditions under which it is guaranteed that the output of a classification remains unaffected if the adversarial (noisy) state and the benign state are close enough, measured in terms of the fidelity, Bures metric, or trace distance.
Here, we show how this result can be put to work and make concrete examples of scenarios where reasoning about the robustness is relevant.
Specifically, we first present a protocol to assess how resilient a quantum classifier is against input perturbations.
Secondly, in a scenario where one is provided with a potentially noisy or adversarial input, we wish to obtain a statement as to whether the classification of the noisy input is guaranteed to be the same as the classification of a clean input without requiring access to the latter.
Thirdly, we analyse the robustness of quantum classifiers against known noise models, namely phase and amplitude damping.\\

{\it Assessing resilience against adversaries.}
In security critical applications, such as for example the classification of medical data or home surveillance systems, it is critical to assess the degree of resilience that machine learning systems exhibit against actions of malicious third parties. In other words, the goal is to estimate the expected classification accuracy, under perturbations of an input state within $1-\varepsilon$ fidelity. 
In the classical machine learning literature, this quantity is called the \emph{certified test set accuracy} at radius $r$, where distance is typically measured in terms of $\ell_{p}$-norms, and is defined as the fraction of samples in a test set which has been classified correctly and with a robust radius of at least $r$ (i.e. an adversary can not change the prediction with a perturbation of magnitude less than $r$).
We can adapt this notion to the quantum domain and, given a test set consisting of pairs of labelled samples $\cT=\{(\sigma_{\mathrm{i}},\,y_{\mathrm{i}})\}_{\mathrm{i}=1}^{\abs{\cT}}$, the \emph{certified test set accuracy} at fidelity $1-\varepsilon$ is given by
\begin{equation}
    \frac{1}{\abs{\cT}}\sum_{(\sigma,\,y)\in\cT}\Id\{\cA(\sigma)=y\,\land\,r_{\mathrm{F}}(\sigma) \leq 1-\varepsilon\}
\end{equation}
where $r_{\mathrm{F}}(\sigma)$ is the minimum robust fidelity~\eqref{eq:robustness_bound_fidelity} for sample $\sigma$ and $\Id$ denotes the indicator function.
To evaluate this quantity, we need to obtain the prediction and to calculate the minimum robust fidelity for each sample $\sigma\in\cT$ as a function of the class probabilities $\y_{k}(\sigma)$.
In practice, in the finite sampling regime, we have to estimate these quantities by sampling the quantum circuit $N$ times. To that end, we use Hoeffding's inequality so that the bounds hold with probability at least $1-\alpha$. Specifically, we run the following steps to certify the robustness for a given sample $\sigma$:
\begin{enumerate}
    \item Apply the quantum circuit $N$ times to $\sigma$ and perform the $\abs{\cC}$-outcome measurement $\{\Pi_{k}\}_{k=1}^{\abs{\cC}}$ each time. Store the outcomes in variables $n_{\mathrm{k}}$ for every $k\in\cC$.
    \item Determine the most frequent measurement outcome $k_{\mathrm{A}}$ and set $\hat{p}_{\mathrm{A}} = {n_{k_{\mathrm{A}}}}/{N} - \sqrt{-\ln(\alpha)/2N}$.
    \item If $\hat{p}_{\mathrm{A}}> 1/2$, set $\hat{p}_{\mathrm{B}} = 1 - \hat{p}_{\mathrm{A}}$ and calculate the minimum robust fidelity $r_{\mathrm{F}}$ according to~\eqref{eq:robustness_bound_fidelity} and return $(k_{\mathrm{A}},\,r_{\mathrm{F}})$; otherwise abstain from certification.
\end{enumerate}
Executing these steps for a given sample $\sigma$ returns the true minimum robust fidelity with probability $1-\alpha$, which follows from Hoeffding's inequality
\begin{equation}
    \mathrm{Pr}\left[\frac{n_{\mathrm{k}}}{N} - \langle\Lambda_{k}\rangle_\sigma \geq \delta\right] \leq \exp\{-2N\delta^2\}
\end{equation}
with $\Lambda_{k} = \cE^\dagger(\Pi_{k})$ and setting $\delta = \sqrt{-\ln(\alpha)/2N}$.
In supplementary note 6, this algorithm is shown in detail in Protocol 1.\\

{\it Certification for noisy inputs.}
In practice, inputs to quantum classifiers are typically noisy.
This noise can occur either due to imperfect implementation of the state preparation device, or due to an adversary which interferes with state or gate preparation.
Under the assumption that we know that the state has been prepared with fidelity at least $1-\varepsilon$ to the noiseless state, we would like to know whether this noise has altered our prediction, \emph{without having access to the noiseless state}.
Specifically, given the classification result, which is based on the \emph{noisy} input, we would like to have the guarantee that the classifier would have predicted the same class, had it been given the noiseless input state.
This would allow the conclusion that the result obtained from the noisy state has not been altered by the presence of noise.
To obtain this guarantee, we leverage Theorem~\ref{thm:fidelity_bound} in the following protocol. Let $\rho$ be a noisy input with $F(\rho,\,\sigma) > 1-\varepsilon$ where $\sigma$ is the noiseless state and let $\cA$ be a quantum classifier with quantum channel $\cE$ and POVM $\{\Pi_{k}\}_{k}$.
Similar to the previous protocol, we again need to take into account that in practice we can sample the quantum circuit only a finite number of times. Thus, we again use Hoeffding's inequality to obtain estimates for the class probability $p_{\mathrm{A}}$ which holds with probability at least $1-\alpha$. The protocol then consists of the following steps:
\begin{enumerate}
    \item Apply the quantum circuit $N$ times to the (noisy) state $\rho$ and perform the $\abs{\cC}$-outcome measurement $\{\Pi_{k}\}_{k=1}^{\abs{\cC}}$ each time. Store the outcomes in variables $n_{\mathrm{k}}$ for every $k\in\cC$.
    \item Determine the most frequent measurement outcome $k_{\mathrm{A}}$ and set $\hat{p}_{\mathrm{A}} = {n_{k_{\mathrm{A}}}}/{N} - \sqrt{-\ln(\alpha)/2N}$.
	\item If $\hat{p}_{\mathrm{A}}> 1/2$, set $\hat{p}_{\mathrm{B}} = 1 - \hat{p}_{\mathrm{A}}$ and calculate the minimum robust fidelity $r_{\mathrm{F}}$ according to~\eqref{eq:robustness_bound_fidelity} using $\hat{p}_{\mathrm{A}}$; otherwise, abstain from certification.
	\item If $1-\varepsilon > r_{\mathrm{F}}$, it is guaranteed that $\cA(\rho) = \cA(\sigma)$.
\end{enumerate}
Running these steps, along with a classification, allows to certify that the classification has not been affected by the noise, i.e. that the same classification outcome would have been obtained on the noiseless input state.\\

{\it Robustness for known noise models.}
Now, we analyse the robustness of a quantum classifier against known noise models which are parametrized by a noise parameter $\gamma$. Specifically, we investigate robustness against phase damping and amplitude damping.  Using Theorem~\ref{thm:fidelity_bound}, we calculate the fidelity between the clean input $\sigma$ and the noisy input $\cN_\gamma(\sigma)$ and rearrange the robustness condition~\eqref{eq:robustness_bound_fidelity} such that it yields a bound on the maximal noise which the classifier tolerates.

Phase damping describes the loss of quantum information without loosing energy. For example, it describes how electronic states in an atom are perturbed upon interacting with distant electrical charges. The quantum channel corresponding to this noise model can be expressed in terms of Kraus operators which are given by
\begin{equation}
    K_0 =
    \begin{pmatrix}
        1 & 0 \\
        0 & \sqrt{1-\gamma}
    \end{pmatrix},
    \ws
    K_1 =
    \begin{pmatrix}
        0 & 0\\
        0 & \sqrt{\gamma}
    \end{pmatrix}
\end{equation}
where $\gamma$ is the noise parameter. From this description alone, we can see that a system which is in the $\ket{0}$ or $\ket{1}$ state is always robust against all noise parameters in this model as it acts trivially on $\ket{0}$ and $\ket{1}$. Any such behaviour should hence be reflected in the tight robustness condition we derive from QHT. Indeed, for a pure state $\ket{\psi} = \alpha\ket{0} + \beta\ket{1}$, Theorem~\ref{thm:fidelity_bound} leads to the robustness condition $\gamma \leq 1$ if $\alpha=0$ or $\beta = 0$ and, for any $\alpha,\beta\neq 0$,
\begin{equation}
    \label{eq:phase_damping_bound}
    \gamma < 1 - \left(\max\left\{0,\,1 + \frac{{r_{\mathrm{F}} - 1}}{2\abs{\alpha}^2\abs{\beta}^2}\right\}\right)^2
\end{equation}
where $r_{\mathrm{F}} = \frac{1}{2}(1 + \sqrt{g(p_{\mathrm{A}},\,p_{\mathrm{B}})})$ is the fidelity bound from Theorem~\ref{thm:fidelity_bound} and $p_{\mathrm{A}},\,p_{\mathrm{B}}$ are the corresponding class probability bounds. 
This bound is illustrated in FIG.~\ref{fig:noise_model_bounds} as a function of $\abs{\alpha}^2$ and $p_{\mathrm{A}}$.
The expected behaviour towards the boundaries can be seen in the plot, namely that when $\abs{\alpha}^2 \to \{0,\,1\}$, then the classifier is robust under all noise parameters $\gamma \leq 1$.

Amplitude damping models effects due to the loss of energy from a quantum system (energy dissipation). For example, it can be used to model the dynamics of an atom which spontaneously emits a photon. The quantum channel corresponding to this noise model can be written in terms of Kraus operators
\begin{equation}
    K_0 =
    \begin{pmatrix}
        1 & 0 \\
        0 & \sqrt{1-\gamma}
    \end{pmatrix},
    \ws
    K_1 =
    \begin{pmatrix}
        0 & \sqrt{\gamma} \\
        0 & 0
    \end{pmatrix},
\end{equation}
where $\gamma$ is the noise parameter and can be interpreted as the probability of losing a photon.
It is clear from the Kraus decomposition that the $\ket{0}$ state remains unaffected. This again needs to be reflected by a tight robustness condition.
For a pure state $\ket{\psi} = \alpha\ket{0} + \beta\ket{1}$, Theorem~\ref{thm:fidelity_bound} leads to the robustness condition $\gamma \leq 1$ if $\abs{\alpha}=1$ and, for any $\alpha,\beta\neq 0$,
\begin{equation}
    \label{eq:amplitude_damping_bound}
    \begin{aligned}
        \gamma &< 1 - \Bigg[\frac{\abs{\alpha}^2}{\abs{\alpha}^2-\abs{\beta}^2}\cdot\Bigg( 1 - \\
        &\hspace{2em} \sqrt{1 - \frac{\abs{\alpha}^2 - \abs{\beta}^2}{\abs{\alpha}^2\abs{\beta}^2}\cdot\frac{\max\{0,{r_{\mathrm{F}}}-\abs{\alpha}^2\}}{\abs{\alpha}^2}}\,\Bigg)\Bigg]^2
    \end{aligned}
\end{equation}
where again $r_{\mathrm{F}} = \frac{1}{2}(1 + \sqrt{g(p_{\mathrm{A}},\,p_{\mathrm{B}})})$ is the fidelity bound from Theorem~\ref{thm:fidelity_bound}.
This bound is illustrated in FIG.~\ref{fig:noise_model_bounds} as a function of $\abs{\alpha}^2$ and $p_{\mathrm{A}}$. 
It can be seen again that the bound shows the expected behaviour, namely that when $\abs{\alpha}^2 \to 1$, then the classifier is robust under all noise parameters $\gamma \leq 1$.

We remark that, in contrast to the previous protocol, here we assume access to the noiseless state $\sigma$ and we compute the robustness condition on the noise parameter based on the classification of this noiseless state. This can be used in a scenario where a quantum classifier is developed and tested on one device, but deployed on a different device with different noise sources.

\begin{figure}[t]
    \centering
    \includegraphics[width=\linewidth]{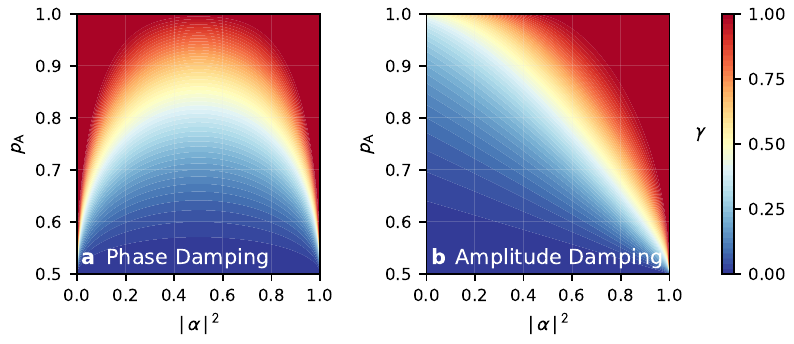}
    \caption{{Robustness against known noise models. Both plots show the maximal noise parameter $\gamma$ for which the classifier $\cA$ is still guaranteed to be robust, for \textbf{a} phase damping and \textbf{b} amplitude damping, when classifying a pure state input $\ket{\psi} = \alpha \ket{0} + \beta\ket{1}$. In \textbf{a}, we can see that for states $\ket{0}$ and $\ket{1}$, the classifier is robust against any $\gamma \leq 1$, while for \textbf{b} the same holds if the input state is $\ket{1}$.}}
    \label{fig:noise_model_bounds}
\end{figure}

\subsection{Randomized inputs with depolarization smoothing}
\label{sec:examples_noisy_input}
In the previous section, we looked at robustness of quantum classifiers against certain types of noise, either with respect to a known noise model, or with respect to unknown, potentially adversarial, noise. 
Here we take a different viewpoint, and investigate how robustness against unknown noise sources can be enhanced by harnessing depolarization noise.
This is led by the intuition that noise can be exploited to increase robustness and privacy.
We first provide background on randomized smoothing, a technique for provable robustness from classical machine learning. We then proceed to present provable robustness in terms of trace distance which is equivalent to the robustness condition~(\ref{eq:robustness_condition}) from Theorem~\ref{thm:main} but with {depolarized} inputs. The bound is then compared numerically with the H\"older duality bound from Lemma~\ref{lem: elementary bound} and with a result obtained recently from quantum differential privacy~\cite{du2020quantum}.\\

{\it Randomized smoothing.}
Randomized Smoothing is a technique that has recently been proposed to certify the robustness and obtain tight provable robustness guarantees in the classical setting~\cite{cohen2019certified}. The key idea is to randomize inputs to classifiers by perturbing them with additive Gaussian noise. This results in smoother decision boundaries which in turn leads to improved robustness to adversarial attacks. In this section, we extend this concept to the quantum setting by interpreting quantum noise channels as ``smoothing'' channels. The idea of harnessing actively induced input noise in quantum classifiers to increase robustness has recently been proposed in Ref.~\cite{du2020quantum} where a robustness bound with techniques from quantum differential privacy has been derived.
In the following, we take a similar path and consider a depolarization noise channel and analytically derive a larger robustness radius for pure single-qubit input states.\\

{\it Quantum channel smoothing: depolarization.}
Consider depolarization noise which maps a state $\sigma$ onto a linear combination of itself and the maximally mixed state
\begin{equation}
    \sigma \mapsto \cE^{\mathrm{dep}}_p(\sigma) := (1-p)\sigma + \frac{p}{d}\Id_{d}
\end{equation}
where $p\in(0,\,1)$ is the depolarization parameter and $d$ is the dimensionality of the underlying Hilbert space. In single-qubit scenarios, this can geometrically be interpreted as a uniform contraction of the Bloch sphere parametrized by $p$, pushing quantum states towards the completely mixed state. Analogously to classical randomized smoothing, we apply a depolarization channel to inputs before passing them through the classifier in order to artificially randomize the states and increase robustness against adversarial attacks. We then obtain a robustness guarantee by instantiating Theorem~\ref{thm:main} in the following way. {Let $\sigma$ be a benign input state and suppose that the classifier $\cA$ with score function $\y$ satisfies}
\begin{equation}
    \y_{k_{\mathrm{A}}}(\cE^{\mathrm{dep}}_p(\sigma)) \geq p_{\mathrm{A}} > p_{\mathrm{B}} \geq \max_{k \neq k_{\mathrm{A}}}\y_{k}(\cE^{\mathrm{dep}}_p(\sigma)).
\end{equation}
{Then $\cA$ is robust at $\cE^{\mathrm{dep}}_p(\rho)$ for any adversarial input state $\rho$ which satisfies the robustness condition~\eqref{eq:robustness_condition}, where $\beta^*$ is the optimal type-II error probability for testing $\cE^{\mathrm{dep}}_p(\sigma)$ against $\cE^{\mathrm{dep}}_p(\rho)$.} 
In particular, if $\sigma$ and $\rho$ are single-qubit pure states and in the case where we have $p_{\mathrm{A}} + p_{\mathrm{B}} = 1$, the robustness condition can be equivalently expressed in terms of the trace distance as $T(\rho,\,\sigma) < {r_{\mathrm{Q}}(p)}$ with
\begin{equation}
    \label{eq:depolarization_bound_main}
    r_{\mathrm{Q}}(p) =
        \begin{cases}
            \sqrt{\frac{1}{2}-\frac{\sqrt{{g(p,\,p_{\mathrm{A}})}}}{1-p}}, &\ p_{\mathrm{A}} < \frac{1 + 3(1-p)^2}{2 + 2(1-p)^2}\\
            \sqrt{\frac{p\cdot(2-p)\cdot(1-2p_{\mathrm{A}})^2}{8(1-p)^2\cdot(1-p_{\mathrm{A}})}}, &\ p_{\mathrm{A}} \geq \frac{1 + 3\cdot(1-p)^2}{2+2\cdot(1-p)^2}
        \end{cases}
\end{equation}
where
\begin{equation}
    {g(p,\,p_{\mathrm{A}})} = \frac{1}{2}\left(2p_{\mathrm{A}}(1-p_{\mathrm{A}}) - p(1-\frac{p}{2})\right).
\end{equation}
A detailed derivation of this bound is given in supplementary note 5.

The H\"older bound from Lemma~\ref{lem: elementary bound} can also be adapted to the noisy setting. Specifically, since for two states $\sigma$ and $\rho$, the trace distance obeys $T(\cE^{\mathrm{dep}}_p(\rho),\,\cE^{\mathrm{dep}}_p(\sigma)) = (1-p)\cdot T(\rho,\,\sigma)$, Lemma~\ref{lem: elementary bound} implies robustness given that the trace distance is less than $T(\rho,\,\sigma) < {r_{\mathrm{H}}(p)}$ where
\begin{equation}
    \label{eq:holder_bound_noisy}
    r_{\mathrm{H}}(p) = \frac{2p_{\mathrm{A}}-1}{2(1-p)}.
\end{equation}

It has been shown in Ref.~\cite{du2020quantum} that naturally occurring noise in a quantum circuit can be harnessed to increase the robustness of quantum classification algorithms. Specifically, using techniques from quantum differential privacy, a robustness bound expressible in terms of the class probabilities $p_{\mathrm{A}}$ and the depolarization parameter $p$ has been derived. Written in our notation and for single-qubit binary classification, the bound can be written as
\begin{equation}
    r_{\mathrm{DP}}(p) = \frac{p}{2(1-p)}\left(\sqrt{\frac{p_{\mathrm{A}}}{1-p_{\mathrm{A}}}} - 1\right)
\end{equation}
and robustness is guaranteed for any adversarial state $\rho$ with $T(\rho,\,\sigma) < {r_{\mathrm{DP}}(p)}$.
The three bounds are compared graphically in FIG.~\ref{fig:depolarization_bounds} for different values of the noise parameter $p$, showing that the QHT bound gives rise to a tighter robustness condition for all values of $p$.

\begin{figure}[t]
    \centering
    \includegraphics[width=\linewidth]{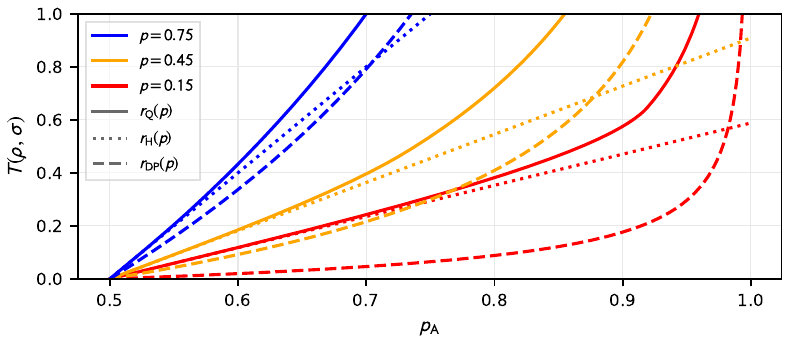}
    \caption{Comparison of Robustness bounds for single-qubit pure states derived from quantum hypothesis testing $r_{\mathrm{Q}}(p)$ , H\"older duality $r_{\mathrm{H}}(p)$ and quantum differential privacy $r_{\mathrm{DP}}(p)$~\cite{du2020quantum} with different levels of depolarization noise $p$.}
    \label{fig:depolarization_bounds}
\end{figure}

It is worth remarking that although the QHT robustness bounds can be, as shown here for the case of applying depolarization channel, enhanced by {active} input randomization, it already presents a valid, non-trivial condition with {noiseless} (without smoothing) quantum input ({Theorems~\ref{thm:main},~\ref{thm:fidelity_bound}, Corollary~\ref{cor:trace_bound_pure_mixed} and Lemma~\ref{lem: elementary bound}}). This contrasts with the {deterministic} classical scenario, where the addition of classical noise sources to the input state is necessary to generate a probability distribution corresponding to the input data, from which an adversarial robustness bound can be derived \cite{cohen2019certified}. This distinction between the quantum and classical settings roots in the probabilistic nature of measurements on quantum states, which of course applies to both pure and mixed state inputs.

\section{Discussion}
\label{sec:discussion}

We have seen how a fundamental connection between adversarial robustness of quantum classifiers and quantum hypothesis testing (QHT) can be leveraged to provide a powerful framework for deriving optimal conditions for robustness certification.
The robustness condition is provably tight when expressed in the SDP formulation in terms of optimal error probabilities for binary classifications or, more generally, for multiclass classifications where the probability of the most likely class is greater than $1/2$.
The corresponding closed form expressions arising from the SDP formulation are proved to be tight for general states when expressed in terms of fidelity and Bures distance, whereas in terms of trace distance, tightness holds only for pure states.
These bounds give rise to (1) a practical robustness protocol for assessing the resilience of a quantum classifier against adversarial and unknown noise sources; (2) a protocol to verify whether a classification given a noisy input has had the same outcome as a classification given the noiseless input state, without requiring access to the latter, and (3) conditions on noise parameters for amplitude and phase damping channels, under which the outcome of a classification is guaranteed to remain unaffected.
Furthermore, we have shown how using a randomized input with depolarization channel enhances the QHT bound, consistent with previous results, in a manner akin to randomized smoothing in robustness certification of classical machine learning. 

A key difference between the quantum and classical formalism is that quantum states themselves have a naturally probabilistic interpretation, even though the classical data that could be embedded in quantum states do not need to be probabilistic. 
We now know that both classical and quantum optimal robustness bounds for classification protocols depend on bounds provided by hypothesis testing. However, hypothesis testing involves the comparison of probability distributions, which can only be possible in the classical case with the addition of stochastic noise sources if the classical data is initially non-stochastic. This means that the optimal robustness bounds in the classical case only exist for noisy classifiers which also require training under the additional noise \cite{cohen2019certified}. This is in contrast to the quantum scenario. Our quantum adversarial robustness bound can be proved independently of randomized input, even though it can be enhanced by it, like through a depolarization channel. Thus, in the quantum regime, unlike in the classical {deterministic} scenario, we are not forced to consider training under actively induced noise.

Our optimal provable robustness bound and the connection to quantum hypothesis testing also provides a first step towards more rigorously identifying the limitations of quantum classifiers in its power of distinguishing between quantum states. Our formalism hints at an intimate relationship between these fundamental limitations in the accuracy of distinguishing between different classes of states and robustness. This could shed light on the robustness and accuracy trade-offs observed in classification protocols  \cite{tsipras2019robustness} and is an important direction of future research. It is also of independent interest to explore possible connections between tasks that use quantum hypothesis testing, such as quantum illumination \cite{wilde2017} and state discrimination~\cite{sentis2019}, with accuracy and robustness in quantum classification. 

\section{Methods}
\subsection{Proof of Theorem~\ref{thm:main}}
    The proof of this theorem is based on showing that the measurement operators of the classifier can be viewed as an operator which is feasible for the SDP~\eqref{eq:type2_error_sdp}. Specifically, note that in the Heisenberg picture we can write the score function $\y$ of the classifier $\cA$ as
    \begin{equation}
        \y_{k}(\sigma) = \Tr{\cE^\dagger\left(\Pi_{k}\right)\sigma} = \Tr{\Lambda_{k}\sigma}
    \end{equation}
    where $\Lambda_{k}:=\cE^\dagger(\Pi_{k})$. Since $\cE$ is a CPTP map, its dual is completely positive and unital and thus $0 \leq \Lambda_{k} \leq \Id$ and
    \begin{equation}
        \sum_k \Lambda_{k} = \sum_k \cE^\dagger(\Pi_{k}) = \cE^\dagger(\Id) = \Id.
    \end{equation}
    Note that the operator $\Id - \Lambda_{k_{\mathrm{A}}}$ is feasible for the SDP $\beta^*_{1-p_{\mathrm{A}}}(\sigma,\,\rho)$ since by assumption
    \begin{equation}
        \alpha(\Id - \Lambda_{k_{\mathrm{A}}};\,\sigma) = 1 - \y_{k_{\mathrm{A}}}(\sigma) \leq 1-p_{\mathrm{A}}.
    \end{equation}
    It follows that
    \begin{equation}
        \y_{k_{\mathrm{A}}}(\rho) = \beta(\Id - \Lambda_{k_{\mathrm{A}}};\,\rho) \geq \beta^*_{1-p_{\mathrm{A}}}(\sigma,\,\rho).
    \end{equation}
    Similarly, let $k\neq k_{\mathrm{A}}$ be arbitrary. Then, the operator $\Lambda_{k}$ is feasible for the SDP $\beta^*_{p_{\mathrm{B}}}(\sigma,\,\rho)$ since
    \begin{equation}
        \alpha(\Lambda_{k};\,\sigma) = \y_{k}(\sigma) \leq p_{\mathrm{B}}
    \end{equation}
    and hence
    \begin{equation}
        1 - \y_{k}(\rho) = \beta(\Lambda_{k};\,\rho) \geq \beta^*{p_{\mathrm{B}}}(\sigma,\,\rho)
    \end{equation}
    Since $k\neq k_{\mathrm{A}}$ is arbitrary, it follows that if $\rho$ satisfies
    \begin{equation}
        \beta^*_{1-p_{\mathrm{A}}}(\sigma,\,\rho) + \beta^*{p_{\mathrm{B}}}(\sigma,\,\rho) > 1
    \end{equation}
    then it is guaranteed that
    \begin{equation}
        \y_{k_{\mathrm{A}}}(\rho) > \max_{k \neq k_{\mathrm{A}}} \y_{k}(\rho)
    \end{equation}
    and thus $\cA(\rho) = \cA(\sigma)$.
    \null\nobreak\hfill\ensuremath{\square}

\subsection{Proof of Theorem~\ref{thm:tightness}}
    Note that, since $p_{\mathrm{B}} = 1-p_{\mathrm{A}}$ by assumption, the robustness condition~\eqref{eq:robustness_condition} reads
    \begin{equation}
        \label{eq:tightness_proof_rob_condition}
        \beta^*_{1-p_{\mathrm{A}}}(\sigma,\,\rho) > 1/2.
    \end{equation}
    Let $M_{\mathrm{A}}^\star$ be an optimizer of the corresponding SDP such that $\alpha(M_{\mathrm{A}}^\star) = 1-p_{\mathrm{A}}$ and
    \begin{equation}
        \beta(M_{\mathrm{A}}^\star;\,\rho) = \beta^*_{1-p_{\mathrm{A}}}(\sigma,\,\rho).
    \end{equation}
    Consider the classifier $\cA^\star$ with score function $\y^\star$ defined by the POVM $\{\Id - M_{\mathrm{A}}^\star,\,M_{\mathrm{A}}^\star,\,0\}$ where the number of $0$ operators is such that $\y$ has the desired number of classes. The score function $\y^\star$ is consistent with the class probabilities~\eqref{eq:class_probs} since
    \begin{align}
        \y^\star_{k_{\mathrm{A}}}(\sigma) &= \alpha(\Id - M_{\mathrm{A}}^\star;\,\sigma) = p_{\mathrm{A}}\\
        \y^\star_{k_\mathrm{B}}(\sigma) &= \alpha(M_{\mathrm{A}}^\star;\,\sigma) = 1- p_{\mathrm{A}} = p_{\mathrm{B}}.
    \end{align}
    Furthermore, if $\rho$ violates~\eqref{eq:tightness_proof_rob_condition}, then we have
    \begin{equation}
        \y_{k_{\mathrm{A}}}(\rho) = \beta(M_{\mathrm{A}}^\star;\,\rho) \leq 1/2
    \end{equation}
    and thus, in particular $\cA^\star(\rho)\neq k_{\mathrm{A}} = \cA^\star(\sigma)$.
    \null\nobreak\hfill\ensuremath{\square}
    
\subsection{Fidelity robustness condition}
Recall that the robustness condition in Theorem~\ref{thm:main} is expressed in terms of the SDP from the Neyman-Pearson approach to quantum hypothesis testing.
Thus, in order to use Theorem~\ref{thm:main} to obtain robustness bounds in terms of a meaningful distance between quantum states, we need to connect the optimal type-II error with this distance. 
Here, we look specifically at the {fidelity} between pure quantum states and sketch the proof for Lemma~\ref{lem:pure_state_bound}.
We refer the reader to supplementary note 3 for details. 

\begin{proof}[Proof of Lemma~\ref{lem:pure_state_bound} (sketch)]
    The key challenge to proving this result is connecting the robustness condition~\eqref{eq:robustness_condition}, written in terms of type-II error probabilities, to the {fidelity $F$ which, for pure states, is given by the squared overlap $\abs{\braket{\psi_\sigma}{\psi_\rho}}^2$.
    It is well known that optimizers to the SDP~\eqref{eq:type2_error_sdp} are given by Helstrom operators, $M_t$
    which can be expressed in terms of the projection onto the positive and null eigenspaces of the operator $\rho - t\sigma$.}
    The first step is thus to solve the eigenvalue problem
    \begin{equation}
        (\rho - t\sigma)\ket{\eta} = \eta\ket{\eta}
    \end{equation}
    which, for pure states, can be expressed in terms of the squared overlap $\abs{\braket{\psi_\sigma}{\psi_\rho}}^2$.
    Given these solutions, one then derives an expression for the Helstrom operators $M^\star_{\mathrm{A}}$ and $M^\star_{\mathrm{B}}$ with type-I error probabilities $1-p_{\mathrm{A}}$ and $p_{\mathrm{B}}$ respectively. This leads to the robustness condition
    $
        \beta(M^\star_{\mathrm{A}};\,\rho) + \beta(M^\star_{\mathrm{B}};\,\rho) > 1
    $
    being an inequality which can be rewritten as a condition on the fidelity which takes the desired form~\eqref{eq:fidelity_bound_pure}.
\end{proof}
In a similar manner, one can derive the trace-distance bound for depolarized input states presented in the `Results' section of this paper. The full proof for the robustness bound in equation~\eqref{eq:depolarization_bound_main} is given in supplementary note 5.

\section*{ACKNOWLEDGEMENTS}
The authors are grateful to Ryan LaRose (Michigan State University), Zi-Wen Liu (Perimeter Institute for Theoretical Physics), Barry Sanders (University of Calgary) and Robert Pisarczyk (University of Oxford)
for inspiring discussions on the question of robustness in quantum machine learning.

NL acknowledges funding from the Shanghai Pujiang Talent Grant (no. 20PJ1408400) and the NSFC International Young Scientists Project (no. 12050410230). NL is also supported by the Innovation Program of the Shanghai Municipal Education Commission (no. 2021-01-07-00-02-E00087) and the Shanghai Municipal Science and Technology Major Project (2021SHZDZX0102).

\section*{AUTHOR CONTRIBUTIONS}
The main idea was conceived by C.Z. in discussion with Z.Z. and M.W. 
Key insights to adversarial quantum learning were provided by N.L. while B.L. contributed to central insights to robustness in machine learning.
The work on QHT, the resulting QHT condition and the derivation for closed form bounds for pure and depolarized states was completed by M.W.
The extension of the fidelity bound the the mixed state case was completed by M.W. and Z.Z. The different trace distance bounds for mixed states were derived by Z.Z. The proof for optimality was done by M.W. and Z.Z. The noisy input scenario and the example were initiated by N.L. and completed by M.W.
All authors contributed to the manuscript.

\section*{COMPETING INTERESTS}
The authors declare no competing interests.

\bibliographystyle{naturemag}
\bibliography{main}

\clearpage
\newpage
\onecolumngrid
\renewcommand{\thesection}{Supplementary Note \arabic{section}}
\renewcommand\thesubsection{\thesection.\Alph{subsection}}
\def\bibsection{\section*{Supplementary References}}

\begin{center}
    \large\bf
    Supplementary Information for\\
    ``Optimal Provable Robustness of Quantum Classification via Quantum Hypothesis Testing''
\end{center}
\vspace{2em}

Here, we provide detailed proofs for the robustness bounds presented in the paper in terms of the fidelity and trace distance, stated in Lemma~\ref{lem:pure_state_bound}, Corollary~\ref{cor:trace_bound_pure_mixed} and the bound for depolarized inputs (Eq.~\eqref{eq:depolarization_bound_main} in the main part).
To that end, we first show a collection of technical lemmas related to quantum hypothesis testing with the goal of explicitly constructing Helstrom operators which attain a preassigned level of type-I error probability. 
These constructions of the Helstrom operators will then be used to derive an expression for the SDP (Eq.~\eqref{eq:type2_error_sdp} in the main part) in terms of the fidelity between the input states.

\section{Technical Lemmas}
\label{sec:supplemental_technical_lemmas}
{\it Preliminaries.} We first recall the central quantities of interest. As in the main part of this paper, the null hypothesis corresponds to a benign input state and is described by a density operator $\sigma\in\cS(\cH)$ acting on a Hilbert space $\cH$ of finite dimension $d:=\mathrm{dim}(\cH) <\infty$. 
The density operator for the alternative hypothesis is denoted by $\rho$ and corresponds the adversarial state in the classification setting. A quantum hypothesis test is defined by a positive semi-definite operator $0\leq M\leq \Id_d$ and the type-I and type-II error probabilities associated with $M$ are denoted by $\alpha$ and $\beta$ and are defined by
\begin{align}
    \alpha(M;\,\sigma) &:= \Tr{\sigma M}\ws&\text{(type-I error)}\\
    \beta(M;\,\rho) &:= \Tr{\rho(\Id - M)} &\text{(type-II error)}
\end{align}
Throughout this supplementary information we will omit the explicit dependence on $\sigma$ and $\rho$ whenever it is clear from context.
For two Hermitian operators $A$ and $B$, we write $A\geq B$ ($A\leq B$) if $A-B$ is positive (negative) semi-definite and $A > B$ ($A < B$) if $A-B$ is positive (negative) definite.
For a Hermitian operator $A$ with spectral decomposition $A = \sum_i \lambda_i P_i$ with eigenvalues $\{\lambda_i\}_i$ and orthogonal projections onto the associated eigenspaces $\{P_i\}_i$, we write
\begin{equation}
    \{A > 0\}:=\sum_{i\colon\lambda_i > 0}P_i,\ws \{A < 0\}:=\sum_{i\colon\lambda_i < 0}P_i
\end{equation}
for the projections onto the eigenspaces associated with positive and negative eigenvalues respectively. Finally, for $t\geq0$ define the operators
\begin{equation}
    P_{t,+} := \{\rho-t\sigma > 0\},\ws P_{t,-} := \{\rho-t\sigma < 0\},\ws P_{t,0}:=\Id - P_{t,+} - P_{t,-}.
\end{equation}
Helstrom operators are then defined as
\begin{equation}
    M_t := P_{t,+} + X_t,\ws 0 \leq X_t \leq P_{t,0}.
\end{equation}
Finally, recall the SDP $\beta^*$ (Eq.~\eqref{eq:type2_error_sdp} in the main part) in the Neyman-Pearson approach to quantum hypothesis testing which, for $\alpha_0\in[0,\,1]$, is defined as
\begin{equation}
	\label{eq:type2_error_sdp_apx}
    \begin{aligned}
        \beta^*_{\alpha_0}(\sigma,\,\rho) := \mathrm{minimize}\ws&\beta(M;\,\rho)\\
        \mathrm{s.t.}\ws&\alpha(M;\,\sigma) \leq \alpha_0,\\
        \ws & 0 \leq M \leq \Id_d.
    \end{aligned}
\end{equation}
The following Lemmas lead to an explicit construction of Helstrom operators attaining a preassigned level of type-I error probability $\alpha_0$ and which are optimizers of the SDP $\beta^*$ (Eq.~\eqref{eq:type2_error_sdp} in the main part).
We first show that if a sequence of bounded Hermitian operators $A_n$ converges in operator norm to a bounded Hermitian operator $A$ from above (below), then the projection $\{A_n < 0\}$ ($\{A_n > 0\}$) converges to $\{A < 0\}$ ($\{A > 0\}$) in operator norm. This subsequently allows us to show that the function $t\mapsto \alpha(P_{t,+})$ is non-increasing and continuous from the right, and that $t\mapsto \alpha(P_{t,+} + P_{t,0})$ is non-increasing and continuous from the left. As a consequence, for $\alpha_0 \in [0,\,1]$, the quantity
\begin{equation}
    \tau(\alpha_0):= \inf\{t\geq 0\colon\,\alpha(P_{t,+}) \leq \alpha_0\}
\end{equation}
is well defined. This implies the chain of inequalities
\begin{equation}
    \alpha\left(P_{\tau(\alpha_0),+}\right) \leq \alpha_0 \leq \alpha\left(P_{\tau(\alpha_0),+} + P_{\tau(\alpha_0),0}\right).
\end{equation}
Based on this, we construct a Helstrom operator $M_{\tau(\alpha_0)}$ according to
\begin{equation}
    M_{\tau(\alpha_0)}:=P_{\tau(\alpha_0),+} + q_0P_{\tau(\alpha_0),0},
    \ws 
    q_0 := 
        \begin{cases}
            \frac{\alpha_0 - \alpha(P_{\tau(\alpha_0),+})}{\alpha(P_{\tau(\alpha_0),0})}\,&\mathrm{if}\ \alpha\left(P_{\tau(\alpha_0),0}\right) \neq 0,\\
            0 &\mathrm{otherwise.}
        \end{cases}
\end{equation}
which attains the preassigned type-I error probability $\alpha_0$. We will show that these Helstrom operators are optimal for the SDP $\beta^*_{\alpha_0}(\sigma,\,\rho)$ in~\eqref{eq:type2_error_sdp_apx}, so that
\begin{equation}
    \label{eq:helstrom_optimality}
    \beta^*_{\alpha_0}(\sigma,\,\rho) = \beta(M_{\tau(\alpha_0)};\,\rho).
\end{equation}

\begin{lem}
    \label{lem:operator_trace_convergence}
    Denote by $\cB(\cH)$ the space of bounded linear operators acting on the finite dimensional Hilbert space $\cH$, $d:=\mathrm{dim}(\cH)<\infty$.
    Let $A\in\cB(\cH)$ and $\{A_n\}_{n\in\bN}\subset\cB(\cH)$ be Hermitian operators and suppose that $\norm[op]{A_n-A} \xrightarrow{n\to\infty}0$. Then, it holds that
    \begin{align}
        (i)&\ A - A_n \leq 0 \, \Rightarrow \, \norm[op]{\{A_n < 0\} - \{A < 0\}} \xrightarrow{n\to\infty}0,\\
        (ii)&\ A - A_n \geq 0 \, \Rightarrow \, \norm[op]{\{A_n > 0\} - \{A > 0\}} \xrightarrow{n\to\infty}0.
    \end{align}
\end{lem}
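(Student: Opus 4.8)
The plan is to reduce the convergence of the spectral projections to the norm-continuity of the continuous functional calculus, with the monotonicity hypothesis serving only to control the eigenvalues of $A$ that equal zero. I will prove $(i)$ in detail and then obtain $(ii)$ for free by applying $(i)$ to the pair $-A_n,\,-A$: indeed $\{A > 0\} = \{-A < 0\}$, the hypothesis $A - A_n \geq 0$ of $(ii)$ is exactly the hypothesis $(-A) - (-A_n) \leq 0$ of $(i)$ for the negated operators, and $\norm[op]{(-A_n) - (-A)} = \norm[op]{A_n - A} \to 0$.

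For $(i)$, assume $A_n \geq A$ (i.e.\ $A - A_n \leq 0$), and order the eigenvalues of a Hermitian operator increasingly as $\lambda_1(\cdot) \leq \cdots \leq \lambda_d(\cdot)$. I will invoke two standard facts on the finite-dimensional space $\cH$: Weyl's perturbation bound $\abs{\lambda_i(A_n) - \lambda_i(A)} \leq \norm[op]{A_n - A}$, and Weyl's monotonicity $\lambda_i(A_n) \geq \lambda_i(A)$ for every $i$ (valid because $A_n - A \geq 0$). Monotonicity is the crucial ingredient: any eigenvalue of $A$ equal to $0$ can only move up, so the corresponding eigenvalues of $A_n$ remain $\geq 0$ and never enter $\{A_n < 0\}$. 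The perturbation bound then shows that, for large $n$, the strictly negative eigenvalues of $A$ stay strictly negative and the positive ones stay positive. Hence for large $n$ the projection $\{A_n < 0\}$ has exactly the rank of $\{A < 0\}$, with no zero-eigenvalue of $A$ spuriously counted as negative.

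To convert this sign bookkeeping into norm convergence I will install a spectral gap. If $A$ has no negative eigenvalue, then by monotonicity neither does $A_n$, so $\{A_n < 0\} = 0 = \{A < 0\}$ and there is nothing to prove. Otherwise set $-\mu := \max\{\lambda_i(A) : \lambda_i(A) < 0\} < 0$ and fix levels $c_1 < c_2$ in $(-\mu,\,0)$. The analysis above guarantees that for all large $n$ every eigenvalue of $A_n$ lies either below $c_1$ (exactly the negative ones) or above $c_2$ (the nonnegative ones, including the images of the zero-eigenvalues). Choosing a fixed continuous $f \colon \bR \to [0,1]$ with $f \equiv 1$ on $(-\infty,\,c_1]$ and $f \equiv 0$ on $[c_2,\,\infty)$, I then have $f(A) = \{A < 0\}$ and $f(A_n) = \{A_n < 0\}$ for large $n$, since on each spectrum $f$ is the indicator of the negative part. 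Because the spectra of $A$ and all $A_n$ lie in a common compact interval, I can uniformly approximate $f$ there by polynomials; polynomial functional calculus is norm-continuous in the operator, so $\norm[op]{f(A_n) - f(A)} \to 0$, which is precisely $\norm[op]{\{A_n < 0\} - \{A < 0\}} \to 0$.

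The one genuinely delicate point — and the only place the hypotheses are used beyond mere norm convergence — is the treatment of zero eigenvalues of $A$: the projection onto $\{\,\cdot < 0\,\}$ is discontinuous exactly at operators with nontrivial kernel, and an arbitrary norm-small perturbation could push a zero eigenvalue to either side of $0$. The monotonicity $A_n \geq A$ forbids the downward direction, pinning those eigenvalues on the nonnegative side and thereby rendering the gap function $f$ simultaneously valid for $A$ and for all large-$n$ operators $A_n$. Everything else (Weyl's inequalities, Weierstrass approximation, continuity of polynomial functional calculus) is routine.
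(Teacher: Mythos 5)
Your proof is correct, and it takes a genuinely different route from the one in the paper. Both arguments share the same crucial ingredient — Weyl's monotonicity theorem, used to pin the zero eigenvalues of $A$ on the nonnegative side so that the index set $\{i\colon \lambda_i(A_n)<0\}$ stabilizes to $\{i\colon\lambda_i(A)<0\}$ for large $n$ — and both use Weyl's perturbation bound to localize the spectrum. Where you diverge is in converting this spectral bookkeeping into operator-norm convergence of the projections: the paper represents $\{A_n<0\}$ and $\{A<0\}$ as Riesz projections, i.e.\ contour integrals of the resolvent, and estimates their difference via the second resolvent identity together with explicit bounds $\norm[op]{R_\lambda(A)}=\max_k\abs{\lambda_k(A)-\lambda}^{-1}$ on the contours; you instead insert a gap function $f$ that agrees with the indicator of the negative half-line on the spectra of $A$ and of all large-$n$ operators $A_n$, and invoke Weierstrass approximation plus norm-continuity of the polynomial functional calculus. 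Your route is more elementary (no complex analysis) and arguably cleaner, but it is purely qualitative; the paper's contour-integral computation yields the explicit Lipschitz-type rate $\norm[op]{\{A_n<0\}-\{A<0\}}\leq \frac{q-(K-1)}{\delta}\norm[op]{A_n-A}$, where $\delta$ is controlled by the spectral gap of $A$ at $0$. Since the lemma is only used to establish one-sided continuity of $t\mapsto\alpha(P_{t,+})$ and $t\mapsto\alpha(P_{t,+}+P_{t,0})$, the qualitative statement suffices, so nothing is lost. Your reduction of part $(ii)$ to part $(i)$ by negating the operators is also valid and slightly tidier than the paper's ``analogous'' repetition.
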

\begin{proof}
    We first show that convergence in operator norm implies that the eigenvalues of $A_n$ converge towards the eigenvalues $A$. 
    For a linear operator $M$ let $\lambda_k(M)$ denote its $k$-th largest eigenvalue, $\lambda_1(M) \geq \ldots \geq \lambda_q(M)$, where $q \leq d$ is the number of distinct eigenvalues of $M$.
    By the minimax principle (e.g.~\cite{bhatia1997}, chapter 3),
    we can compute $\lambda_k$ for any Hermitian operator $M$ according to
    \begin{equation}
        \label{eq:minmax_eigenvalue}
        \lambda_k(M) = \max_{\substack{V\subseteq\cH\\\mathrm{dim}(V) = k}}\min_{\substack{\psi\in V\\\norm{\psi} = 1}}\bra{\psi}M\ket{\psi}.
    \end{equation}
    Now let $\varepsilon>0$ and let $n\in\bN$ large enough such that $\norm[op]{A_n - A} < \varepsilon$. Let $\ket{\psi}\in\cH$ be a normalized state and note that by the Cauchy-Schwartz inequality we have
    \begin{align}
        \abs{\bra{\psi}(A_n - A)\ket{\psi}} \leq \norm{(A_n - A)\psi}\norm{\psi} \leq \norm[op]{A_n-A}\norm{\psi}^2 < \varepsilon
    \end{align}
    and thus
    \begin{align}
        \bra{\psi}A\ket{\psi} - \varepsilon < \bra{\psi}A_n\ket{\psi} < \bra{\psi}A\ket{\psi} + \varepsilon.
    \end{align}
    Hence, for any fixed $k \geq 1$ and any subspace $V\subset \cH$ with $\mathrm{dim}(V) = k$, we have
    \begin{align}
        \min_{\substack{\psi\in V\\\norm{\psi} = 1}}\bra{\psi}A\ket{\psi} - \varepsilon < \min_{\substack{\psi\in V\\\norm{\psi} = 1}}\bra{\psi}A_n\ket{\psi} < \min_{\substack{\psi\in V\\\norm{\psi} = 1}}\bra{\psi}A\ket{\psi} + \varepsilon
    \end{align}
    and thus, from~(\ref{eq:minmax_eigenvalue}), we see that
    \begin{equation}
        \lambda_k(A) - \varepsilon < \lambda_k(A_n) < \lambda_k(A) - \varepsilon \Rightarrow \abs{\lambda_k(A) - \lambda_k(A_n)} < \varepsilon
    \end{equation}
    and hence 
    \begin{equation}
        \lambda_k(A_n) \xrightarrow{n\to\infty} \lambda_k(A),\hspace{2em}k=1,\,\ldots,\,q.
    \end{equation}
    Alternatively, this can be seen from Weyl's Perturbation Theorem (e.g.~\cite{bhatia1997}, ch. 3): namely, since $A$ and $A_n$ are Hermitian, it follows immediately from $\abs{\lambda_k(A) - \lambda_k(A_n)} \leq \max_k\abs{\lambda_k(A) - \lambda_k(A_n)} \leq \norm[op]{A -A_n}$ that eigenvalues converge provided that $\norm[op]{A_n - A} \to 0$. We will now make use of function theory and the resolvent formalism to show the convergence of the positive and negative eigenprojections. Let $M\in\cB(\cH)$ be Hermitian, let $\sigma(M)$ denote the spectrum of $M$ and, for $\lambda\in\bC\setminus\sigma(M)$, let
    \begin{equation}
        R_\lambda(M) := (M - \lambda\Id)^{-1} = -\sum_{k=0}^\infty \lambda^{-(k+1)}M^k
    \end{equation}
    be the resolvent of the operator $M$. The sum is the Neumann series and converges for $\lambda\in\bC\setminus\sigma(M)$. Since $M$ is Hermitian, we can write its spectral decomposition in terms of contour integrals over the resolvent
    \begin{equation}
        \begin{gathered}
            M = \sum_{k=1}^q \lambda_k(M) P_k\hspace{1em} \text{with} \hspace{1em} P_k = \frac{1}{2\pi \ci} \oint_{(\gamma_k,-)} R_\lambda(M) \,d\lambda,
            \hspace{1em} \text{and} \hspace{1em}\sum_{k=1}^q P_k = \Id
    \end{gathered}
    \end{equation}
    where $P_k$ is the orthogonal projection onto the $k$-th eigenspace and
    the integration is to be understood element-wise.
    The symbol $(\gamma_k,-)$ indicates that the contour encircles $\lambda_k(M)$ once negatively, but does not encircle any other eigenvalue of $M$.
    We refer the reader to~\cite{richtmeyer78} for a detailed derivation.\\
    
    We now show part $(i)$ of the Lemma.
    For ease of notation, let $\lambda_k$ denote the $k$-th eigenvalue of $A$ and $\lambda_{k,n}$ the $k$-th eigenvalue of $A_n$.
    Since $A_n$ and $A$ are Hermitian operators, we can write the eigenprojections $\{A_n < 0\}$ and $\{A < 0\}$ in terms of the resolvent as
    \begin{equation}
        \begin{gathered}
                \{A < 0\} = \frac{1}{2\pi \ci}\sum_{k\colon\lambda_k < 0} \oint_{(\gamma_k,-)} R_\lambda(A) \,d\lambda,\hspace{1em}
                \{A_n < 0\} = \frac{1}{2\pi \ci}\sum_{k\colon\lambda_{k,n} < 0} \oint_{(\gamma_{k,n},-)} R_\lambda(A_n) \,d\lambda
        \end{gathered}
    \end{equation}
    where the symbols $(\gamma_k,-)$ and $(\gamma_{k,n},-)$ indicate that the contours encircle only $\lambda_k$ and $\lambda_{k,n}$ once negatively and no other eigenvalues of $A$ and $A_n$ respectively.
    Since by assumption $A_n \geq A$ and $A_n,\,A$ are Hermitian, it follows from Weyl's Monotonicity Theorem that $\lambda_{k,n} \geq \lambda_k$.
    Let $\lambda_K$ be the largest negative eigenvalue of $A$, that is
    $\lambda_1 \geq \lambda_2 \geq \ldots \lambda_{K-1} \geq 0 > \lambda_K\geq \ldots \geq \lambda_q$.
    Note that if $A$ is positive semidefinite, then so is $A_n$ and the statement follows trivially from $\{A_n < 0\} = \{A < 0\} = 0$.
    Thus, without loss of generality, we can assume that at least one eigenvalue of $A$ is negative.
    Since $\lambda_{k,n} \geq \lambda_k$, and in particular $\lambda_{K-1,n} \geq \lambda_{K-1} \geq 0$, 
    there exists $N_0\in\bN$ large enough such that $\lambda_{K-1,n} \geq 0 > \lambda_{K,n}$ for all $n\geq N_0$.
    Let $r_0$ be the smallest distance between two eigenvalues of $A$
    \begin{align}
        r_0 := \min_k\abs{\lambda_k - \lambda_{k+1}}
    \end{align}
    and let $0 < \varepsilon < \frac{r_0}{2}$.
    Choose $N_1 \geq N_0$ large enough such that $\max_{k \geq K} \abs{\lambda_{k,n} - \lambda_k} < \varepsilon / 2$. 
    Let $0 < \delta < \frac{r_0}{2} - \varepsilon$ and for $k\geq K$ let $B_{\delta + \varepsilon}^k := B_{\delta + \varepsilon}(\lambda_k)$ be the open ball of radius $\delta + \varepsilon$ centered at $\lambda_k$. Note that $\partial B_{\delta + \varepsilon}^k$ encircles both $\lambda_{k,n}$ and $\lambda_{k}$.
    Then, for $k\geq K$ and $n\geq N_1$, the mappings
    \begin{align}
        &\lambda \mapsto R_{\lambda}(A),\hspace{1em} \lambda\in B_{\delta + \varepsilon}^k\setminus\{\lambda_k\},\\
        &\lambda \mapsto R_{\lambda}(A_n),\hspace{1em} \lambda\in B_{\delta + \varepsilon}^k\setminus\{\lambda_{k,n}\}
    \end{align}
    are holomorphic functions of $\lambda$ and each has an isolated (simple) singularity at $\lambda_k$ and $\lambda_{k,n}$ respectively.
    Let $\gamma_{k,n}$ be a contour around $\lambda_{k,n}$ encircling no other eigenvalue of $A_n$.
    Note that the contours $\gamma_{k,n}$ and $\partial B_{\delta + \varepsilon}^k$ are homotopic (in $B_{\delta + \varepsilon}^k\setminus\{\lambda_{k,n}\}$).
    Thus, for all $k\geq K$ and $n\geq N_1$, Cauchy's integral Theorem yields
    \begin{align}
        \oint_{\gamma_{k,n}} R_\lambda(A_n)\,d\lambda = \oint_{\partial B_{\delta + \varepsilon}^k} R_\lambda(A_n)\,d\lambda.
    \end{align}
    With this, we see that for $n\geq N_1$
    \begin{align}
        \{A_n < 0\} - \{A < 0\} &= \frac{1}{2\pi \ci}\sum_{k=K}^q\left(\oint_{\partial B_{\delta + \varepsilon}^k}R_\lambda(A)\,d\lambda - \oint_{\gamma_{k,n}}R_\lambda(A_n)\,d\lambda\right)\\
        &=\frac{1}{2\pi \ci}\sum_{k=K}^q\left(\oint_{\partial B_{\delta + \varepsilon}^k}\left(R_\lambda(A)-R_\lambda(A_n)\right)\,d\lambda\right)
    \end{align}
    and thus, by the triangle inequality
    \begin{align}
        \norm[op]{\{A_n < 0\} - \{A < 0\}} &\leq \frac{1}{2\pi}\sum_{k=K}^q\norm[op]{\oint_{\partial B_{\delta + \varepsilon}^k}\left(R_\lambda(A)-R_\lambda(A_n)\right)\,d\lambda}\\
        &\hspace{-2em}\leq \frac{1}{2\pi}\sum_{k=K}^q \sup_{\lambda\in\partial B_{\delta + \varepsilon}^k}\norm[op]{R_\lambda(A) - R_\lambda(A_n)}\cdot 2\pi \cdot (\delta + \varepsilon)\\
        &\hspace{-2em}\leq (q-(K-1))\cdot (\delta + \varepsilon)\cdot \max_{k\geq K}\left( \sup_{\lambda\in\partial B_{\delta + \varepsilon}^k}\norm[op]{R_\lambda(A) - R_\lambda(A_n)}\right).
    \end{align}
    Furthermore, for any $k$ and $\lambda\in\partial B_{\delta + \varepsilon}^k$, the second resolvent identity yields
    \begin{align}
        \begin{split}
            \norm[op]{R_\lambda(A) - R_\lambda(A_n)} &= \norm[op]{R_\lambda(A)(A-A_n)R_\lambda(A)}\\
            &\leq \norm[op]{A-A_n}\cdot\norm[op]{R_\lambda(A)}\norm[op]{R_\lambda(A_n)}.
        \end{split}\label{eq:resolvent_convergence_1}
    \end{align}
    We now show that the supremum over $\lambda\in\partial B_{\delta + \varepsilon}^k$ in the right hand side of~(\ref{eq:resolvent_convergence_1}) is bounded.
    Since both $A$ and $A_n$ are Hermitian, it follows that their resolvent is normal and bounded for $\lambda \in \bC\setminus\sigma(A_n)$ and $\lambda\in \bC\setminus\sigma(A)$ respectively.
    The operator norm is thus given by the spectral radius,
    \begin{equation}
        \norm[op]{R_\lambda(A)} = \max_k \abs{\lambda_k(R_\lambda(A))}
        \hspace{1em}\text{and}\hspace{1em}
        \norm[op]{R_\lambda(A_n)} = \max_k \abs{\lambda_k(R_\lambda(A_n))}.
    \end{equation}
    Note that the eigenvalues of $R_\lambda(A)$ are given by $(\lambda_k(A) - \lambda)^{-1}$. To see this, let $\lambda\in\bC\setminus\sigma(A)$ and consider
    \begin{align}
        \mathrm{det}\left(R_\lambda(A) - \mu\Id\right) &= \mathrm{det}\left((A - \lambda\Id)^{-1}\left(\Id - (A - \lambda\Id)\mu\Id\right)\right)\\
        &\propto \mathrm{det}\left(\Id - (A - \lambda\Id)\mu\Id\right) = (-\mu)^m\mathrm{det}\left(A - (\mu^{-1} + \lambda)\Id\right).
    \end{align}
    Since $\mathrm{det}(R_\lambda(A)) \neq 0$ it follows that $\mu=0$ can not be an eigenvalue. Thus, eigenvalues of $R_\lambda(A)$ satisfy
    \begin{equation}
        \frac{1}{\mu} + \lambda = \lambda_k(A) \Rightarrow \mu = \frac{1}{\lambda_k(A) - \lambda}.
    \end{equation}
    The same reasoning yields an expression for eigenvalues of $R_\lambda(A_n)$. Thus
    \begin{equation}
        \norm[op]{R_\lambda(A)} = \max_k \frac{1}{\left|\lambda_k(A) - \lambda\right|}
        \hspace{1em}\text{and}\hspace{1em}
        \norm[op]{R_\lambda(A_n)} = \max_k \frac{1}{\left|\lambda_k(A_n) - \lambda\right|}.
    \end{equation}
    Note that, by the definition of $\delta$, for $\lambda\in\partial B_{\delta + \varepsilon}^k$, the eigenvalue of $A$ which is nearest to $\lambda$ is given by $\lambda_k(A)$. 
    Since this is exactly the center of the ball $B_{\delta + \varepsilon}$, it follows that
    \begin{equation}
        \sup_{\lambda \in \partial B_{\delta + \varepsilon}^k}\norm[op]{R_\lambda(A)} = \frac{1}{\delta + \varepsilon} < \infty.
    \end{equation}
    Similarly, for $\lambda\in\partial B_{\delta + \varepsilon}^k$, the eigenvalue of $A_n$ which is nearest to $\lambda$ is given $\lambda_k(A_n)$ since $n$ was chosen large enough such that $\left|\lambda_k(A_n) - \lambda_k(A)\right| < \varepsilon$ and $\varepsilon < r_0 / 2$.
    Since $\delta < \frac{r_0}{2} - \varepsilon$, it follows that the smallest distance from $\partial B_{\delta + \varepsilon}^k$ to $\lambda_k(A_n)$ is exactly $\delta$ and thus
    \begin{equation}
        \sup_{\lambda \in \partial B_{\delta + \varepsilon}^k}\norm[op]{R_\lambda(A_n)} = \frac{1}{\delta} < \infty.
    \end{equation}
    Hence, we find that the RHS in~(\ref{eq:resolvent_convergence_1}) is bounded by $\frac{\norm[op]{A_n - A}}{(\delta + \varepsilon)\delta}$ for $\lambda\in\partial B_{\delta + \varepsilon}^k$.
    Finally, this yields
    \begin{align}
        \norm[op]{\{A_n < 0\} - \{A < 0\}} &\leq
        (\delta + \varepsilon) (q - (K-1)) \max_{k \geq K} \left(\sup_{\lambda\in\partial B_{\delta + \varepsilon}^k}\norm[op]{R_\lambda(A) - R_\lambda(A_n)}\right)\\
        &\leq (\delta + \varepsilon) (q - (K-1)) \norm[op]{A_n - A}\frac{1}{\delta + \varepsilon}\frac{1}{\delta}\\
        &= \frac{q-(K-1)}{\delta}\norm[op]{A_n - A} \xrightarrow{n\to\infty}0.
    \end{align}
    In an analogous way we can show that
    \begin{align}
        \norm[op]{\{A_n > 0\} - \{A > 0\}} \leq \frac{R}{\delta}\norm[op]{A_n - A} \xrightarrow{n\to\infty}0
    \end{align}
    where $R$ denotes the index of the smallest positive eigenvalue of $A$.
    This concludes the proof.
\end{proof}
\begin{lem}
    \label{lem:alpha_non_increasing}
    The functions $t\mapsto \alpha(P_{t,+})$ and $t\mapsto \alpha(P_{t,+} + P_{t,0})$ are non-increasing.
\end{lem}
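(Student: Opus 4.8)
The plan is to derive both monotonicity statements from a single exchange (``four-point'') argument, exploiting the fact that each of the relevant projections is a \emph{maximizer} of a linear functional that depends affinely on $t$. The variational characterization underlying everything is the following: for a Hermitian operator $A$ with spectral decomposition $A = \sum_i \lambda_i P_i$, the functional $M \mapsto \Tr{AM}$ on the feasible set $\{0 \leq M \leq \Id\}$ is maximized exactly by the operators $M = P_+ + X$, where $P_+ := \{A > 0\}$, $0 \leq X \leq P_0$, and $P_0$ is the projection onto $\ker A$; the maximal value is $\sum_{i\colon\lambda_i > 0}\lambda_i\Tr{P_i}$. Applying this with $A = \rho - t\sigma$ shows that both $P_{t,+}$ and $P_{t,+}+P_{t,0}$ maximize $M \mapsto \Tr{(\rho - t\sigma)M}$ over $\{0 \leq M \leq \Id\}$.

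Next, fixing $0 \leq s < t$ and writing $A_s := \rho - s\sigma$, $A_t := \rho - t\sigma$, I would record two optimality inequalities. Since $P_{s,+}$ is optimal for $A_s$ while $P_{t,+}$ is a feasible competitor, we have $\Tr{A_s P_{s,+}} \geq \Tr{A_s P_{t,+}}$; symmetrically $\Tr{A_t P_{t,+}} \geq \Tr{A_t P_{s,+}}$. Adding these and collecting terms yields $\Tr{(A_s - A_t)(P_{s,+} - P_{t,+})} \geq 0$. The crucial structural input is that $A_s - A_t = (t-s)\sigma$ with $\sigma \geq 0$, so this reduces to $(t-s)\,\Tr{\sigma(P_{s,+}-P_{t,+})} \geq 0$; dividing by $t-s>0$ gives $\alpha(P_{s,+}) \geq \alpha(P_{t,+})$, which is precisely the statement that $t \mapsto \alpha(P_{t,+})$ is non-increasing.

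For the second function I would repeat the argument verbatim with the maximal maximizer $Q_t := P_{t,+} + P_{t,0}$ in place of $P_{t,+}$. The only thing to check is that $Q_t$ is genuinely a maximizer of $M \mapsto \Tr{A_t M}$, which holds because $P_{t,0}$ projects onto $\ker A_t$ and hence contributes nothing to the trace. The same pair of optimality inequalities then delivers $\Tr{(A_s - A_t)(Q_s - Q_t)} \geq 0$, i.e. $(t-s)\,\Tr{\sigma(Q_s - Q_t)} \geq 0$, so $\alpha(P_{s,+}+P_{s,0}) \geq \alpha(P_{t,+}+P_{t,0})$.

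The one point deserving care — and where I expect the only real subtlety to sit — is the optimality characterization itself, specifically confirming that adjoining any $X$ with $0 \leq X \leq P_{t,0}$ leaves the value at its maximum, so that $Q_t$ really is a maximizer and the exchange argument transfers unchanged. This is elementary once one notes the zero-eigenvalue block contributes nothing, but it is worth stating explicitly. An alternative, slightly slicker route is to observe that $g(t) := \max_{0 \le M \le \Id}\Tr{(\rho - t\sigma)M}$ is a supremum of affine-in-$t$ functions, hence convex, and to identify $-\alpha(P_{t,+})$ and $-\alpha(P_{t,+}+P_{t,0})$ with the right and left derivatives of $g$ via an envelope argument; monotonicity of the one-sided derivatives of a convex function then yields both claims simultaneously. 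I would nevertheless favour the elementary exchange argument above, as it avoids any appeal to the differentiability of $g$.
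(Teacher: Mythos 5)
Your proof is correct and follows essentially the same route as the paper: the paper also uses the variational fact that $\Tr{A\{A>0\}} = \Tr{A\{A\geq 0\}} \geq \Tr{AT}$ for all $0 \leq T \leq \Id$, writes down the two optimality inequalities for $T_s = \rho - s\sigma$ and $T_t = \rho - t\sigma$ with the competitors swapped, and combines them to get $(t-s)\,\Tr{\sigma(P_{s,+}-P_{t,+})} \geq 0$, handling the second function by the identical argument with $\{\cdot \geq 0\}$ in place of $\{\cdot > 0\}$. Your convexity/envelope remark is a nice alternative framing but, as you note, unnecessary.
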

\begin{proof}
    Let $0 \leq s < t$. We need to show that $\Tr{\sigma P_{s,+}} \geq \Tr{\sigma P_{t,+}}$ and $\Tr{\sigma (P_{s,+}+P_{s,0})} \geq \Tr{\sigma (P_{t,+}+P_{t,0})}$. Note that for any Hermitian operator $A$ on and for any Hermitian operator $0 \leq T \leq \Id$ we have
    \begin{equation}
        \Tr{A\{A \geq 0\}} = \Tr{A\{A > 0\}} \geq \Tr{A\cdot T}
    \end{equation}
    We first show $\Tr{\sigma P_{s,+}} \geq \Tr{\sigma P_{t,+}}$. Write $T_s := \rho - s\sigma$ and $T_t := \rho - t\sigma$ and note that the eigenprojections are Hermitian and satisfy
    \begin{equation}
        0 \leq \{T_s > 0\}\leq \Id,
        \ws\text{and}\ws
        \ws 0 \leq \{T_t > 0\} \leq \Id
    \end{equation}
    It follows that
    \begin{equation}
        \label{eq:monotonicity1}
        \Tr{T_t\{T_t > 0\}} \geq \Tr{T_t\{T_s > 0\}}
    \end{equation}
    and similarly
    \begin{equation}
        \label{eq:monotonicity2}
        \Tr{T_s\{T_s > 0\}} \geq \Tr{T_s\{T_t > 0\}}.
    \end{equation}
    Combining~(\ref{eq:monotonicity1}) and~(\ref{eq:monotonicity2}) yields
    \begin{align}
            t\cdot \left(\Tr{\sigma\{T_s > 0\}} - \Tr{\sigma\{T_t > 0\}}\right)&\geq
            s\cdot\left(\Tr{\sigma\{T_s > 0\}} - \Tr{\sigma\{T_t > 0\}}\right).
    \end{align}
    Since $s < t$ it follows that $\Tr{\sigma\{T_s > 0\}} \geq \Tr{\sigma\{T_t > 0\}}$ and thus $\alpha(P_{s,+}) \geq \alpha(P_{t,+})$.
    The other statement follows analogously by replacing $\{T_t > 0\}$ and $\{T_s > 0\}$ by $\{T_t \geq 0\}$ and $\{T_s \geq 0\}$. 
    This concludes the proof.
\end{proof}
\begin{lem}
    \label{lem:alpha_right_continuous}
    The function $t\mapsto \alpha(P_{t,+})$ is continuous from the right.
\end{lem}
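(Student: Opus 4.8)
The plan is to prove right-continuity sequentially, reducing everything to part $(ii)$ of Lemma~\ref{lem:operator_trace_convergence}. I would fix $t_0 \geq 0$ and let $t_n \downarrow t_0$ be an arbitrary sequence with $t_n > t_0$ converging to $t_0$ from above; since the domain carries the usual metric topology, by the sequential characterization of one-sided limits it suffices to show $\alpha(P_{t_n,+}) \to \alpha(P_{t_0,+})$ for every such sequence. I then set $A_n := \rho - t_n\sigma$ and $A := \rho - t_0\sigma$, so that $P_{t_n,+} = \{A_n > 0\}$ and $P_{t_0,+} = \{A > 0\}$.

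The crux of the argument is the sign of the difference $A - A_n = (t_n - t_0)\sigma$. Because $\sigma$ is a density operator we have $\sigma \geq 0$, and since $t_n > t_0$ this gives $A - A_n \geq 0$, which is precisely the hypothesis needed to invoke implication $(ii)$ of Lemma~\ref{lem:operator_trace_convergence}. This is exactly where the asymmetry between right- and left-continuity originates: approaching $t_0$ from above shrinks $\rho - t\sigma$ and thus feeds the positive-eigenprojection case $(ii)$, whereas approaching from below would feed the negative-eigenprojection case $(i)$ relevant to the companion function $t \mapsto \alpha(P_{t,+}+P_{t,0})$. Moreover $\norm[op]{A_n - A} = (t_n - t_0)\norm[op]{\sigma} \to 0$, so the operator-norm convergence hypothesis of the lemma also holds.

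Applying Lemma~\ref{lem:operator_trace_convergence}$(ii)$ then yields $\norm[op]{P_{t_n,+} - P_{t_0,+}} \to 0$. It remains to transfer this to convergence of $\alpha$, which follows from the continuity of the trace pairing in finite dimensions: by H\"older's inequality for Schatten norms,
\begin{equation}
    \abs{\alpha(P_{t_n,+}) - \alpha(P_{t_0,+})} = \abs{\Tr{\sigma(P_{t_n,+} - P_{t_0,+})}} \leq \norm[1]{\sigma}\,\norm[op]{P_{t_n,+} - P_{t_0,+}} \to 0,
\end{equation}
using that $\norm[1]{\sigma} = \Tr{\sigma} = 1$. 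Since the sequence $t_n \downarrow t_0$ was arbitrary, this establishes $\lim_{t\downarrow t_0}\alpha(P_{t,+}) = \alpha(P_{t_0,+})$, i.e.\ right-continuity.

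I do not anticipate a serious obstacle: once the correct sign $A - A_n \geq 0$ is identified, the statement is a direct application of the preceding lemma together with continuity of the trace. The only point requiring minor care is matching the direction of the limit to the correct implication of Lemma~\ref{lem:operator_trace_convergence}, so that it is the positive eigenprojection $\{\,\cdot > 0\,\}$ (rather than $\{\,\cdot < 0\,\}$) whose convergence is guaranteed.
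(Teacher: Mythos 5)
Your proposal is correct and follows essentially the same route as the paper's proof: both reduce right-continuity to a sequence $t_n \downarrow t_0$, observe that $A - A_n = (t_n - t_0)\sigma \geq 0$ with $\norm[op]{A_n - A} \to 0$, invoke part $(ii)$ of Lemma~\ref{lem:operator_trace_convergence}, and pass from operator-norm convergence of the projections to convergence of $\alpha$. The only cosmetic difference is that you justify the last step via H\"older's inequality for Schatten norms, whereas the paper simply cites convergence in the weak operator topology.
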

\begin{proof}
    Let $t\geq 0$ and let $\{t_n\}_{n\in\bN}\subseteq[0,\,\infty)$ be a sequence such that $t_n\downarrow t$ (i.e. $t_n$ converges to $t$ from above). We show that $\lim_{n\to\infty}\alpha(P_{t_n,+}) = \alpha(P_{t,+})$. Define the operators
    \begin{equation}
        A_n := \rho -t_n\sigma,
        \ws
        A := \rho -t\sigma
    \end{equation}
    and note that
    \begin{equation}
        \norm[op]{A_n-A} = \abs{t_n-t}\cdot\norm[op]{\sigma} \xrightarrow{n\to\infty}0.
    \end{equation}
    Since, in addition, both $A_n$ and $A$ are Hermitian and $A - A_n = (t_n - t)\sigma \geq 0$, it follows from the second part of Lemma~\ref{lem:operator_trace_convergence} that
    \begin{equation}
        \norm[op]{\{A_n > 0\} - \{A > 0\}} \xrightarrow{n\to\infty}0
    \end{equation}
    and thus
    \begin{equation}
        \alpha(P_{t_n,+}) = \Tr{\sigma\{A_n > 0\}} \xrightarrow{n\to\infty} \Tr{\sigma\{A > 0\}} = \alpha(P_{t,+})
    \end{equation}
    since operator norm convergence implies convergence in the weak operator topology. This concludes the proof.
\end{proof}
\begin{lem}
    \label{lem:alpha_left_continuous}
    The function $t\mapsto \alpha(P_{t,+} + P_{t,0})$ is continuous from the left.
\end{lem}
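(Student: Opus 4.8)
The plan is to mirror the proof of Lemma~\ref{lem:alpha_right_continuous}, exploiting the observation that $P_{t,+} + P_{t,0}$ is precisely the projection onto the non-negative eigenspace of $\rho - t\sigma$, i.e. $P_{t,+} + P_{t,0} = \Id - P_{t,-}$. Since $\sigma$ is a density operator with $\Tr{\sigma} = 1$, this immediately gives the identity
\[
\alpha(P_{t,+} + P_{t,0}) = \Tr{\sigma(\Id - P_{t,-})} = 1 - \alpha(P_{t,-}).
\]
Hence left-continuity of $t \mapsto \alpha(P_{t,+} + P_{t,0})$ is equivalent to left-continuity of $t\mapsto\alpha(P_{t,-})$, and it suffices to establish the latter.

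Next I would fix $t \geq 0$ and pick any sequence $\{t_n\}_{n\in\bN}\subseteq[0,\infty)$ with $t_n\uparrow t$ (converging from below). Setting $A_n := \rho - t_n\sigma$ and $A := \rho - t\sigma$, one has $\norm[op]{A_n - A} = \abs{t_n - t}\cdot\norm[op]{\sigma} \to 0$. The crucial sign check is that, because $t_n < t$ and $\sigma \geq 0$, we have $A - A_n = (t_n - t)\sigma \leq 0$; this is exactly the hypothesis of part $(i)$ of Lemma~\ref{lem:operator_trace_convergence} (in contrast to the right-continuous case, where $t_n \downarrow t$ forces $A - A_n \geq 0$ and invokes part $(ii)$). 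Applying that part therefore yields $\norm[op]{\{A_n < 0\} - \{A < 0\}} \to 0$, that is, $P_{t_n,-} \to P_{t,-}$ in operator norm.

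Finally, since operator-norm convergence implies convergence in the weak operator topology, I would conclude
\[
\alpha(P_{t_n,-}) = \Tr{\sigma P_{t_n,-}} \xrightarrow{n\to\infty} \Tr{\sigma P_{t,-}} = \alpha(P_{t,-}),
\]
and hence $\alpha(P_{t_n,+} + P_{t_n,0}) = 1 - \alpha(P_{t_n,-}) \to 1 - \alpha(P_{t,-}) = \alpha(P_{t,+} + P_{t,0})$, which proves left-continuity.

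The argument is essentially routine once the reduction to $P_{t,-}$ is made; the only point requiring care---and the main place the proof departs from Lemma~\ref{lem:alpha_right_continuous}---is matching the correct monotonicity direction. Approaching $t$ from below makes $A_n$ dominate $A$, so one must track the \emph{negative} eigenprojection and invoke part $(i)$ rather than part $(ii)$ of Lemma~\ref{lem:operator_trace_convergence}. No new analytic input beyond that lemma is needed.
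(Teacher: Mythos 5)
Your proof is correct and follows essentially the same route as the paper: take $t_n\uparrow t$, check that $A-A_n=(t_n-t)\sigma\leq 0$, invoke part $(i)$ of Lemma~\ref{lem:operator_trace_convergence} to get operator-norm convergence of the negative eigenprojections, and pass to the trace. The only difference is cosmetic---you first rewrite $\alpha(P_{t,+}+P_{t,0})$ as $1-\alpha(P_{t,-})$, whereas the paper writes $P_{t,+}+P_{t,0}=\Id-\{A<0\}$ directly inside the trace.
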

\begin{proof}
    Let $\{t_n\}_{n\in\bN}\subset[0,\infty)$ be a sequence of non-negative real numbers such that $t_n \uparrow t$ (i.e. $t_n$ converges to $t$ from below).
    Let $A_n$ and $A$ be the Hermitian operators defined by
    \begin{equation}
        A_n := \rho - t_n \sigma,\ws A := \rho - t \sigma
    \end{equation}
    and note that $A - A_n = (t_n - t)\sigma \leq 0$ and $\norm[op]{A_n - A} \rightarrow 0$ as $n\to \infty$. It follows from the first part of Lemma~\ref{lem:operator_trace_convergence} that
    \begin{align}
        \norm[op]{\{A_n < 0\}-\{A < 0\}} \xrightarrow{n\to\infty}0
    \end{align}
    and thus, since operator norm convergence implies convergence in the weak operator topology, we have
    \begin{equation}
        \alpha(P_{t_{n},+} + P_{t_n,0}) = \Tr{\sigma(\Id - \{A_n < 0\})} \xrightarrow{n\to\infty} \Tr{\sigma(\Id - \{A < 0\})} = \alpha(P_{t,+} + P_{t,0}).
    \end{equation}
    This concludes the proof.
\end{proof}
\begin{lem}
    \label{lem:sandwich}
    For $\alpha_0\in[0,\,1]$ we have the chain of inequalities
    \begin{equation}
    %   $
        \alpha\left(P_{\tau(\alpha_0),+}\right) \leq \alpha_0 \leq \alpha\left(P_{\tau(\alpha_0),+} + P_{\tau(\alpha_0),0}\right).
    %   $
    \end{equation}
\end{lem}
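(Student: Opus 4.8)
The plan is to treat $\tau(\alpha_0)$ as a generalized-inverse (quantile) threshold for the non-increasing function $t\mapsto\alpha(P_{t,+})$ and to extract the two inequalities by approaching $\tau(\alpha_0)$ from the two sides, using the matching one-sided continuity established in Lemmas~\ref{lem:alpha_right_continuous} and~\ref{lem:alpha_left_continuous}. The key structural observation I would record first is that, by monotonicity (Lemma~\ref{lem:alpha_non_increasing}), the set $S := \{t \geq 0 : \alpha(P_{t,+}) \leq \alpha_0\}$ is an up-set: if $s\in S$ and $t > s$, then $\alpha(P_{t,+}) \leq \alpha(P_{s,+}) \leq \alpha_0$, so $t\in S$. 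Hence $S$ is a half-line with left endpoint $\tau(\alpha_0) = \inf S$; in particular every $t > \tau(\alpha_0)$ lies in $S$, while every $t < \tau(\alpha_0)$ does not.

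For the first inequality I would approach $\tau(\alpha_0)$ from above. Taking $t_n := \tau(\alpha_0) + 1/n \downarrow \tau(\alpha_0)$, each $t_n > \tau(\alpha_0)$ belongs to $S$, so $\alpha(P_{t_n,+}) \leq \alpha_0$ for all $n$. Right-continuity of $t\mapsto\alpha(P_{t,+})$ (Lemma~\ref{lem:alpha_right_continuous}) yields $\alpha(P_{t_n,+}) \to \alpha(P_{\tau(\alpha_0),+})$, and passing to the limit in the inequality gives $\alpha(P_{\tau(\alpha_0),+}) \leq \alpha_0$. (I would also note that $S$ is nonempty for the relevant $\alpha_0$, since $\alpha(P_{t,+})\to 0$ as $t\to\infty$, so that $\tau(\alpha_0) < \infty$.)

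For the second inequality I would approach from below, after first recording that $P_{t,0}\geq 0$ and $\sigma\geq 0$ give $\alpha(P_{t,+}+P_{t,0}) = \alpha(P_{t,+}) + \Tr{\sigma P_{t,0}} \geq \alpha(P_{t,+})$. Assuming $\tau(\alpha_0) > 0$, I pick $t_n \uparrow \tau(\alpha_0)$ with $0\leq t_n < \tau(\alpha_0)$; since $t_n\notin S$ we have $\alpha(P_{t_n,+}) > \alpha_0$, hence $\alpha(P_{t_n,+}+P_{t_n,0}) > \alpha_0$. Left-continuity of $t\mapsto\alpha(P_{t,+}+P_{t,0})$ (Lemma~\ref{lem:alpha_left_continuous}) then gives $\alpha(P_{t_n,+}+P_{t_n,0}) \to \alpha(P_{\tau(\alpha_0),+}+P_{\tau(\alpha_0),0})$, and since a limit of quantities strictly above $\alpha_0$ is at least $\alpha_0$, the claim follows. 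The boundary case $\tau(\alpha_0)=0$ I would dispatch directly: at $t=0$ the operator $\rho - 0\cdot\sigma = \rho$ is positive semidefinite, so $P_{0,-}=0$ and $P_{0,+}+P_{0,0}=\Id$, whence $\alpha(P_{0,+}+P_{0,0}) = \Tr{\sigma} = 1 \geq \alpha_0$.

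The main obstacle — and precisely the reason the two one-sided continuity lemmas are proved separately — is that $t\mapsto\alpha(P_{t,+})$ can genuinely jump down at $\tau(\alpha_0)$ when $\rho-\tau(\alpha_0)\sigma$ has a nontrivial kernel, so neither inequality can be obtained by substituting $t=\tau(\alpha_0)$ directly; one must approach the threshold from the correct side and invoke the continuity that holds on that side. The only bookkeeping requiring care is keeping the strict-versus-nonstrict inequalities consistent when passing to limits (a strict bound below $\alpha_0$ becomes nonstrict in the limit), and confirming the two edge cases ($\tau(\alpha_0)=0$ and nonemptiness of $S$) so that the statement holds on the full range $\alpha_0\in[0,1]$.
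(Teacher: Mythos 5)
Your proposal is correct and follows essentially the same route as the paper: the left inequality from monotonicity plus right-continuity of $t\mapsto\alpha(P_{t,+})$ (Lemmas~\ref{lem:alpha_non_increasing} and~\ref{lem:alpha_right_continuous}), and the right inequality by taking $t_n\uparrow\tau(\alpha_0)$, using $\alpha(P_{t_n,+}+P_{t_n,0})\geq\alpha(P_{t_n,+})>\alpha_0$ and left-continuity (Lemma~\ref{lem:alpha_left_continuous}), with the $\tau(\alpha_0)=0$ case dispatched via $P_{0,-}=\{\rho<0\}=0$. Your explicit up-set description of $S$ and the finiteness remark for $\tau(\alpha_0)$ are just slightly more detailed bookkeeping than the paper's version of the same argument.
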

\begin{proof}
    Recall that $\tau(\alpha_0) := \inf\{t\geq 0\colon\, \alpha(P_{t,+}) \leq \alpha_0\}$. Since, by Lemmas~\ref{lem:alpha_non_increasing} and~\ref{lem:alpha_right_continuous} the function $t\mapsto\alpha(P_{t,+})$ is non-decreasing and right-continuous, the left hand side of the inequality follows directly from the definition of $\tau(\alpha_0)$.
    
    We now show the right hand side of the inequality. Note that if $t:=\tau(\alpha_0)=0$, then $P_{t,-} = \{\rho < 0\} = 0$ and thus $P_{t,+} + P_{t,0} = \Id$ and $\alpha(P_{t,+} + P_{t,0}) = \Tr{\sigma}= 1 \geq \alpha_0$. Suppose that $\tau(\alpha_0)>0$ and let $\{t_n\}_{n\in\bN}\subset[0,\infty)$ be a sequence of non-negative real numbers such that $t_n \uparrow \tau(\alpha_0)$.
    Note that, since $t_n < \tau(\alpha_0)$ for all $n$ and $t\mapsto\alpha(P_{t,+})$ is non-increasing and right-continuous by Lemmas~\ref{lem:alpha_non_increasing} and~\ref{lem:alpha_right_continuous}, we have
    \begin{equation}
        \Tr{\sigma P_{t_n,+}} = \alpha(P_{t_{n},+}) > \alpha_0
    \end{equation}
    by definition of $\tau(\alpha_0)$.
    Furthermore, since the projection $P_{t_n,0}$ is positive semidefinite, it follows that
    \begin{align}
        \alpha(P_{t_{n},+} + P_{t_n,0}) = \Tr{\sigma (P_{t_n,+} + P_{t_n,0})} \geq \Tr{\sigma P_{t_n,+}} = \alpha(P_{t_{n},+}) > \alpha_0.
    \end{align}
    The Lemma now follows since by Lemma~\ref{lem:alpha_left_continuous} the function $t\mapsto \alpha(P_{t,+} + P_{t,0})$ is continuous from the left and hence
    \begin{equation}
        \alpha(P_{\tau(\alpha_0),+} + P_{\tau(\alpha_0),0}) = \lim_{n\to\infty} \alpha(P_{t_{n},+} + P_{t_n,0}) \geq \alpha_0
    \end{equation}
    which concludes the proof.
\end{proof}

% We first show the following Lemma which justifies the optimality of Helstrom measurements for asymmetric quantum hypothesis testing.
\section{Construction of Helstrom Operators and Optimality}
Given $\alpha_0\in[0,\,1]$, it follows from Lemma~\ref{lem:sandwich} that firstly $\alpha_0 - \alpha\left(P_{\tau(\alpha_0),+}\right) \geq 0$, and secondly, since $\alpha_0 - \alpha\left(P_{\tau(\alpha_0),+}\right) \leq \alpha\left(P_{\tau(\alpha_0),+} +P_{\tau(\alpha_0),0}\right) - \alpha\left(P_{\tau(\alpha_0),+}\right)$ we have that
\begin{equation}
    \frac{\alpha_0 - \alpha\left(P_{\tau(\alpha_0),+}\right)}{\alpha\left(P_{\tau(\alpha_0),0}\right)} \in [0,\,1]
\end{equation}
if $\alpha\left(P_{\tau(\alpha_0),0}\right) \neq 0$. It follows that, for any $\alpha_0\in[0,\,1]$, a Helstrom operator with type-I error probability $\alpha_0$ is given by
\begin{equation}
    \label{eq:helstrom_measurement}
    M_{\tau(\alpha_0)}:=P_{\tau(\alpha_0),+} + q_0P_{\tau(\alpha_0),0},
    \ws\ws\mathrm{where}\ws\ws
    q_0 := 
        \begin{cases}
            \frac{\alpha_0 - \alpha(P_{\tau(\alpha_0),+})}{\alpha(P_{\tau(\alpha_0),0})}\,&\mathrm{if}\ \alpha\left(P_{\tau(\alpha_0),0}\right) \neq 0,\\
            0 &\mathrm{otherwise.}
        \end{cases}
\end{equation}
To see that $M_{\tau(\alpha_0)}$ indeed has type-I error probability $\alpha_0$, note that if $q_0\neq 0$ then
\begin{equation}
    \alpha(M_{\tau(\alpha_0)}) = \Tr{\sigma M_{\tau(\alpha_0)}} = \Tr{\sigma P_{\tau(\alpha_0),+}} + q_0\Tr{\sigma P_{\tau(\alpha_0),0}} = \alpha_0.
\end{equation}
If on the other hand $q_0=0$, then, by Lemma~\ref{lem:sandwich}, we have $\alpha_0 = \alpha(P_{\tau(\alpha_0),+}) = \alpha(M_{\tau(\alpha_0)})$.
The following lemma shows optimality of the Helstrom operators, i.e. equation~\eqref{eq:helstrom_optimality}:
\begin{lem}
    \label{lem:helstrom_optimality}
    Let $t\geq0$ and let $M_t := P_{t,+} + X_t$ for $0\leq X_t \leq P_{t,0}$ be a Helstrom operator. Then, for any quantum hypothesis test $0\leq M \leq \Id$ for testing the null $\sigma$ against the alternative $\rho$, the following implications hold
    \begin{enumerate}
        \item[(i)] $\alpha(M) \leq \alpha(M_t) \ \Rightarrow \ \beta(M) \geq \beta(M_t)$
        \item[(ii)] $\alpha(M) \geq 1 - \alpha(M_t) \ \Rightarrow \ 1 - \beta(M) \geq \beta(M_t)$
    \end{enumerate}
\end{lem}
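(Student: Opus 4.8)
The plan is to establish a single master inequality relating the two error probabilities and then derive both implications from it, with part (ii) reducing to part (i) through the complementary test. The first and central step is to show
\begin{equation}
    \beta(M) - \beta(M_t) \geq t\big(\alpha(M_t) - \alpha(M)\big)
\end{equation}
for every test $0 \leq M \leq \Id$, which I would obtain from the positivity $\Tr{(\rho - t\sigma)(M_t - M)} \geq 0$. Writing $D := \rho - t\sigma = D_+ - D_-$ with $D_+ \geq 0$ supported on the range of $P_{t,+}$ and $D_- \geq 0$ supported on the range of $P_{t,-}$ (the zero eigenspace drops out since $D\,P_{t,0} = 0$), I treat the two contributions separately. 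Because $X_t \leq P_{t,0}$ is supported on the zero eigenspace, which is orthogonal to $D_+$, one has $\Tr{D_+ M_t} = \Tr{D_+ P_{t,+}} = \Tr{D_+} \geq \Tr{D_+ M}$, so $\Tr{D_+(M_t - M)} \geq 0$; and since $M_t$ has no overlap with the range of $D_-$, one has $\Tr{D_- M_t} = 0 \leq \Tr{D_- M}$, so $-\Tr{D_-(M_t - M)} \geq 0$. Summing and expanding $D = \rho - t\sigma$ using $\Tr{\rho M} = 1 - \beta(M)$ and $\Tr{\sigma M} = \alpha(M)$ then yields the master inequality.

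Given the master inequality, part (i) is immediate: $\alpha(M) \leq \alpha(M_t)$ makes the right-hand side non-negative since $t \geq 0$, hence $\beta(M) \geq \beta(M_t)$. For part (ii) I would pass to the complementary test $\Id - M$, which also satisfies $0 \leq \Id - M \leq \Id$ and obeys $\alpha(\Id - M) = 1 - \alpha(M)$ and $\beta(\Id - M) = 1 - \beta(M)$. The hypothesis $\alpha(M) \geq 1 - \alpha(M_t)$ is then exactly $\alpha(\Id - M) \leq \alpha(M_t)$, so applying part (i) to $\Id - M$ gives $\beta(\Id - M) \geq \beta(M_t)$, that is $1 - \beta(M) \geq \beta(M_t)$, as required.

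The main obstacle is the positivity $\Tr{(\rho - t\sigma)(M_t - M)} \geq 0$, and the only delicate point there is correctly absorbing the zero-eigenspace freedom $0 \leq X_t \leq P_{t,0}$: it is precisely the orthogonality of the supports of $X_t$, $D_+$, and $D_-$ that makes every $X_t$-dependent term vanish, so that the inequality holds uniformly over the whole family of Helstrom operators. The remainder is routine bookkeeping with the definitions of $\alpha$, $\beta$ and the duality $M \leftrightarrow \Id - M$.
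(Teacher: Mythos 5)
Your proposal is correct and follows essentially the same route as the paper: the master inequality $\beta(M)-\beta(M_t)\geq t\,(\alpha(M_t)-\alpha(M))$ is exactly what the paper establishes (there organized via the identity $M-M_t=(\Id-M_t)M-M_t(\Id-M)$, but reducing to the same two non-negative traces $\Tr{D_+(\Id-M)}$ and $\Tr{D_-M}$ after the $X_t$-terms vanish on the zero eigenspace), and part (ii) is obtained identically by applying part (i) to the complementary test $\Id-M$. No gaps.
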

\begin{proof}
    Let $\sum_i\lambda_i P_i$ be the spectral decomposition of the operator $\rho -t \sigma$ with orthogonal projections $\{P_i\}_i$ and associated eigenvalues $\{\lambda_i\}_i$.
    Recall that
    \begin{equation}
        P_{t,+} := \sum_{i\colon\lambda_i > 0} P_i\ws
        P_{t,-} := \sum_{i\colon\lambda_i < 0} P_i,\ws
        P_{t,0} := \Id - P_{t,+} - P_{t,-}.
    \end{equation}
    We notice that for any $0\leq X_t \leq P_{t,0}$ we have
    \begin{equation}
        \Tr{(\rho - t\sigma)X_{t}} = \Tr{(\rho-t\sigma)P_{t,+}X_t} + \Tr{(\rho-t\sigma)P_{t,-}X_t} \leq \Tr{(\rho-t\sigma)P_{t,+}P_{t,0}} = 0
    \end{equation}
    and
    \begin{equation}
        \Tr{(\rho - t\sigma)X_{t}} = \Tr{(\rho-t\sigma)P_{t,+}X_t} + \Tr{(\rho-t\sigma)P_{t,-}X_t} \geq \Tr{(\rho-t\sigma)P_{t,-}P_{t,0}} = 0
    \end{equation}
    and thus $\Tr{(\rho - t\sigma)P_{t,0}} = \Tr{(\rho - t\sigma)X_{t}} = 0$. For the sequel, let $\Bar{M}_t := \Id - M_t$ and $\Bar{M} := \Id - M$.

    We first show part $(i)$ of the statement. Multiplying with the identity yields
    \begin{equation}
        M - M_t = (\Bar{M}_t + M_t)M - M_t(\Bar{M} + M) = \Bar{M}_t M - M_t\Bar{M}
    \end{equation}
    and adding zero yields
    \begin{align}
        \rho(M - M_t) &= (\rho - t\sigma)(M - M_t) + t\sigma(M - M_t)\\ 
        &=(\rho - t\sigma)(\Bar{M}_t M - M_t\Bar{M}) + t\sigma(\Bar{M}_t M - M_t\Bar{M}).
    \end{align}
    We need to show that $\beta(M_t) - \beta(M) = \Tr{\rho(M - M_t)} \leq 0$. Notice that
    \begin{equation}
        \Tr{(\rho - t\sigma)\Bar{M}_t M} = - \Tr{(\rho - t\sigma)_- M} \leq 0
    \end{equation}
    and similarly
    \begin{equation}
        \Tr{(\rho - t\sigma)M_t \Bar{M}} = \Tr{(\rho - t\sigma)_+ \Bar{M}} \geq 0
    \end{equation}
    where the inequalities follow from $\Id \geq M \geq 0$. Finally, we see that
    \begin{align}
        \Tr{\rho(M - M_t)} &= \Tr{(\rho - t\sigma)(\Bar{M}_t M - M_t\Bar{M})} + t\cdot\Tr{\sigma(\Bar{M}_t M - M_t\Bar{M})}\\
        &\leq t\cdot \Tr{\sigma(\Bar{M}_t M - M_t\Bar{M})}\\
        &= t\cdot \Tr{\sigma(M - M_t)}\\
        &= t\cdot (\alpha(M) - \alpha(M_t)) \leq 0
    \end{align}
    where the last inequality follows from the assumption and $t\geq0$.

    Part $(ii)$ now follows directly from part $(i)$ by noting that $0 \leq M' := \Id - M \leq \Id$ and
    \begin{equation}
        \alpha(M) \geq 1 - \alpha(M_t) \Rightarrow \alpha(M') \leq \alpha(M_t) {\,\overset{(i)}{\Longrightarrow}\,} \beta(M_t) \leq \beta(M') = 1 - \beta(M).
    \end{equation}
    This concludes the proof.
\end{proof}

\section{Proof of Lemma~\ref{lem:pure_state_bound}}
\begin{customlem}{\ref{lem:pure_state_bound}}[restated]
    Let $\ket{\psi_\sigma},\,\ket{\psi_\rho}\in\cH$ and let $\cA$ be a quantum classifier. Suppose that for $k_A\in\cC$ and $p_{\mathrm{A}},\,p_{\mathrm{B}}\in[0,\,1]$, we have $k_A = \cA(\psi_\sigma)$ and suppose that the score function $\y$ satisfies~\eqref{eq:class_probs}.
    Then, it is guaranteed that $\cA(\psi_\rho)=\cA(\psi_\sigma)$ for any $\psi_\rho$ with
    \begin{equation}
        \abs{\braket{\psi_\sigma}{\psi_\rho}}^2 > \frac{1}{2}\left(1 + \sqrt{g(p_{\mathrm{A}},\,p_{\mathrm{B}})}\right),
    \end{equation}
    where the function $g$ is given by
    \begin{equation}
        \begin{aligned}
            g(p_{\mathrm{A}},\,p_{\mathrm{B}}) &= 1 - p_{\mathrm{B}} - p_{\mathrm{A}}(1-2p_{\mathrm{B}}) + \\
            &\hspace{4em}2\sqrt{p_{\mathrm{A}} p_{\mathrm{B}}(1-p_{\mathrm{A}})(1-p_{\mathrm{B}})}.
        \end{aligned}
    \end{equation}
    This condition is equivalent to~\eqref{eq:robustness_condition} and is hence both sufficient and necessary whenever $p_{\mathrm{A}} + p_{\mathrm{B}} = 1$.
\end{customlem}
\begin{proof}
    In order to prove the lemma, we show that the fidelity bound in eq.~\eqref{eq:fidelity_bound_pure} is equivalent to the SDP robustness condition~\eqref{eq:robustness_condition} from Theorem~\ref{thm:main} expressed in terms of type-II error probabilities.
    For that purpose, we first derive an expression for the Helstrom operators in terms of the squared overlap between $\ket{\psi_\rho}$ and $\ket{\psi_\sigma}$ and subsequently solve~\eqref{eq:robustness_condition} for $\abs{\braket{\psi_\sigma}{\psi_\rho}}^2$.
    For the sequel, let $\sigma=\ketbra{\psi_\sigma}{\psi_\sigma}$ and $\rho=\ketbra{\psi_\rho}{\psi_\rho}$.
    Recall that a Helstrom operator with type-I error probability $\alpha_0$ takes the form~\eqref{eq:helstrom_measurement} which reads
    \begin{equation}
        M_{\tau(\alpha_0)}:=P_{\tau(\alpha_0),+} + q_0P_{\tau(\alpha_0),0},
    \ws 
    q_0 := 
        \begin{cases}
            \frac{\alpha_0 - \alpha(P_{\tau(\alpha_0),+})}{\alpha(P_{\tau(\alpha_0),0})}\,&\mathrm{if}\,\alpha\left(P_{\tau(\alpha_0),0}\right) \neq 0,\\
            0 &\mathrm{otherwise.}
        \end{cases}
    \end{equation}
    where $\tau(\alpha_0):= \inf\{t\geq 0\colon\,\alpha(P_{t,+}) \leq \alpha_0\}$. We now proceed as follows. We first compute the spectral decomposition of the operator $\rho-t\sigma$ as a function of $t$. With this, we derive an expression for $\alpha(P_{t,+})$ and subsequently compute $\tau(\alpha_0)$. This yields an expression for the Helstrom operators with type-I error probabilities $1-p_{\mathrm{A}}$ and $p_{\mathrm{B}}$ which can then be used to solve inequality~\eqref{eq:robustness_condition} for the fidelity. 
    We thus start by solving the eigenvalue problem
    \begin{equation}
        \label{eq:eigenvalue_problem_pure}
        (\rho - t \sigma)\ket{\eta} = \eta \ket{\eta}.
    \end{equation}
    Since $\sigma$ and $\rho$ are both pure states, the operator
    $\rho - t\sigma$ is of rank at most $2$. It follows that there are at most two states $\ket{\eta_0}$ and $\ket{\eta_1}$ satisfying~(\ref{eq:eigenvalue_problem_pure}) with nonzero eigenvalues and, in addition, they are linear combinations of ${\ket{\psi_\sigma}}$ and $\ket{{\psi_\rho}}$
    \begin{equation}
        \ket{\eta_k} = z_{k,\sigma}{\ket{\psi_\sigma}} + z_{k,\rho}{\ket{\psi_\rho}},\hspace{2em} k=0,\,1
    \end{equation}
    with constants $z_{k,\sigma}$ and $z_{k,\rho}$ that are to be determined. Substituting this into~(\ref{eq:eigenvalue_problem_pure}) yields the system of equations
    \begin{align}
        \begin{split}
            z_{k,\rho} + \gamma z_{k,\sigma} &= \eta_k z_{k,\rho}\\
            -t\gamma^\dagger z_{k,\rho} - t z_{k,\sigma} &= \eta_k z_{k,\sigma}
        \end{split}
    \end{align}
    where $\gamma := {\braket{\psi_\rho}{\psi_\sigma}}$ is the overlap between states $\rho$ and $\sigma$.
    The two eigenvalues $\eta_k$ for which these equations possess nonzero solutions are given by
    \begin{equation}
        \begin{gathered}
            \eta_0 = \frac{1}{2}\left(1 - t\right) + R, \hspace{2em}\eta_1 = \frac{1}{2}\left(1 - t\right) - R,\\
            R = \sqrt{\frac{1}{4}\left(1 - t\right)^2 + t\left(1 - \abs{\gamma}^2\right)}
        \end{gathered}
    \end{equation}
    with $\eta_0 > 0$ and $\eta_1 \leq 0$. With the condition $\braket{\eta_k}{\eta_k} = 1$, the coefficients $z_{k,\sigma}$ and $z_{k,\rho}$ are then determined as
    \begin{equation}
        \begin{gathered}
            z_{k,\rho} = -\gamma A_k,\hspace{2em} z_{k,\sigma} = (1 -\eta_k) A_k,\\
            \abs{A_k}^{-2} = 2R\abs{\eta_k - 1 + \abs{\gamma}^2}.
        \end{gathered}
    \end{equation}
    Recall that $P_{t,+}=\sum\limits_{k\colon\eta_k>0}P_k$ and hence $P_{t,+} = \ketbra{\eta_0}{\eta_0}$. We thus obtain the expression
    \begin{align}
        \alpha(P_{t,+}) &= \Tr{\sigma P_{t,+}} 
        = \abs{\braket{\eta_0}{{\psi_\sigma}}}^2\\
        &= \abs{z_{0,\sigma} + z_{0,\rho}\gamma^*}^2\\
        &= \abs{A_0}^2\abs{1-\eta_0 - \abs{\gamma}^2}^2\\
        &= \frac{\eta_0 - 1 + \abs{\gamma}^2}{2R}
    \end{align}
    Substituting in the expressions for $\eta_0$ and $R$ yields
    \begin{align}
        \alpha(P_{t,+}) &= \dfrac{\frac{1}{2}\left(1 - t\right) + \sqrt{\frac{1}{4}\left(1 - t\right)^2 + t\left(1 - \abs{\gamma}^2\right)} - 1 + \abs{\gamma}^2}{2\sqrt{\frac{1}{4}\left(1 - t\right)^2 + t\left(1 - \abs{\gamma}^2\right)}}
        = \frac{1}{2}\left(1 - \frac{1 + t - 2\abs{\gamma}^2}{\sqrt{(1 + t)^2 - 4t\abs{\gamma}^2}}\right)
    \end{align}
    The next step is to compute $t_A=\tau(1-p_{\mathrm{A}})$ and $t_B=\tau(p_{\mathrm{B}})$.
    By Lemma~\ref{lem:alpha_non_increasing}, the function $t \mapsto g(t):=\alpha(P_{t,+})$ is non-increasing and thus attains its maximum at $t=0$
    \begin{equation}
        g(0) = \alpha(P_{0,+}) = \abs{\gamma}^2
    \end{equation}
    Furthermore, note that the only real non-negative discontinuity of $g$ is located at $t=1$ in the case where $\abs{\gamma}^2=1$. Since this corresponds to two identical states we exclude this case in the following and assume $\abs{\gamma} \in [0,\,1)$.
    Notice that, if $\alpha_0\geq \abs{\gamma}^2$, then we have that $\tau(\alpha_0)=0$. Otherwise, if $\alpha_0 < \abs{\gamma}^2$, then we have that $\alpha(P_{t,+}) = \alpha_0$ if $t\geq 0$ is the non-negative root of the polynomial
    \begin{equation}
        t \mapsto t^2 + 2t(1-2\abs{\gamma}^2) + \left(1-\frac{\abs{\gamma}^2(1-\abs{\gamma}^2)}{\alpha_0(1-\alpha_0)}\right)
    \end{equation}
    which is calculated as
    \begin{equation}
        t = 2\abs{\gamma}^2 - 1 - (2\alpha_0-1)\sqrt{\frac{\abs{\gamma}^2(1-\abs{\gamma}^2)}{\alpha_0(1-\alpha_0)}}.
    \end{equation}
    Thus, in summary, we find that $\tau(\alpha_0)$ is given by
    \begin{equation}
        \tau(\alpha_0) = 
            \begin{cases}
                2\abs{\gamma}^2 - 1 - (2\alpha_0-1)\sqrt{\frac{\abs{\gamma}^2(1-\abs{\gamma}^2)}{\alpha_0(1-\alpha_0)}} \ &\ \mathrm{if}\ \alpha_0 < \abs{\gamma}^2,\\
                0 \ &\ \mathrm{if}\ \alpha_0 \geq \abs{\gamma}^2.
            \end{cases}
    \end{equation}
    First, we notice that if $\abs{\gamma}^2 \leq \min\{p_{\mathrm{B}},\,1-p_{\mathrm{A}}\}$, then we have $t_A = \tau(1-p_{\mathrm{A}}) = 0$ and $t_B= \tau(p_{\mathrm{B}}) = 0$ and hence the robustness condition~(\ref{eq:robustness_condition}) cannot be satisfied. In the case where $\min\{p_{\mathrm{B}},\,1-p_{\mathrm{A}}\} < \abs{\gamma}^2 \leq \max\{p_{\mathrm{B}},\,1-p_{\mathrm{A}}\}$, then either $t_A = 0$ or $t_B = 0$. Without loss of generality, suppose that $t_A = 0$. Then the Helstrom operator takes the form ${M_{\tau(1-p_{\mathrm{A}})}}=\{\rho > 0\} + q_A(\Id - \{\rho > 0\} - \{\rho < 0\})$ and thus
    \begin{equation}
        \beta({M_{\tau(1-p_{\mathrm{A}})}}) = 1 - \Tr{\rho\left(\{\rho > 0\} + q_A(\Id - \{\rho > 0\} - \{\rho < 0\})\right)} = 0.
    \end{equation}
    In particular, it follows that the robustness condition {$\beta^*_{1-p_{\mathrm{A}}}(\sigma,\,\rho) + \beta^*_{p_{\mathrm{B}}}(\sigma,\,\rho) > 1$ cannot be satisfied}. 
    The same follows in the case where $t_B = 0$. 
    Finally, if $\abs{\gamma}^2 > \max\{p_{\mathrm{B}},\,1-p_{\mathrm{A}}\}$, then we have that $t_A > 0$ and $t_B > 0$.
    We notice that $\alpha(P_{t_A,+})=1-p_{\mathrm{A}}$ and $\alpha(P_{t_B,+})=p_{\mathrm{B}}$ and thus $M_{\tau(1-p_{\mathrm{A}})} = P_{t_A,+}$ and $M_{\tau(p_{\mathrm{B}})} = P_{t_B,+}$. Computing the type-II error for $M_{\tau(1-p_{\mathrm{A}})}$ yields
    \begin{align}
        \beta(M_{\tau(1-p_{\mathrm{A}})}) &= 1 - \Tr{\rho P_{t_A,+}} = 1 - \abs{\braket{\eta_0}{{\psi_\rho}}}^2 = 1 - \abs{A_0}^2\abs{\gamma}^2\abs{\eta_0}^2\\
        &=\abs{\gamma}^2(2p_{\mathrm{A}}-1)+(1-p_{\mathrm{A}})\left(1-2p_{\mathrm{A}}\sqrt{\frac{\abs{\gamma}^2(1-\abs{\gamma}^2)}{p_{\mathrm{A}}(1-p_{\mathrm{A}})}}\right)
    \end{align}
    where $\ket{\eta_0}$ corresponds to the eigenvector associated with the eigenvalue $\eta_0$ at $t=t_A$. Similarly, computing the type-II error for $M_{\tau(p_{\mathrm{B}})}$ yields
    \begin{align}
        \beta(M_{\tau(p_{\mathrm{B}})}) &= 1 - \Tr{\rho P_{t_B,+}} = 1 - \abs{\braket{\eta_0}{{\psi_\rho}}}^2 = 1 - \abs{A_0}^2\abs{\gamma}^2\abs{\eta_0}^2\\
        &=\abs{\gamma}^2(1-2p_{\mathrm{B}})+p_{\mathrm{B}}\left(1-2(1-p_{\mathrm{B}})\sqrt{\frac{\abs{\gamma}^2(1-\abs{\gamma}^2)}{p_{\mathrm{B}}(1-p_{\mathrm{B}})}}\right)
    \end{align}
    where $\ket{\eta_0}$ corresponds to the eigenvector associated with the eigenvalue $\eta_0$ at $t=t_B$. With these expressions, it follows that the robustness condition~(\ref{eq:robustness_condition}), i.e. $\beta(M_{\tau(1-p_{\mathrm{A}})}) + \beta(M_{\tau(p_{\mathrm{B}})}) > 1$, is equivalent to
    \begin{equation}
         {\abs{\braket{\psi_\sigma}{\psi_\rho}}^2} > \frac{1}{2}\left(1 + \sqrt{1 - p_{\mathrm{B}} - p_{\mathrm{A}}(1-2p_{\mathrm{B}}) + 2\sqrt{p_{\mathrm{A}} p_{\mathrm{B}}(1-p_{\mathrm{A}})(1-p_{\mathrm{B}})}}\right)
    \end{equation}
    {what concludes the proof.}
\end{proof}

\section{Proof of Corollary~\ref{cor:trace_bound_pure_mixed}}
\begin{customcor}{\ref{cor:trace_bound_pure_mixed}}[restated]
    Let $\sigma,\,\rho\in\cS(\cH)$ and suppose that $\sigma = \ketbra{\psi_\sigma}{\psi_\sigma}$ is pure. Let $\cA$ be a quantum classifier and suppose that for $k_A\in\cC$ and $p_{\mathrm{A}},\,p_{\mathrm{B}}\in[0,\,1]$, we have $k_A = \cA(\sigma)$ and suppose that the score function $\y$ satisfies~\eqref{eq:class_probs}.
    Then, it is guaranteed that $\cA(\rho)=\cA(\sigma)$ for any $\rho$ with
    \begin{align}
        T(\rho,\,\sigma) < \delta(p_{\mathrm{A}},\,p_{\mathrm{B}})\left(1-\sqrt{1-\delta(p_{\mathrm{A}},\,p_{\mathrm{B}})^2}\right)
    \end{align}
    where 
    $
    \delta(p_{\mathrm{A}},\,p_{\mathrm{B}}) =[\frac{1}{2}\left(1-g(p_{\mathrm{A}}, p_{\mathrm{B}})\right)]^\frac{1}{2}.
    $
\end{customcor}
\begin{proof}
    We denote the convex hull enclosed by the set of robust pure states as $\mathcal{C}:=\text{Conv}(\{\ketbra{\psi}{\psi}: \|\ketbra{\psi}{\psi}-\sigma\|_1<\delta(P_A,P_B)\})$. 
    Observe that any convex mixture $\rho=\sum_ip_i\ketbra{\psi_i}{\psi_i}$ with $\sum_ip_i=1$ of any sets of robust pure states $\{\ketbra{\psi_i}{\psi_i}\}\in \mathcal{C}$ must also be robust. Thus it suffices to prove condition \eqref{robustcondition:puremixed} implies $\rho\in\mathcal{C}$. Note that the boundary consisting of non-extreme points (which correspond to mixed-states) of $\mathcal{C}$ interfaces with the set $\mathcal{C^*}=\text{Conv}(\{\ketbra{\psi^*}{\psi^*}:\|\ketbra{\psi^*}{\psi^*}-\sigma\|_1=\delta(P_A,P_B)\})$. Thus, it suffices to compute the shortest distance $r$ from $\sigma$ to  $\mathcal{C^*}$, such that $r=\min_{\rho^*}\|\rho^*-\sigma\|_1$ where $\rho^*\in \mathcal{C}^*$, then $\|\rho-\sigma\|_1<r$ guarantees robustness. Further note that for every $\rho^*$, $\exists D_\sigma \rho^* D_\sigma^\dagger \in \mathcal{C}^*$, where $D_\sigma=2\sigma-\Id$, such that $\|\rho^*-\sigma\|_1=\|D_\sigma \rho^* D_\sigma^\dagger-\sigma\|_1$, and $\|p_1\rho^*+p_2D_\sigma \rho^* D_\sigma^\dagger-\sigma\|_1<\|\rho^*-\sigma\|_1$ for $p_1+p_2=1$, and $p_1\neq 0$, $p_2\neq 0$. Therefore to minimise the distance to $\sigma$, it suffices to require $\rho^*=D_\sigma \rho^* D_\sigma^\dagger$, a valid of which is $\rho^*=\frac{1}{2}(\ketbra{\psi^*}{\psi^*}+D_\sigma\ketbra{\psi^*}{\psi^*}D_\sigma^\dagger)$. As such we have
    \begin{align}
    r&=\|\sigma-\frac{1}{2}(\ketbra{\psi^*}{\psi^*}+D_\sigma\ketbra{\psi^*}{\psi^*}D_\sigma^\dagger)\|_1    \nonumber\\
    &=\|\ketbra{\psi^*}{\psi^*}-\sigma+2|\braket{{\psi_\sigma}}{\psi^*}|^2\sigma - \braket{{\psi_\sigma}}{\psi^*}\ketbra{{\psi_\sigma}}{\psi} - \braket{\psi}{{\psi_\sigma}}\ketbra{\psi}{{\psi_\sigma}}\|_1
    \end{align}
    Note that we have $\|\ketbra{\psi^*}{\psi^*}-\sigma\|_1=\delta(P_A, P_B)$ by definition and that
    \begin{align}
    \|2|\braket{{\psi_\sigma}}{\psi^*}|^2\sigma - \braket{{\psi_\sigma}}{\psi^*}\ketbra{{\psi_\sigma}}{\psi} - \braket{\psi}{{\psi_\sigma}}\ketbra{\psi}{{\psi_\sigma}}|_1
    &=|\braket{{\psi_\sigma}}{\psi^*}|\Tr{\ketbra{{\psi_\sigma}}{{\psi_\sigma}}+\ketbra{\psi^*}{\psi^*}-\braket{{\psi_\sigma}}{\psi^*}\ketbra{{\psi_\sigma}}{\psi^*}-\braket{\psi^*}{{\psi_\sigma}}\ketbra{\psi^*}{{\psi_\sigma}}}\nonumber\\
    &=|\braket{{\psi_\sigma}}{\psi^*}|\Tr{(\sigma-\ketbra{\psi^*}{\psi^*})(\sigma-\ketbra{\psi^*}{\psi^*})^\dagger}\nonumber\\
    &=|\braket{{\psi_\sigma}}{\psi^*}|\|\ketbra{\psi^*}{\psi^*}-\sigma\|_1\nonumber\\
    &=\delta(P_A, P_B)\sqrt{1-\frac{\delta(P_A, P_B)^2}{4}}.
    \end{align}
    Applying the reversed triangle inequality, we finally arrive at
    \begin{align}
    r &\ge \left|\|\sigma-\ketbra{\psi^*}{\psi^*}\|_1-\|2|\braket{{\psi_\sigma}}{\psi^*}|^2\sigma - \braket{{\psi_\sigma}}{\psi^*}\ketbra{{\psi_\sigma}}{\psi} - \braket{\psi}{{\psi_\sigma}}\ketbra{\psi}{{\psi_\sigma}}|_1\right| \nonumber\\
    &= \delta(P_A, P_B)\left(1-\sqrt{1-\frac{\delta(P_A, P_B)^2}{4}}\right).    
    \end{align}
\end{proof}

\section{Robustness with Depolarised Input States}
\begin{cor}[Depolarised single-qubit pure states]
    \label{cor:single_qubit_depolarized}
    Let $\ket{\psi_\sigma},\,\ket{\psi_\rho}\in\bC^2$ be single-qubit pure sates and let $\cE^{\mathrm{dep}}_p$ be a depolarising channel with noise parameter $p\in(0,\,1)$. Then, if $p_{\mathrm{A}} > 1/2$ and $p_{\mathrm{B}}=1-p_{\mathrm{A}}$, the robustness condition~(\ref{eq:robustness_condition}) for $\cE^{\mathrm{dep}}_p(\sigma)$ and $\cE^{\mathrm{dep}}_p(\rho)$ is equivalent to
    \begin{equation}
        \label{eq:depolarized_single_qubit_trace_condition}
        \frac{1}{2}\norm[1]{\ketbra{\psi_\sigma}{\psi_\sigma} - \ketbra{\psi_\rho}{\psi_\rho}} <
        % T(\rho,\,\sigma) < 
            \begin{cases}
                \sqrt{\frac{1}{2}-\sqrt{\frac{2p_{\mathrm{A}}(1-p_{\mathrm{A}}) - p(1-\frac{1}{2}p)}{2(1-p)^2}}} &\mathrm{if} \ \frac{1}{2} < p_{\mathrm{A}} \leq \frac{4-6p+3p^2}{4-4p+2p^2},\\
                \sqrt{\frac{p\cdot(2-p)\cdot(1-2p_{\mathrm{A}})^2}{8(1-p)^2\cdot(1-p_{\mathrm{A}})}} &\mathrm{if} \ \frac{4-6p+3p^2}{4-4p+2p^2} < p_{\mathrm{A}} \leq \frac{4-3p}{4-2p},\\
                1 &\mathrm{if} p_{\mathrm{A}} > \frac{4-3p}{4-2p}.\ 
            \end{cases}
    \end{equation}
\end{cor}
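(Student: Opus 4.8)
The plan is to follow the architecture of the proof of Corollary~\ref{cor:pure_states_robustness}, tracking how the depolarising channel deforms each quantity. Write $\tilde\sigma:=\cE^{\mathrm{dep}}_p(\sigma)$ and $\tilde\rho:=\cE^{\mathrm{dep}}_p(\rho)$. Because $p_B=1-p_A$ we have $M^*_A=M^*_B$, so condition~\eqref{eq:robustness_condition} collapses to $\beta(M^*_A)>1/2$, and it is enough to build the single Helstrom operator with $\alpha(M^*_A)=1-p_A$ for the pair $(\tilde\sigma,\tilde\rho)$ and read off its type-II error. The first observation is that for a single qubit
\begin{equation}
\tilde\rho-t\tilde\sigma=(1-p)(\rho-t\sigma)+\tfrac{p}{2}(1-t)\Id_2,
\end{equation}
so $\tilde\rho-t\tilde\sigma$ has exactly the same eigenvectors $\ket{\eta_0},\ket{\eta_1}$ as $\rho-t\sigma$ (already found in Corollary~\ref{cor:pure_states_robustness} in terms of $\gamma=\braket{\rho}{\sigma}$ and the radical $R$), with eigenvalues shifted to $\tilde\eta_k=(1-p)\eta_k+\tfrac{p}{2}(1-t)$.

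The cleanest way to organise the rest is to note that on any rank-one projection $\Pi$ the depolarisation acts affinely: $\Tr{\tilde\sigma\Pi}=(1-p)\Tr{\sigma\Pi}+\tfrac{p}{2}$ and $\Tr{\tilde\rho\Pi}=(1-p)\Tr{\rho\Pi}+\tfrac{p}{2}$, so both $\alpha$ and $1-\beta$ are sent through $x\mapsto(1-p)x+\tfrac{p}{2}$. Consequently, as long as the optimal test is the rank-one projection $M^*_A=\ketbra{\eta_0}{\eta_0}$, solving $\alpha(M^*_A)=1-p_A$ is equivalent to solving the pure-state problem at the rescaled type-I error $\alpha_0':=\frac{1-p_A-p/2}{1-p}$, and the condition $\beta(M^*_A)>1/2$ becomes the pure-state condition $\beta>1/2$ at that $\alpha_0'$. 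Feeding $\alpha_0'$ into the fidelity bound derived in Corollary~\ref{cor:pure_states_robustness} and using the pure-state Fuchs-van de Graaf equality $T(\rho,\sigma)=\sqrt{1-F(\rho,\sigma)}$ produces the first branch of~\eqref{eq:depolarized_single_qubit_trace_condition}.

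The two remaining branches come from the fact that the affine map compresses the achievable error pairs away from the corners of the unit square. Concretely, $\tilde\eta_1$ and $\tilde\eta_0$ now vanish at two finite thresholds $0<t_1<t_2$ (the roots of $\tfrac14(1-t)^2p(2-p)=(1-p)^2t(1-|\gamma|^2)$, which satisfy $t_1t_2=1$): on $[0,t_1)$ both shifted eigenvalues are positive, so $P_{t,+}=\Id$ and $\alpha(P_{t,+})=1$; on $(t_1,t_2)$ only $\tilde\eta_0>0$, giving the rank-one regime above; on $[t_2,\infty)$ both are negative and $\alpha(P_{t,+})=0$. Hence $t\mapsto\alpha(P_{t,+})$ has downward jumps at $t_1$ and $t_2$, and the smallest type-I error attainable in the rank-one regime is strictly positive. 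When $p_A$ is large enough that $1-p_A$ falls below this value, $\tau(1-p_A)=t_2$ and the optimal test is the scaled projection $M^*_A=q_0\ketbra{\eta_0}{\eta_0}$ with $q_0=(1-p_A)/\alpha(P_{t_2,0})$; computing $\beta(M^*_A)=1-q_0\big[(1-p)|\braket{\eta_0}{\rho}|^2+\tfrac{p}{2}\big]$, setting it to $1/2$ and solving for $|\gamma|^2$ yields the second branch. The third branch, $T<1$, is the saturation of the second: once $p_A$ is so large that the solved fidelity threshold drops to $|\gamma|^2=0$, every adversarial $\rho$ is certified.

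The main obstacle I expect is the boundary ($t_2$) regime, where, unlike in the pure-state corollary, the optimal test is a scaled projection onto the zero-eigenspace rather than a genuine projection, so $q_0$, $\alpha(P_{t_2,0})$ and $|\braket{\eta_0}{\rho}|^2$ must all be tracked as explicit functions of $\gamma$ and $p$. Identifying the two transition confidences is then the crux: $p_A=\frac{1+3(1-p)^2}{2+2(1-p)^2}$ should emerge as the value at which $1-p_A$ meets the smallest rank-one-attainable type-I error (the interior-to-$t_2$ transition), and $p_A=\frac{4-3p}{4-2p}$ as the value at which the second-branch radius reaches $1$; continuity of the certified radius across the first threshold and the identity $T=1$ at the second furnish the consistency checks. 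Finally, one must verify that the complementary regime $\tau(1-p_A)=t_1$ never hosts the robustness boundary when $p_A>1/2$, i.e.\ that the critical overlap always lands in the interior or the $t_2$ regime.
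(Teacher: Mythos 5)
Your proposal follows essentially the same route as the paper's proof: the shared eigenvectors of $\cE^{\mathrm{dep}}_p(\rho)-t\,\cE^{\mathrm{dep}}_p(\sigma)$ and $\rho-t\sigma$ with affinely shifted eigenvalues, the two vanishing thresholds $\xi_1<1<\xi_0$ producing a three-regime case analysis for $\tau(1-p_A)$, and the scaled zero-eigenspace operator $\frac{1-p_A}{\alpha(P_{\xi_0,0})}\ketbra{\eta_0}{\eta_0}$ in the large-$p_A$ regime, exactly as in Corollary~\ref{cor:single_qubit_depolarized}'s proof. Your one genuine shortcut --- noting that $\alpha$ and $1-\beta$ of any rank-one test transform as $x\mapsto(1-p)x+\tfrac{p}{2}$, so the interior regime reduces to Corollary~\ref{cor:pure_states_robustness} at the rescaled level $\alpha_0'=\frac{1-p_A-p/2}{1-p}$ --- correctly reproduces the first branch and is cleaner than the paper's explicit inversion of $g_0$, but it is a simplification within the same architecture rather than a different approach.
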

\begin{proof}
    In order to prove the corollary, we proceed in a manner analogous to the proof of Lemma~\ref{lem:pure_state_bound}. 
    Specifically, we show that the condition on the trace distance in eq.~\eqref{eq:depolarized_single_qubit_trace_condition} is equivalent to the SDP robustness condition~\eqref{eq:robustness_condition} from Theorem~\ref{thm:main} expressed in terms of type-II error probabilities.
    Let $\sigma=\ketbra{\psi_\sigma}{\psi_\sigma},\,\rho=\ketbra{\psi_\rho}{\psi_\rho}$ and
    recall that the type-I and type-II error probabilities are given by
    \begin{equation}
        \alpha(M;\,\cE^{\mathrm{dep}}_p(\sigma)) = \Tr{M\cE^{\mathrm{dep}}_p(\sigma)},
        \ws
        \beta(M;\cE^{\mathrm{dep}}_p(\rho)) = \Tr{(\Id - M)\cE^{\mathrm{dep}}_p(\rho)}
    \end{equation}
    with $0 \leq M \leq \Id_d$.
    Let $\sigma' := \cE^{\mathrm{dep}}_p(\sigma)$ and $\rho':=\cE^{\mathrm{dep}}_p(\rho)$ and
    recall that a Helstrom operator for testing the null $\sigma'$ against the alternative $\rho'$ with type-I error probability $\alpha_0$ takes the form~\eqref{eq:helstrom_measurement}
    \begin{equation}
        M_{\tau(\alpha_0)}:=P_{\tau(\alpha_0),+} + q_0P_{\tau(\alpha_0),0},
        \ws 
        q_0 := 
        \begin{cases}
            \frac{\alpha_0 - \alpha(P_{\tau(\alpha_0),+})}{\alpha(P_{\tau(\alpha_0),0})}\,&\mathrm{if}\,\alpha\left(P_{\tau(\alpha_0),0}\right) \neq 0,\\
            0 &\mathrm{otherwise.}
        \end{cases}
    \end{equation}
    where $\tau(\alpha_0):= \inf\{t\geq 0\colon\,\alpha(P_{t,+}) \leq \alpha_0\}$.
    Let $M_{\mathrm{A}}^\star:=M_{\tau(1-p_{\mathrm{A}})}$ and $M_{\mathrm{B}}^\star:=M_{\tau(p_{\mathrm{B}})}$ and note that by assumption $p_{\mathrm{B}} = 1-p_{\mathrm{A}}$ and hence $M_{\mathrm{A}}^\star=M_{\mathrm{B}}^\star$. The SDP robustness condition then simplifies to $\beta^*_{1-p_{\mathrm{A}}}(\sigma',\,\rho') > 1/2$.
    We now proceed as follows. We first compute the spectral decomposition of the operator $\rho'-t\sigma'$ as a function of $t$ and relate it to the fidelity between $\sigma$ and $\rho$. 
    With this, we derive an expression for $\alpha(P_{t,+})$ and subsequently compute $\tau(\alpha_0)$. 
    This yields an expression for the Helstrom operator with type-I error probability $1-p_{\mathrm{A}}$ which can then be used to solve inequality~(\ref{eq:robustness_condition}) for the fidelity.
    We thus start by solving the eigenvalue problem
    \begin{equation}
        (\rho' - t \sigma')\ket{\mu} = \mu \ket{\mu}
    \end{equation}
    which can be rewritten as
    \begin{equation}
        \left((1-p)\cdot(\rho - t\sigma) + \frac{p(1-t)}{2}\Id_2\right)\ket{\eta}.
    \end{equation}
    We notice that the operators $\rho' - t\sigma'$ and $\rho - t\sigma$ share the same set of eigenvectors. Furthermore, if $\eta$ is an eigenvalue of $\rho-t\sigma$ with eigenvector $\ket{\eta}$, then the corresponding eigenvalue $\mu$ of $\rho' - t\sigma'$ is given by
    \begin{equation}
        \mu = (1-p)\eta + \frac{p\cdot(1-t)}{2}.
    \end{equation}
    From the proof of {Lemma~\ref{lem:pure_state_bound}}, we know that the eigenvalues of $\rho-t\cdot\sigma$ are given by
    \begin{equation}
        \begin{gathered}
            \eta_0 = \frac{1}{2}(1-t)+R > 0,\ws \eta_1=\frac{1}{2}(1-t)-R \leq 0\\
            R=\sqrt{\frac{1}{4}(1-t)^2 + t(1-\abs{\gamma}^2)},\ws \gamma=\braket{{\psi_\rho}}{{\psi_\sigma}}
        \end{gathered}
    \end{equation}
    with eigenvectors
    \begin{equation}
        \begin{gathered}
            \ket{\eta_0} = -\gamma A_0 \ket{{\psi_\rho}} + (1 - \eta_0) A_0 \ket{{\psi_\sigma}},\ws \ket{\eta_2} = -\gamma A_2 \ket{{\psi_\rho}} + (1 - \eta_2) A_2 \ket{{\psi_\sigma}}\\
            \abs{A_k}^{-2} = 2R\abs{\eta_k - 1 + \abs{\gamma}^2}.
        \end{gathered}
    \end{equation}
    With this, we can compute the eigenvalues $\mu_k$ and eigenprojections $P_k$ of $\rho' - t\sigma'$ as
    \begin{equation}
        \begin{gathered}
            \mu_0 = (1 - p)\eta_0 + p\cdot \frac{1-t}{2},
            \ws 
            \mu_1 = (1 - p)\eta_1 + p\cdot \frac{1-t}{2},\\
            P_0 = \ketbra{\eta_0}{\eta_0}, 
            \ws 
            P_1 = \ketbra{\eta_1}{\eta_1}.
        \end{gathered}
    \end{equation}
    Since $\eta_0 > 0 \geq \eta_1$ for any $t\geq 0$, we have $\mu_0 \geq \mu_1$ and furthermore, the eigenvalues are monotonically decreasing functions of $t$ for $\abs{\gamma}^2 < 1$. To see this, consider
    \begin{align}
        \frac{dR}{dt} = \frac{1+t-2\abs{\gamma}^2}{2\sqrt{(1+t)^2-4t\abs{\gamma}^2}}
    \end{align}
    and thus for $\forall\,t\geq0$ and $\abs{\gamma}^2 < 1$
    \begin{align}
        \frac{d\mu_0}{dt} = \frac{dR}{dt} - \frac{1}{2} < 0
        \ws
        \text{and}
        \ws
        \frac{d\mu_1}{dt} = -\frac{dR}{dt} - \frac{1}{2} < 0.
    \end{align}
    Hence, since both eigenvalues are strictly positive at $t=0$, there exists exactly one $\xi_k$ such that $\mu_k$ vanishes at $\xi_k$, $k=0,1$. 
    Algebra shows that these zeroes are given by
    \begin{equation}
        \begin{gathered}
            \xi_0 = 1 + \frac{2(1-\abs{\gamma}^2)(1-p)^2}{p(2-p)}\left(1 + \sqrt{1+\frac{p(2-p)}{(1-\abs{\gamma}^2)(1-p)^2}}\right) > 1,\\
            \xi_1 = 1 + \frac{2(1-\abs{\gamma}^2)(1-p)^2}{p(2-p)}\left(1 - \sqrt{1+\frac{p(2-p)}{(1-\abs{\gamma}^2)(1-p)^2}}\right) < 1.
        \end{gathered}
    \end{equation}
    We define the functions
    \begin{align}
        g_0(t) &:= \bra{\eta_0}\sigma'\ket{\eta_0} = \frac{1}{2}\left(1 + \frac{(1-p)(2\abs{\gamma}^2-1-t)}{\sqrt{(1+t)^2 - 4t\abs{\gamma}^2}}\right),\\
        g_1(t)&:= \bra{\eta_1}\sigma'\ket{\eta_1} = \frac{1}{2}\left(1 - \frac{(1-p)(2\abs{\gamma}^2-1-t)}{\sqrt{(1+t)^2 - 4t\abs{\gamma}^2}}\right),\\
        f_0(t)&:= \bra{\eta_0}\rho'\ket{\eta_0} = \frac{1}{2}\left(1 + \frac{(1-p)(1+t\cdot(1-2\abs{\gamma}^2)}{\sqrt{(1+t)^2 - 4t\abs{\gamma}^2}}\right),\\
        f_1(t)&:= \bra{\eta_0}\rho'\ket{\eta_0} = \frac{1}{2}\left(1 - \frac{(1-p)(1+t\cdot(1-2\abs{\gamma}^2)}{\sqrt{(1+t)^2 - 4t\abs{\gamma}^2}}\right).
    \end{align}
    With this, we now compute $t\mapsto\alpha(P_{t,+})$ as 
    \begin{equation}
        \begin{gathered}
        \alpha(P_{t,+}) = \Tr{\sigma' P_{t,+}} = 
            \begin{cases}
                1 &0 \leq t < \xi_1\\
                g_0(t) &\xi_1 \leq t < \xi_0\\
                0 & \xi_0 \leq t
            \end{cases}
        \end{gathered}
    \end{equation}
    For $\alpha_0\in[0,\,1]$, we compute $\tau(\alpha_0) := \inf\{t\geq 0\colon\,\alpha(P_{t,+})\leq \alpha_0\}$ as
    \begin{equation}
        \tau(\alpha_0) =
            \begin{cases}
                \xi_0 & 0 \leq \alpha_0 \leq g_0(\xi_0)\\
                g_0^{-1}(\alpha_0)& g_0(\xi_0) < \alpha_0 < g_0(\xi_1)\\
                \xi_1& g_0(\xi_1) \leq \alpha_0 < 1\\
                0 &\alpha_0=1
            \end{cases}
    \end{equation}
    where
    \begin{equation}
        g_0^{-1}(\alpha_0) = 2\abs{\gamma}^2 -1 + 2(1-2\alpha_0)\sqrt{\frac{\abs{\gamma}^2(1-\abs{\gamma}^2)}{p(2-p) - 4\alpha_0(1-\alpha_0)}}.
    \end{equation}
    To solve condition~(\ref{eq:robustness_condition}) we now have to distinguish different cases, depending on which interval $1-p_{\mathrm{A}}$ falls into. Firstly, if $1-p_{\mathrm{A}} = 1$, then $\tau(1-p_{\mathrm{A}}) = 0$ and thus $\beta(M_{\mathrm{A}}^\star)=0$ in which case the condition can not be satisfied. If $1-p_{\mathrm{A}} \in [g_0(\xi_1),\,1)$, then we have $\tau(1-p_{\mathrm{A}}) = \xi_1$. In this case, it holds that $\mu_0 > 0$ and $\mu_1=0$ and the Helstrom operator is given by
    \begin{equation}
        M_{\mathrm{A}}^\star = \ketbra{\eta_0}{\eta_0} + \frac{1-p_{\mathrm{A}} - g_0(\xi_1)}{g_1(\xi_1)}\ketbra{\eta_1}{\eta_1}
    \end{equation}
    and the robustness condition reads
    \begin{equation}
        \beta(M_{\mathrm{A}}^\star) = 1 - f_0(\xi_1) - \frac{1-p_{\mathrm{A}} - g_0(\xi_1)}{g_1(\xi_1)}f_1(\xi_1) > \frac{1}{2}
    \end{equation}
    which cannot to be satisfied simultaneously with $1-p_{\mathrm{A}} \in [g_0(\xi_1),\,1)$. If, on the other hand $1-p_{\mathrm{A}}\in(g_0(\xi_0),\,g_0(\xi_1))$, then $\tau(1-p_{\mathrm{A}}) = g_0^{-1}(1-p_{\mathrm{A}})$ and $\mu_0 > 0 > \mu_1$. The Helstrom operator then is given by $M_{\mathrm{A}}^\star = \ketbra{\eta_0}{\eta_0}$ which leads to the robustness condition
    \begin{equation}
        1 - f_0(g_0^{-1}(1-p_{\mathrm{A}})) > \frac{1}{2}
    \end{equation}
    which, together with $1-p_{\mathrm{A}}\in(g_0(\xi_0),\,g_0(\xi_1))$, is equivalent to
    \begin{equation}
        \label{eq:depolarising_condition1_proof}
        \abs{\gamma}^2 > \frac{1}{2}\left(1 + \sqrt{\frac{4p_{\mathrm{A}}(1-p_{\mathrm{A}})-p(2-p)}{(1-p)^2}}\right),\ws \frac{1}{2} \leq p_{\mathrm{A}} \leq \frac{4-6p+3p^2}{4-4p+2p^2}.
    \end{equation}
    In the last case where $1-p_{\mathrm{A}}\leq g_0(\xi_0)$, we have $\tau(1-p_{\mathrm{A}}) = \xi_0$ and thus $\mu_0 = 0 > \mu_1$. The Helstrom operator is then given by $\frac{1-p_{\mathrm{A}}}{g_0(\xi_0)}\ketbra{\eta_0}{\eta_0}$, leading to the robustness condition
    \begin{equation}
        1 - \frac{1-p_{\mathrm{A}}}{g_0(\xi_0)}f_0(\xi_0) > \frac{1}{2}.
    \end{equation}
    Together with $1-p_{\mathrm{A}}\leq g_0(\xi_0)$ this is equivalent to
    \begin{equation}
        \label{eq:depolarising_condition2_proof}
        \abs{\gamma}^2 > 
            \begin{cases}
                \frac{4p_{\mathrm{A}}(1-p_{\mathrm{A}}) - p(2-p)}{(1-p)^2(4(1-p_{\mathrm{A}}) - p(2-p))}, \ &\mathrm{if}\ \frac{1 + (1-p)^2}{2} < p_{\mathrm{A}} \leq \frac{4-6p+3p^2}{4-4p+2p^2}\\
                \frac{(4 - 3p - 2p_{\mathrm{A}}(2-p))(2-p(3-2p_{\mathrm{A}}))}{8(1-p)^2(1-p_{\mathrm{A}})}, \ &\mathrm{if}\ \frac{4-6p+3p^2}{4-4p+2p^2} < p_{\mathrm{A}} \leq \frac{4-3p}{4-2p},\\
                0,\,&\mathrm{if}\, p_{\mathrm{A}} > \frac{4-3p}{4-2p}.
            \end{cases}
    \end{equation}
    Finally, combining together conditions~(\ref{eq:depolarising_condition1_proof}) and~(\ref{eq:depolarising_condition2_proof}) leads to
    \begin{equation}
        \abs{\gamma}^2 > 
            \begin{cases}
                \frac{1}{2}\left(1 + \sqrt{\frac{4p_{\mathrm{A}}(1-p_{\mathrm{A}})-p(2-p)}{(1-p)^2}}\right), \ &\mathrm{if}\ \frac{1}{2} < p_{\mathrm{A}} \leq \frac{4-6p+3p^2}{4-4p+2p^2}\\
                \frac{(4 - 3p - 2p_{\mathrm{A}}(2-p))(2-p(3-2p_{\mathrm{A}}))}{8(1-p)^2(1-p_{\mathrm{A}})}, \ &\mathrm{if}\ \frac{4-6p+3p^2}{4-4p+2p^2} < p_{\mathrm{A}} \leq \frac{4-3p}{4-2p},\\
                0,\,&\mathrm{if}\, p_{\mathrm{A}} > \frac{4-3p}{4-2p}.
            \end{cases}
    \end{equation}
    Since by assumption $\rho$ and $\sigma$ are pure states the proof is completed by noting that we have
    \begin{equation}
        T(\rho,\,\sigma) = \sqrt{1-F(\rho,\,\sigma)}
    \end{equation}
    by the Fuchs-van de Graaf inequality.
\end{proof}

\section{Algorithms}
Here, we provide pseudocode for the algorithm presented in Section~\ref{subsec:robustness_certification}.
\begin{algorithm}[H]
    \caption{Robustness Certification$(\sigma,\,N,\alpha,\,\cA)$}
    \label{alg:robustness_certification}
    \begin{algorithmic}[1]
        \REQUIRE Quantum state $\sigma\in\cS(\cH)$, number of measurement shots $N$, error tolerance $\alpha$, a quantum classifier $\cA=(\cE,\,\{\Pi_k\}_{k\in\cC})$.
        \ENSURE Predicted class $k_A$, prediction score $p_{\mathrm{A}}$ and robust radius $r_F$ according to Eq.~\eqref{eq:robustness_bound_fidelity} in terms of fidelity.
        \STATE Set counter $\mathbf{n}_k \leftarrow 0$ for every $k\in\cC$.
        \FOR{$k=1,\,\ldots\,N$}
            \STATE Apply quantum circuit $\cE$ to initial state $\sigma$.
            \STATE Perform $\abs{\cC}$-outcome measurement $\{\Pi_k\}_{k\in\cC}$ on the evolved state $\cE(\sigma)$.
            \STATE Record measurement outcome $k$ by setting $\mathbf{n}_k \leftarrow \mathbf{n}_k + 1$.
        \ENDFOR
        \STATE Calculate empirical probability distribution $\hat{\y}_k^{(N)}  \leftarrow \mathbf{n}_kN^{-1}$.
        \STATE Extract the most likely class $k_A  \leftarrow \arg\max_k \hat{\y}_k^{(N)}$.
        \STATE Set $p_{\mathrm{A}}  \leftarrow \hat{\y}_{k_A}^{(N)}(\sigma) - \sqrt{\frac{-\log(\alpha)}{2N}}$.
        \IF{$p_{\mathrm{A}} > 1/2$}
        \STATE Calculate robust radius $r_F  \leftarrow \frac{1}{2} + \sqrt{p_{\mathrm{A}}(1-p_{\mathrm{A}})}$.
        \RETURN prediction $k_A$, class score $p_{\mathrm{A}}$, robust radius $r_F$.
        \ELSE
        \RETURN ABSTAIN
        \ENDIF
    \end{algorithmic}
\end{algorithm}

\end{document}